\documentclass[12pt]{amsart}
\usepackage{graphicx,amscd, color,amsmath,amsfonts,amssymb,geometry,amssymb}
\usepackage[initials]{amsrefs}

\newtheorem{theorem}{Theorem}[section]
\newtheorem{proposition}[theorem]{Proposition}
\newtheorem{corollary}[theorem]{Corollary}
\newtheorem{example}[theorem]{Example}
\newtheorem{examples}[theorem]{Examples}
\newtheorem{remark}[theorem]{Remark}

\newtheorem{lemma}[theorem]{Lemma}

\newtheorem{definition}[theorem]{Definition}
\numberwithin{equation}{section}

\geometry{left=2cm,right=2cm,top=2cm,bottom=2cm,headheight=2.5mm}

\begin{document}
\title{ Completely reducible maps in Quantum Information Theory}

\hyphenation{pro-per-ties}

\author[Cariello ]{D. Cariello}
\thanks{Key words and Phrases: Completely Reducible Maps, Invariant under Realignment, Mutually Unbiased Bases, Perron-Frobenius, Realignment}

\address{Faculdade de Matem\'atica, \newline\indent Universidade Federal de Uberl\^{a}ndia, \newline\indent 38.400-902 Ð Uberl\^{a}ndia, Brazil.}\email{dcariello@famat.ufu.br}

\address{Departamento de An\'{a}lisis Matem\'{a}tico,\newline\indent Facultad de Ciencias Matem\'{a}ticas, \newline\indent Plaza de Ciencias 3, \newline\indent Universidad Complutense de Madrid,\newline\indent Madrid, 28040, Spain.}
\email{dcariell@ucm.es}

\keywords{}

\subjclass[2010]{15B48, 15B57}

\maketitle

\begin{abstract} In order to compute the Schmidt decomposition of  $A\in M_k\otimes M_m$, we must consider an associated self-adjoint map.
Here, we show that if  $A$ is positive under partial transposition (PPT) or symmetric with positive coefficients (SPC) or invariant under realignment then its associated self-adjoint map is completely reducible. We give applications of this fact in Quantum Information Theory. We recover some theorems recently proved for PPT and SPC matrices and we prove these theorems for matrices invariant under realignment using theorems of Perron-Frobenius theory. We also provide a new proof of the fact that if $\mathbb{C}^{k}$ contains $k$ mutually unbiased bases then $\mathbb{C}^{k}$ contains $k+1$. We search  for other types of matrices  that could have the same property.

We consider a group of linear transformations acting on $M_k\otimes M_k$, which contains the partial transpositions and the realignment map. For each element of this group, we consider the set of matrices in $M_k\otimes M_k\simeq M_{k^2}$ that are positive and remain positive, or invariant, under the action of this element. Within this family of sets, we have the set of PPT matrices, the set of  SPC matrices and the set of  matrices invariant under realignment. We show that these three sets are the only sets of this family such that the associated self-adjoint map of each matrix is completely reducible. We also show that every matrix invariant under realignment is PPT in $M_2\otimes M_2$ and we present a counterexample in $M_k\otimes M_k$, $k\geq 3$.

\end{abstract}

\section*{Introduction}
Let $M_k$ denote the set of complex matrices of order $k$. Let us denote by $VM_kW$ the set $\{VXW,\ X\in M_k\}$, where $V,W\in M_k$ are orthogonal projections. If $V=W$ then $VM_kV$ is a hereditary finite dimensional $C^*$-algebra (see \cite{evans}). Let $P_k$ denote the set of positive semidefinite Hermitian matrices in $M_k$. A linear transformation $L:VM_kV\rightarrow WM_mW$ is said to be a positive map if $L(P_k\cap VM_kV)\subset P_m\cap WM_mW$.  A non-null positive map $L:VM_kV\rightarrow VM_kV$ is called irreducible, if for every $V'M_kV'\subset VM_kV$ such that $L(V'M_kV')\subset V'M_kV'$, we have $V'=V$ or $V'=0$.

Two well known theorems of Perron-Frobenius Theory are theorems 2.3 and 2.5 in \cite{evans}: If $L:VM_kV\rightarrow VM_kV$ is a positive map then exists $\gamma\in P_k\cap VM_kV$ such that $L(\gamma)=\lambda\gamma$, where $\lambda$ is the spectral radius of $L$. Moreover, if $L$ is irreducible then this eigenvalue has multiplicity $1$.
    
Let us say that $L:VM_kV\rightarrow VM_kV$  is completely reducible, if there are orthogonal hereditary subalgebras of $VM_kV$, invariant under $L$, such that $L$ restricted to each of these subalgebras is irreducible and $L$ restricted to the orthogonal complement of their direct sum is null (see definition \ref{definitioncompletelyreducible}).
The main theorems of this paper are related to the concept of completely reducible map and shall be obtained using the aforementioned  theorems of Perron-Frobenius theory.  This concept is related to the concept of completely reducible matrix (see \cite{schaefer}).  In order to describe these main theorems, let us identify  the tensor product space $M_{k}\otimes M_{m}$ with $M_{km}$, via Kronecker product.


Suppose $A=\sum_{i=1}^nA_i\otimes B_i\in M_k\otimes M_m\simeq M_{km}$ is a positive semidefinite Hermitian matrix. Consider the maps $F_A: M_m\rightarrow M_k$, $F_A(X)=\sum_{i=1}^n tr(B_iX)A_i$ and $G_A: M_k\rightarrow M_m$, $G_A(X)=\sum_{i=1}^n tr(A_iX)B_i$. These are positive maps and adjoints with respect to the trace inner product. Thus, $F_A\circ G_A: M_k\rightarrow M_k$ is a self adjoint positive map.
Here, in section 4, we show that if $A$ is positive under partial transposition (PPT) or symmetric with positive coefficients (SPC) or invariant under realignment then $F_A\circ G_A: M_k\rightarrow M_k$ is completely reducible.
 There are applications of this fact in Quantum Information theory, as we describe bellow. 

First, if $F_A\circ G_A: M_k\rightarrow M_k$ is completely reducible then $A$ is a sum of weakly irreducible matrices (see definition \ref{definitionweaklyirreducible}) with support on orthogonal local Hilbert spaces, therefore $A$ is separable (see definition \ref{definitionseparability}) if and only if each weakly irreducible summand is separable (see corollary \ref{corollarysplitdecomposition}). This theorem was proved in \cite{cariello} for PPT and SPC matrices, thus we extend this theorem for matrices invariant under realignment. Notice that a necessary condition for a matrix to be separable is to be PPT, thus the author of \cite{cariello} reduced the separability problem to the weakly irreducible case. In \cite{cariello}, we can also find a description of weakly irreducible PPT or SPC matrix. Here, we obtained the same description for matrices invariant under realignment (proposition \ref{propositionweaklyirred}).


Second, if $F_A\circ G_A: M_k\rightarrow M_k$ is completely reducible with eigenvalues $1$ or $0$ then $A$ is separable in a very strong sense (proposition \ref{propositioneigenvalues1}). 
Using this theorem for a matrix $A$ invariant under realignment (see proposition \ref{mainapplication}), we obtain a new proof of the following theorem proved in \cite{weiner}:
If $\mathbb{C}^k$ contains $k$ mutually unbiased bases then exists another orthonormal basis which is mutually unbiased with these $k$ bases $($theorem \ref{thelastbasis}$)$.

In Quantum Information Theory, the concept of mutually unbiased bases (definition \ref{defMUBS}) has been shown to be useful. It has applications in state determination, quantum state tomography, cryptography $($See \cite{ivan},\cite{wootters}, \cite{wootters2}, \cite{cerf}$)$.
It is known that $k+1$ is an upper bound for the number of mutually unbiased bases in $\mathbb{C}^k$ and the existence of this number of bases is an open problem, when $k$ is not a power of prime. When $k$ is a power of prime, some constructive methods were used to obtain these $k+1$ bases $($See \cite{ivan},\cite{wootters}, \cite{band}$)$.

The realigment map $($definition \ref{defS}$)$ is important in Quantum Information Theory by its use in the realigment criterion $($\cite{chen}, \cite{rudolph}$)$. 
This new proof of the existence of the last mutually unbiased basis  shows a connection between two current topics in Quantum Information theory, the realigment map and mutually unbiased bases.

We shall search for other types of $A$ such that $F_A\circ G_A: M_k\rightarrow M_k$ could be completely reducible using the following idea.
Let $S_4$ be the group of permutations of $\{1,2,3,4\}$ and let us use the cycle notation.
Let $\sigma\in S_4$ and let us define the linear map $L_{\sigma}: M_k\otimes M_k\rightarrow M_k\otimes M_k$, $L_{\sigma}(a_1a_2^t\otimes a_3a_4^t)=a_{\sigma(1)}a_{\sigma(2)}^t\otimes a_{\sigma(3)}a_{\sigma(4)}^t,$ for every $a_1,a_2,a_3,a_4\in\mathbb{C}^k$.  Notice that $\{L_{\sigma},\ \sigma\in S_4\}$ is a group acting on $M_k\otimes M_k$. 

Many well known maps in quantum information theory have the type $L_{\sigma}$. For example:
\begin{itemize}
\item[a)] $L_{(34)}$ is the partial transposition, $L_{(34)}(C\otimes D)=C\otimes D^t$,
\item[b)] $L_{(12)(34)}(A)=A^t$ is the transposition,
\item[c)]  $L_{(24)}(A)=AT$, where $T$ is the flip operator (see definition \ref{definition1}),
\item[d)]  $L_{(13)(24)}(C\otimes D)=D\otimes C$, 
\item[e)] $L_{(23)}$ is the realignment map,
\item[f)] $L_{(243)}$ is the partial transposition composed with the realignment map.
\end{itemize}

This group is also known in quantum information theory for providing criterions to detect entanglement. For each element of this group exists a corresponding criterion analogous to the realignment criterion $($see \cite{horodecki1},\cite{horodecki2}$)$.

Notice that, by definition \ref{defPPT}, PPT matrices are positive semidefinite Hermitian matrices that remain positive under partial transposition $($the map $L_{(34)})$. Here, we show that SPC matrices  are positive semidefinite Hermitian matrices that remain positive under partial transposition composed with the realigment map $($the map $L_{(243)})$ $($lemma \ref{SPCequiv}$)$.
Let us define the  sets $$P_{\sigma}=\{A\in M_k\otimes M_k, A\ \text{and}\ L_{\sigma}(A)\ \text{are positive semidefinite Hermitian matrices}\}$$
and we shall consider the following problem: 
\begin{flushleft}
 \textbf{Problem 1:} Which of the sets $P_{\sigma}$ contains only matrices $A$ such that $F_A\circ G_A:M_k\rightarrow M_k$ is completely reducible?
\end{flushleft}

The results obtained in \cite{cariello} are related to our first problem. Indeed, they play a very special role in the solution. In section 6, we show that if all matrices $A$ of $P_{\sigma}$ are such that $F_A\circ G_A:M_k\rightarrow M_k$ is completely reducible then $P_{\sigma}=\{$PPT matrices$\}$ or $P_{\sigma}=\{$SPC matrices$\}$.

In our opinion, there is a lack of symmetry in this solution, because PPT matrices are matrices that remain positive under partial transposition and SPC matrices are matrices that remain positive under  partial transposition composed with the realigment map, but matrices that remain positive under the action of the realigment map may not have this property. For example: $uu^t$ is positive semidefinite $($definition \ref{definition1}$)$, $S(uu^t)=Id\otimes Id$ $($definition \ref{defS}$)$ and $F_{uu^t}\circ G_{uu^t}: M_k\rightarrow M_k$ is not completely reducible $($lemma \ref{notdecomposable}$)$. 
This lack of symmetry leads to our second problem. Define 
 $$I_{\sigma}=\{A\in M_k\otimes M_k, A\ \text{is positive semidefinite and}\ A=L_{\sigma}(A) \}$$
and let us consider the following problem:

\begin{flushleft}
 \textbf{Problem 2:} Which of the sets $I_{\sigma}$ contains only matrices $A$ such that $F_A\circ G_A:M_k\rightarrow M_k$ is completely reducible?
\end{flushleft}

In section 6, we provide the answer to the second problem. It is interesting to notice that the answer depends on the dimension $k$. 
For $k\geq3$, the only sets $I_{\sigma}$ in the solution of problem $2$ satisfy $I_{\sigma}\subset\{\text{PPT matrices}\}$ or $I_{\sigma}\subset\{\text{SPC matrices}\}$ or $I_{\sigma}\subset\{\text{Matrices Invariant under Realignment}\}$ $($theorem \ref{solutionproblem_2_k>2}$)$.
For $k=2$, the only sets $I_{\sigma}$ in the solution of problem $2$ satisfy $I_{\sigma}\subset\{\text{PPT matrices}\}$ $($theorem \ref{solutionproblem_2_k=2}$)$. In order to prove this last result, we show that every matrix invariant under realignment is PPT in $M_2\otimes M_2$ $($lemma \ref{invariantPPT}$)$ and we present a counterexample in $M_k\otimes M_k$, for $k\geq 3$ $($example \ref{counterexample}$)$. 

Some of these $I_{\sigma}$ were considered previously in papers related to the separability problem in Quantum Information Theory. For example,  the authors of  \cite{guhne} and  \cite{kraus} considered matrices invariant under multiplication by the flip operator and matrices invariant under partial transposition, respectively. 
 
This paper is arranged as follows. In section 1, we describe the definitions and the preliminary results that shall be used. In section 2, we describe some results concerning  completely reducible maps. These results are  consequences of  theorems of Perron-Frobenius Theory. In section 3, we assume $F_A\circ G_A:M_k\rightarrow M_k$ is completely reducible and we give two applications in Quantum Information Theory. We also provide an equivalent way to prove that $F_A\circ G_A:M_k\rightarrow M_k$ is completely reducible (lemma \ref{lemmaequivalence}). This equivalent way shall be used in section 4, in order to prove that  $F_A\circ G_A:M_k\rightarrow M_k$ is  completely reducible, if $A\in M_k\otimes M_m$ is PPT or SPC or invariant under realignment. We also provide examples of positive semidefinite Hermitian matrices $A\in M_k\otimes M_m$ such that  $F_A\circ G_A:M_k\rightarrow M_k$ is not completely reducible. In section 5, we show that if $\mathbb{C}^k$ contains $k$ mutually unbiased bases then $\mathbb{C}^k$ contains $k+1$ and we show that this last basis is unique up to multiplication by complex vectors of norm 1.  In section 6, we search for other types of  $A\in M_k\otimes M_k$ such that $F_A\circ G_A:M_k\rightarrow M_k$ could be completely reducible.

\section{Preliminary Results and Definitions}

Let $M_k$ denote the set of complex matrices of order $k$ and $\mathbb{C}^k$ be the set of column vectors with $k$ complex entries. We shall identify the tensor product space $\mathbb{C}^k\otimes\mathbb{C}^m$ with $\mathbb{C}^{km}$ and the tensor product space $M_{k}\otimes M_{m}$ with $M_{km}$, via Kronecker product (i.e., if $A=(a_{ij})\in M_k$ and $B\in M_m$ then $A\otimes B=(a_{ij}B)\in M_{km}$ and if $v=(v_i)\in \mathbb{C}^k, w\in \mathbb{C}^m$ then $v\otimes w=(v_iw)\in\mathbb{C}^{km}$).

The identification of the tensor product space $\mathbb{C}^k\otimes\mathbb{C}^m$ with $\mathbb{C}^{km}$ and the tensor product space $M_{k}\otimes M_{m}$ with $M_{km}$, via Kronecker product, allow us to write $(v\otimes w)(r\otimes s)^t= vr^t\otimes ws^t$, where $v\otimes w\in \mathbb{C}^k\otimes\mathbb{C}^m$ is a column, $(v\otimes w)^t$ its transpose and $v,r\in\mathbb{C}^k$ and $w,s\in\mathbb{C}^m$.  Therefore if $x,y\in\mathbb{C}^k\otimes\mathbb{C}^m\simeq\mathbb{C}^{km}$ we have $xy^t\in M_k\otimes M_m\simeq M_{km}$.
Here, $tr(A)$ denotes the trace of a matrix $A$, $\overline{A}$ stands for the matrix whose entries are $\overline{a_{ij}}$, where $\overline{a_{ij}}$ is the complex conjugate of the entry $a_{ij}$ of $A$ and $A^t$ stands for the transpose of $A$. We shall consider the usual inner product in $M_{k}$, $\langle A,B\rangle=tr(AB^*)$, and the usual inner product in $\mathbb{C}^k$, $\langle x,y\rangle=x^t\overline{y}$. If $A=\sum_{i=1}^nA_i\otimes B_i$, we shall denote by $A^{t_2}$ the matrix $\sum_{i=1}^nA_i\otimes B_i^t$, which is called the partial transposition of $A$. The image (or the range) of the matrix $A\in M_k\otimes M_m\simeq M_{km}$  in $\mathbb{C}^k\otimes\mathbb{C}^m\simeq \mathbb{C}^{km}$ shall be denoted by $\Im(A)$.

\begin{definition}\label{Hermitiandecomposition} A decomposition of a matrix $A\in M_{k}\otimes M_{m}\simeq M_{km}$, $\sum_{i=1}^n A_i\otimes B_i$, is a Hermitian decomposition if $A_i\in M_k$ and $B_i\in M_m$ are Hermitian matrices for every $i$.
\end{definition}

\begin{definition}\label{HermitianSchmidtdecomposition} A decomposition of a matrix A $\in M_{k}\otimes M_{m}$,
$\sum_{i=1}^n \lambda_i \gamma_i\otimes \delta_i$,
is a Schmidt decomposition if $\{\gamma_i|1\leq i\leq n\}\subset M_k$,\ $\{\delta_i|1\leq i\leq n\}\subset M_m$ are orthonormal sets with respect to the trace inner product,  $\lambda_i\in\mathbb{R}$ and $\lambda_i>0$.
Also, if $\gamma_i$ and $\delta_i$ are Hermitian matrices for every $i$, then $\sum_{i=1}^n \lambda_i \gamma_i\otimes \delta_i$ is a Hermitian Schmidt decomposition of A. 
\end{definition}

\begin{definition}\label{defPPT}$($\textbf{PPT matrices}$)$ Let $A=\sum_{i=1}^nA_i\otimes B_i\in M_k\otimes M_m\simeq M_{km}$ be a positive semidefinite Hermitian matrix. We say that $A$ is positive under partial transposition or simply  PPT, if $A^{t_2}=Id\otimes(\cdot)^t(A)=\sum_{i=1}^nA_i\otimes B_i^t$ is positive semidefinite.
\end{definition}

\begin{definition}\label{defSPC}$($\textbf{SPC matrices}$)$ Let $A\in M_k\otimes M_k\simeq M_{k^2}$ be a positive semidefinite Hermitian matrix. We say that $A$ is symmetric with positive coefficients or simply SPC, if $A$ has the following symmetric Hermitian Schmidt decomposition with positive coefficients: $\sum_{i=1}^n\lambda_i\gamma_i\otimes\gamma_i$, with $\lambda_i>0$, for every $i$.
\end{definition}

\begin{definition}\label{definitionseparability}\textbf{$($Separable Matrices$)$} Let $A\in M_k\otimes M_m$.
We say that $A$ is separable if $A$ can be approximated in norm by matrices of the following type: $\sum_{i=1}^n C_i\otimes D_i$ such that  $C_i\in M_k$ and $D_i\in M_m$ are positive semidefinite Hermitian matrices for every $i$.
\end{definition}

\begin{definition}\label{definition1}  Let  $\{e_1,\ldots,e_k\}$ be the canonical basis of $\mathbb{C}^k$. 
\begin{enumerate}
 
\item 
Let $T=\sum_{i,j=1}^ke_ie_j^t\otimes e_je_i^t\in M_k\otimes M_k\simeq M_{k^2}$. This matrix satisfies $Ta\otimes b=b\otimes a$, $(a\otimes b)^tT=(b\otimes a)^t$,  for every $a,b\in\mathbb{C}^k$, where $a\otimes b$ is a column vector in $\mathbb{C}^{k^2}$ and  $(b\otimes a)^t$ is its transpose. This matrix is usually called the flip operator $($see \cite{guhne}$)$. 
\item  Let $u=\sum_{i=1}^ke_i\otimes e_i\in \mathbb{C}^k\otimes\mathbb{C}^k$. 
\item Let $F:M_k\rightarrow \mathbb{C}^k\otimes\mathbb{C}^k$,  $F(\sum_{i=1}^n a_ib_i^t)=\sum_{i=1}^na_i\otimes b_i$. 
\item We say that $v\in\mathbb{C}^k\otimes\mathbb{C}^k$ is a Hermitian vector if $F^{-1}(v)\in M_k$ is Hermitian.

\end{enumerate}
\end{definition}

\begin{definition}\label{defS} Let $S:M_{k}\otimes M_k\rightarrow M_{k}\otimes M_k$ be defined by 
$$S(\sum_{i=1}^nA_i\otimes B_i)= \sum_{i=1}^nF(A_i)F(B_i)^t,$$  where $F(A_i)\in \mathbb{C}^k\otimes\mathbb{C}^k$ is a column vector and $F(B_i)^t$ is a row vector $($definition \ref{definition1}$)$. 
This map is usually called the ``realignment map" $($see \cite{chen},\cite{rudolph}$)$.
\end{definition}

\begin{remark} \label{spectral} Remind that $F$ is an isometry, i.e., $F(A)^t\overline{F(B)}=tr(AB^*)$, for every $A,B\in M_k$, where $F(A),F(B)\in\mathbb{C}^k\otimes\mathbb{C}^k$ and $\overline{F(B)}$ is the conjugation of the column  vector $F(B)$. We also have $tr(F^{-1}(v)F^{-1}(w)^*)=v^t\overline{w}$, for every $v,w\in\mathbb{C}^{k^2}$. 
  Therefore,  $A=\sum_{i=1}^n\lambda_i\gamma_i\otimes\overline{\gamma_i}$ is such that $\{\gamma_1,\ldots,\gamma_n\}$ is a orthonormal set of matrices of $M_k$  if and only if $S(A)=\sum_{i=1}^n\lambda_iv_i\overline{v}_i^t$ , where $F(\gamma_i)=v_i$ and $\{v_1,\ldots,v_n\}$ is an orthonormal set of eigenvectors of $S(A)$.
 \end{remark}

\begin{definition}\label{defmatricesinvariant} $($\textbf{Matrices Invariant under Realignment}$)$ Let $A\in M_k\otimes M_k$ be a positive semidefinite Hermitian matrix. We say that $A$ is invariant under realignment if  $A=S(A)$.
\end{definition}

\begin{examples}\label{example1} These examples shall be used in sections 5 and 6. 
 \begin{itemize}
 \item[a)] $Id\otimes Id+uu^t$ is invariant under realignment.\\
 By definitions \ref{definition1} and \ref{defS}, notice that $S(Id\otimes Id)=uu^t$. Now using property $(2)$ of \ref{propertiesofS}, notice that $S(uu^t)=Id\otimes Id$. Thus, $Id\otimes Id+uu^t$ is a positive semidefinite Hermitian matrix and $A=S(A)$.
 \item[b)] $Id\otimes Id+uu^t-T$ is invariant under realignment.\\
 Since $T=\sum_{i,j=1}^k e_ie_j^t\otimes e_je_i^t$ then $S(T)=T$. Now, the eigenvalues of $T$ are $1$ and $-1$ then $Id\otimes Id-T$ is positive semidefinite. Therefore $Id\otimes Id+uu^t-T$ is invariant under realignment.
 \end{itemize}
  
\end{examples}

\begin{lemma}\label{propertiesofS} $($Properties of the Realignment map$)$ Let $S:M_{k}\otimes M_k\rightarrow M_{k}\otimes M_k$ be the realignment map defined in \ref{defS}. Let $v_i,w_i\in\mathbb{C}^k\otimes\mathbb{C}^k$, $V,W,M,N\in M_k$.  Then

\begin{enumerate}
\item $S(\sum_{i=1}^nv_iw_i^t)=\sum_{i=1}^n F^{-1}(v_i)\otimes F^{-1}(w_i)$
\item $S^2=Id:M_{k}\otimes M_k\rightarrow M_{k}\otimes M_k$
\item $S((V\otimes W)A(M\otimes N))=(V\otimes M^t)S(A)(W^t\otimes N)$
\item $S(AT)T=A^{t_2}$ 
\item $S(A^{t_2})=S(A)T$ 
\item $S(AT)=S(A)^{t_2}$
\item $S(TAT)=S(A)^t$
\item $S(A^t)=TS(A)T$
\end{enumerate}  
\end{lemma}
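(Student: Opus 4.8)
The plan is to verify all eight identities by reducing to elementary tensors and substituting into a single master computation. Each of the eight equations is linear in $A$ (bilinear in $v,w$ for (1)), and every operation involved---$S$ itself, partial transposition, transposition, and left/right multiplication by $V\otimes W$, $M\otimes N$ or $T$---is linear. Hence it suffices by linearity to check each equation on the spanning family $A=(ab^t)\otimes(cd^t)$, $a,b,c,d\in\mathbb{C}^k$; for (1), whose two sides are bilinear, one checks $v=a\otimes b$ and $w=c\otimes d$, which span $\mathbb{C}^k\otimes\mathbb{C}^k$.

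The computation at the core is
\[
S\big((ab^t)\otimes(cd^t)\big)=(ac^t)\otimes(bd^t),
\]
obtained straight from Definition \ref{defS}: since $F(ab^t)=a\otimes b$ and $F(cd^t)=c\otimes d$, one has $S((ab^t)\otimes(cd^t))=(a\otimes b)(c\otimes d)^t=(a\otimes b)(c^t\otimes d^t)=(ac^t)\otimes(bd^t)$, using that transposition and matrix multiplication distribute over the Kronecker product. To this I would add the routine facts $F(xy^t)=x\otimes y$ and $F^{-1}(x\otimes y)=xy^t$; that $T$ is symmetric with $T(x\otimes y)=y\otimes x$, so $((ab^t)\otimes(cd^t))T=(ad^t)\otimes(cb^t)$ and $T((ab^t)\otimes(cd^t))T=(cd^t)\otimes(ab^t)$; and $((ab^t)\otimes(cd^t))^{t_2}=(ab^t)\otimes(dc^t)$ while $((ab^t)\otimes(cd^t))^{t}=(ba^t)\otimes(dc^t)$.

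Granting these, each property reduces to a one-line substitution. Property (1) is the master formula read with $v=a\otimes b$, $w=c\otimes d$ (so $vw^t=(ac^t)\otimes(bd^t)$ and $F^{-1}(v)\otimes F^{-1}(w)=(ab^t)\otimes(cd^t)$), and property (2) follows by feeding $S(A\otimes B)=F(A)F(B)^t$ back into (1). For property (3) one expands $(V\otimes W)A(M\otimes N)$ as a tensor of two rank-one matrices and applies the master formula, both sides collapsing to $(Vac^tW^t)\otimes(M^tbd^tN)$. Properties (4)--(8) are verified the same way: substitute $A=(ab^t)\otimes(cd^t)$, apply the auxiliary identities for $T$, $t_2$ and $t$, and compare; for instance (6) reads $S(AT)=S((ad^t)\otimes(cb^t))=(ac^t)\otimes(db^t)=S(A)^{t_2}$. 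Several later identities can also be bootstrapped rather than recomputed---e.g. (8) follows from (7) and the involution (2) by replacing $A$ with $S(A)$ and applying $S$---which shortens the write-up.

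The verifications are routine; the only real care is bookkeeping. The main obstacle, such as it is, lies in the identities that mix the flip operator $T$ with partial transposition---properties (4), (5), (6)---where one must track which of the four vectors $a,b,c,d$ is swapped by $T$ versus by $t_2$, and apply the transpose rule $(x\otimes y)^t=x^t\otimes y^t$ and the symmetry $T^t=T$ correctly. Once the master formula and this handful of auxiliary identities are fixed, no genuine difficulty remains.
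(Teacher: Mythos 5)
Your proposal is correct and rests on the same core technique as the paper: reduce each identity by linearity to rank-one tensors $ab^t\otimes cd^t$ and apply the master formula $S\bigl((ab^t)\otimes(cd^t)\bigr)=(ac^t)\otimes(bd^t)$, which is exactly how the paper proves properties $(3)$ and $(7)$. The only difference is organizational rather than mathematical: the paper defers $(1)$, $(2)$, $(4)$, $(5)$ to citations of \cite{carielloSPC} and obtains $(6)$ and $(8)$ by bootstrapping ($(6)$ from $(5)$ and $(2)$; $(8)$ from $(2)$ and $(7)$), whereas you verify every item directly from the master formula, which makes your write-up self-contained at the cost of a few extra routine computations.
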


\begin{proof} We only need to prove properties  $(3),(6),(7)$ and $(8)$, because $(1),(2)$ were proved in lemma 1.7 in \cite{carielloSPC} and $(4),(5)$ in lemma 2.1 in \cite{carielloSPC}. 

In order to prove properties $(3)$ and $(7)$, since both sides of the equation are linear on $A$, we just need to prove for $A=ab^t\otimes cd^t$, where $a,b,c,d\in\mathbb{C}^k$. 

Now, $S((V\otimes W)(ab^t\otimes cd^t)(M\otimes N))=S((Va\otimes Wc)(M^tb\otimes N^td)^t)$. By property $(1)$, this is equal to $F^{-1}(Va\otimes Wc)\otimes F^{-1}(M^tb\otimes N^td)=(Vac^tW^t)\otimes (M^t bd^tN)=(V\otimes M^t)(ac^t\otimes bd^t)(W^t\otimes N)=(V\otimes M^t)S(A)(W^t\otimes N)$. Thus, property $(3)$ is proved.

Now, $S(T(ab^t\otimes cd^t)T)=S(cd^t\otimes ab^t)=(c\otimes d)(a\otimes b)^t$ and $S((ab^t\otimes cd^t))^t=((a\otimes b)(c \otimes d)^t)^t=(c\otimes d)(a\otimes b)^t$. Thus, property $(7)$ is proved.

Next, by property $(5)$,   $S(S(A)^{t_2})=S(S(A))T$. By property $(2)$, we have  $S(S(A)^{t_2})=AT$. Again by property $(2)$, $S(AT)=S(S(S(A)^{t_2}))=S(A)^{t_2}$. Thus, we proved property $(6)$.

Finally, by properties $(2)$ and $(7)$ , $S(A^t)=S(S^2(A)^t)=S(S(TS(A)T))=TS(A)T$. Thus, we proved property $(8)$.
\end{proof}

\begin{lemma} \label{shape2} Let $A\in M_k\otimes M_k$ be a Hermitian matrix. If for every Hermitian vector $v\in\mathbb{C}^k\otimes\mathbb{C}^k$ $($definition \ref{definition1}$)$, we also obtain $Av \in\mathbb{C}^k\otimes\mathbb{C}^k$ Hermitian then $A$ has a Hermitian decomposition of the following type $\sum_{i=1}^n\alpha_i\gamma_i\otimes\gamma_i^t$, where $\alpha_i\in\mathbb{R}$.
\end{lemma}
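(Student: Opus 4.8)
The plan is to convert the coordinate hypothesis on $A$ into a single operator identity, then transpose one tensor factor to turn $A$ into a symmetric tensor, and finally diagonalize.

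First I would rewrite the notion of a Hermitian vector in a usable form. Using that $F^{-1}(Tv)=(F^{-1}(v))^t$ and $F^{-1}(\overline v)=\overline{F^{-1}(v)}$ (both immediate from Definition \ref{definition1} and the flip $T$), a vector $v\in\mathbb C^k\otimes\mathbb C^k$ is Hermitian if and only if $\overline v=Tv$. Since $\mathbb C^{k^2}$ is the complex span of its Hermitian vectors (every matrix is a complex combination of two Hermitian matrices), I can feed this into the hypothesis: for a Hermitian $v$ we have $\overline v=Tv$ and $Av$ Hermitian, hence $\overline{Av}=T(Av)$, i.e. $\overline A\,Tv=TAv$. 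Both sides are complex-linear in $v$ and agree on a complex spanning set, so $\overline A\,T=TA$, that is $\overline A=TAT$. Because $A$ is Hermitian, $\overline A=A^t$, and therefore $A=TA^tT$. This is the key reformulation of the hypothesis.

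Next I would exploit that a Hermitian $A$ admits a Hermitian decomposition $A=\sum_i C_i\otimes D_i$ with $C_i,D_i$ Hermitian (expand $A$ in a real basis of the Hermitian matrices; Hermiticity of $A$ forces real coefficients, which also shows $A$ lies in the real tensor product $\mathcal H_k\otimes_{\mathbb R}\mathcal H_k$ of the space of Hermitian matrices with itself). Computing $TA^tT=\sum_i D_i^t\otimes C_i^t$, the relation $A=TA^tT$ becomes $\sum_i C_i\otimes D_i=\sum_i D_i^t\otimes C_i^t$. Applying transposition to the first tensor factor only, the left side becomes $\sum_i C_i^t\otimes D_i$ and the right side becomes $\sum_i D_i\otimes C_i^t$, which is exactly its image under interchanging the two factors; in other words, after transposing the first leg, $A$ becomes a symmetric tensor in $\mathcal H_k\otimes_{\mathbb R}\mathcal H_k$. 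Finally, a symmetric tensor in $\mathcal H_k\otimes_{\mathbb R}\mathcal H_k$ corresponds to a self-adjoint operator on the real Euclidean space $\mathcal H_k$ (with the trace inner product), so the real spectral theorem writes it as $\sum_j\mu_j\,\eta_j\otimes\eta_j$ with $\mu_j\in\mathbb R$ and $\eta_j$ Hermitian (and orthonormal). Transposing the first leg back gives $A=\sum_j\mu_j\,\eta_j^t\otimes\eta_j$, and setting $\gamma_j=\eta_j^t$ (again Hermitian, with $\eta_j=\gamma_j^t$) yields $A=\sum_j\mu_j\,\gamma_j\otimes\gamma_j^t$, the asserted decomposition.

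I expect the only delicate point to be the first paragraph: getting from the entrywise statement ``$Av$ Hermitian for every Hermitian $v$'' to the clean operator identity $A=TA^tT$, where one must track the antilinearity hidden in $v\mapsto Tv$ and justify upgrading an identity valid on the real set of Hermitian vectors to all of $\mathbb C^{k^2}$ by complex-linearity. Once $A=TA^tT$ is established, the transpose-one-factor trick together with the real spectral theorem finishes the argument quickly.
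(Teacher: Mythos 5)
Your proof is correct, and it takes a genuinely different route from the paper's. The paper argues spectrally on $A$ itself: from the hypothesis it shows that the Hermitian parts $w_1,w_2$ of any eigenvector $w=w_1+iw_2$ of $A$ are again eigenvectors, so $A$ has an orthonormal basis of Hermitian eigenvectors and a spectral decomposition $A=\sum_j\alpha_j v_j\overline{v_j}^t$ with each $v_j$ a Hermitian vector; it then applies the realignment map, checks that $S(A)=\sum_j\alpha_j F^{-1}(v_j)\otimes\overline{F^{-1}(v_j)}$ is again Hermitian and preserves Hermitian vectors, repeats the argument to get a spectral decomposition of $S(A)$ with Hermitian eigenvectors $w_j$, and finally uses $S^2=\mathrm{Id}$ to obtain $A=\sum_j\lambda_j F^{-1}(w_j)\otimes F^{-1}(w_j)^t$. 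You instead encode the hypothesis as the single operator identity $\overline{A}\,T=TA$ (equivalently $A=TA^tT$, using $\overline{A}=A^t$), which is legitimate because the Hermitian vectors span $\mathbb{C}^{k^2}$ over $\mathbb{C}$ and both sides are complex-linear in $v$; you then work in the real tensor square $\mathcal{H}_k\otimes_{\mathbb{R}}\mathcal{H}_k$ of the space of Hermitian matrices, observe that transposing the first leg turns this identity into swap-invariance of $A^{t_1}$, and finish with the real spectral theorem for symmetric tensors. Your route buys a clean, reusable characterization of the hypothesis ($A$ preserves Hermitian vectors if and only if $\overline{A}=TAT$), avoids the realignment map entirely, and invokes the spectral theorem only once, in its real form. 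The paper's route buys an intermediate fact that it reuses later: the proofs of lemmas \ref{SPCequiv} and \ref{Formatinvariant} cite the proof (not just the statement) of this lemma precisely for the claim that such an $A$ admits an orthonormal basis of $\mathbb{C}^k\otimes\mathbb{C}^k$ consisting of Hermitian eigenvectors of $A$ --- a fact your argument does not produce, since your orthonormal Hermitian $\gamma_i$ are Schmidt data of $A$ rather than eigenvectors of $A$ acting on $\mathbb{C}^{k^2}$; so your proof could not be substituted into the paper without also supplying that by-product separately.
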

\begin{proof}
Let $w\in\mathbb{C}^k\otimes\mathbb{C}^k$ be an eigenvector of $A$ associated to the eigenvalue $\lambda$. Let $w=w_1+iw_2$, where $w_1,w_2$ are Hermitian vectors. Since $A$ is a Hermitian matrix, $\lambda$ is a real number. Notice that $Aw=Aw_1+iAw_2=\lambda w_1+i \lambda w_2$. 

Now $Aw_1-\lambda w_1=i (\lambda w_2-Aw_2)$. Since $Aw_1-\lambda w_1$ and $\lambda w_2-Aw_2$ are Hermitian vectors, we obtain $0=Aw_1-\lambda w_1=\lambda w_2-Aw_2$.

Thus, every eigenvector of $A$ is a linear combination of Hermitian eigenvectors of $A$. Thus there is a set of Hermitian eigenvectors of $A$ that span a basis for $\mathbb{C}^k\otimes\mathbb{C}^k$ and we may extract a basis from this set. 
We can obtain an orthonormal basis of Hermitian eigenvectors. Therefore we obtain a spectral decomposition $A=\sum_{j} \alpha_jv_j\overline{v_j}^t$, where $\alpha_j$ are real numbers and $v_j$ Hermitian eigenvectors.

Next, $S(A)=S(\sum_{j=1}^n\alpha_jv_j\overline{v_j}^t)=\sum_{j=1}^n\alpha_j F^{-1}(v_j)\otimes \overline{F^{-1}(v_j)}$, by property $(1)$ in lemma \ref{propertiesofS}. Notice that $S(A)$ is a Hermitian matrix, since $\alpha_j\in\mathbb{R}$ and  $F^{-1}(v_j)$ is Hermitian for every $j$.

Thus, for every Hermitian vector $v\in\mathbb{C}^k\otimes\mathbb{C}^k$, $S(A)v$ is also Hermitian. Therefore, again we obtain a spectral decomposition $\sum_{j=1}^n\lambda_jw_j\overline{w_j}^t$ for $S(A)$, where $\lambda_j$ are real numbers and $w_j$ Hermitian eigenvectors.

Finally, by item 1 and 2 of lemma \ref{propertiesofS}, $A=S(S(A))=\sum_{j=1}^n \lambda_j F^{-1}(w_j)\otimes \overline{F^{-1}(w_j)}$, where $\lambda_j$ are real numbers and $F^{-1}(w_j)$ are Hermitian matrices. Thus,  $F^{-1}(w_j)^t= \overline{F^{-1}(w_j)}$. Define $\gamma_j=F^{-1}(w_j)$.
\end{proof}

\begin{lemma} \label{SPCequiv}Let $A\in M_k\otimes M_k$ be a positive semidefinite Hermitian matrix. $A$ is SPC if and only if $S(A^{t_2})$ is a positive semidefinite Hermitian matrix.
\end{lemma}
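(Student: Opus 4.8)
The plan is to route both implications through Remark~\ref{spectral}, which is precisely the dictionary between operator Schmidt decompositions of the shape $\sum_i\lambda_i\gamma_i\otimes\overline{\gamma_i}$ with $\{\gamma_i\}$ orthonormal and spectral decompositions of the realignment $S$. The only extra ingredient needed to connect this shape with the SPC form is the elementary identity: if $A=\sum_i\lambda_i\gamma_i\otimes\gamma_i$ then $A^{t_2}=\sum_i\lambda_i\gamma_i\otimes\gamma_i^t$, and when each $\gamma_i$ is Hermitian one has $\gamma_i^t=\overline{\gamma_i}$, so that $A^{t_2}=\sum_i\lambda_i\gamma_i\otimes\overline{\gamma_i}$ is exactly of the shape handled by the remark.

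For the forward implication I would assume $A$ is SPC, say $A=\sum_i\lambda_i\gamma_i\otimes\gamma_i$ with the $\gamma_i$ Hermitian and orthonormal and $\lambda_i>0$. Then $A^{t_2}=\sum_i\lambda_i\gamma_i\otimes\overline{\gamma_i}$, and Remark~\ref{spectral} yields $S(A^{t_2})=\sum_i\lambda_iv_i\overline{v_i}^t$ with $v_i=F(\gamma_i)$ an orthonormal set. Since $v_i\overline{v_i}^t=v_iv_i^*$ is a rank-one orthogonal projection and $\lambda_i>0$, this exhibits $S(A^{t_2})$ as a positive combination of orthogonal projections, hence positive semidefinite Hermitian. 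This direction is immediate.

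The converse is the substantial one. Put $B:=S(A^{t_2})$ and assume it is positive semidefinite Hermitian, with spectral decomposition $B=\sum_i\lambda_iv_i\overline{v_i}^t$, $\lambda_i>0$ and $\{v_i\}$ orthonormal. Applying Remark~\ref{spectral} already gives $A^{t_2}=\sum_i\lambda_i\gamma_i\otimes\overline{\gamma_i}$ with $\gamma_i=F^{-1}(v_i)$ orthonormal, and hence $A=\sum_i\lambda_i\gamma_i\otimes\gamma_i^*$; the whole difficulty is to arrange that the $\gamma_i$ be \emph{Hermitian}, for then $\gamma_i^t=\overline{\gamma_i}$ and $A=\sum_i\lambda_i\gamma_i\otimes\gamma_i$ is SPC. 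I would achieve this by choosing the eigenvectors $v_i$ to be Hermitian vectors in the sense of Definition~\ref{definition1}, i.e.\ fixed points of the anti-linear involution $J(v)=T\overline{v}$ (one checks $J^2=\mathrm{Id}$ and that $v=T\overline v$ is equivalent to $F^{-1}(v)$ being Hermitian). The key point is that $B$ commutes with $J$, which is exactly the identity $\overline{B}=TBT$. This identity I would extract from the hypotheses as follows: since $S$ commutes with entrywise conjugation and $A$ is Hermitian (so $\overline A=A^t$), one gets $\overline{B}=S(\overline{A^{t_2}})=S((A^{t_2})^t)$, while property~$(8)$ of Lemma~\ref{propertiesofS} gives $TBT=TS(A^{t_2})T=S((A^{t_2})^t)$; the two coincide.

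With $\overline B=TBT$ in hand, $B$ commutes with the conjugation $J$, so each eigenspace of the Hermitian matrix $B$ is $J$-invariant and therefore possesses an orthonormal basis of $J$-fixed (Hermitian) vectors; here one uses that the trace inner product is real on Hermitian matrices, so that Gram--Schmidt performed over $\mathbb{R}$ preserves Hermiticity. Choosing the $v_i$ in this way makes the $\gamma_i=F^{-1}(v_i)$ Hermitian, and then the identity of the first paragraph gives $A=\sum_i\lambda_i\gamma_i\otimes\gamma_i$ with $\lambda_i>0$, i.e.\ $A$ is SPC. I expect this selection of a Hermitian eigenbasis (equivalently, verifying $\overline B=TBT$ and exploiting it) to be the main obstacle; the remaining manipulations are routine once Remark~\ref{spectral} and properties $(7),(8)$ of Lemma~\ref{propertiesofS} are available.
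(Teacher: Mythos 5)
Your proof is correct, and for the substantive direction (positivity of $S(A^{t_2})$ implies SPC) it reaches the crucial property of $B=S(A^{t_2})$ by a genuinely different route than the paper. The paper first pulls the spectral decomposition of $B$ back through $S$ to write $A^{t_2}=\sum_i\lambda_i V_i\otimes\overline{V_i}$, observes that $A^{t_2}$ then sends Hermitian vectors to Hermitian vectors, invokes Lemma~\ref{shape2} to obtain a Hermitian decomposition $A^{t_2}=\sum_i\alpha_i\gamma_i\otimes\gamma_i^t$, and only then pushes forward again to conclude that $B$ itself preserves Hermitian vectors, whence $B$ admits an orthonormal basis of Hermitian eigenvectors. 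You shortcut this round trip: the identity $\overline{B}=TBT$ --- derived from conjugation-equivariance of $S$ (true and immediate from Definition~\ref{defS}, since $S$ merely permutes matrix entries, although it is not among the listed properties of Lemma~\ref{propertiesofS}), from Hermiticity of $A^{t_2}$, and from property $(8)$ --- says precisely that $B$ commutes with the anti-linear involution $J(v)=T\overline{v}$ whose fixed points are the Hermitian vectors; for a linear map this is equivalent to the ``preserves Hermitian vectors'' condition the paper establishes, so from that point on the two arguments merge (eigenspaces of the Hermitian matrix $B$ are $J$-invariant, real Gram--Schmidt in the fixed-point real form yields an orthonormal Hermitian eigenbasis, and Remark~\ref{spectral} together with properties $(1)$ and $(2)$ pulls this back to $A=\sum_i\lambda_i\gamma_i\otimes\gamma_i$ with $\lambda_i>0$). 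What your route buys is economy and conceptual clarity: Lemma~\ref{shape2} is bypassed entirely and the real-structure symmetry of $B$ is isolated as the single crux. What the paper's route buys is uniformity of machinery, since Lemma~\ref{shape2} is reused verbatim elsewhere (Lemma~\ref{Formatinvariant}, case 12 of Theorem~\ref{solutionproblem_2_k>2}, Lemma~\ref{invariantPPT}). The easy direction, from the SPC decomposition to positivity of $S(A^{t_2})$ via Remark~\ref{spectral}, is identical in both.
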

\begin{proof}
First, suppose $S(A^{t_2})$ is a positive semidefinite Hermitian matrix then  $S(A^{t_2})=\sum_{i=1}^n\lambda_i v_i\overline{v_i}^t$, where $\lambda_i>0$.  Therefore, $A^{t_2}=\sum_{i=1}^n \lambda_iV_i\otimes \overline{V_i}$, where $V_i=F^{-1}(v_i)$, by item 1 and 2 of lemma \ref{propertiesofS}.

Thus, $A^{t_2}$ is Hermitian and for every Hermitian vector $v\in\mathbb{C}^k\otimes\mathbb{C}^k$, we also have $A^{t_2}v$ Hermitian. Thus, by lemma \ref{shape2}, $A^{t_2}$ has a Hermitian decomposition $\sum_{i=1}^n\alpha_i\gamma_i\otimes\gamma_i^t$, where $\alpha_i\in\mathbb{R}$ and $\gamma_i$ is Hermitian for every $i$. Thus, $S(A^{t_2})=\sum_{i=1}^n\alpha_iw_i\overline{w_i}^t$, where $w_i=F(\gamma_i)$ and $\overline{w_i}=F(\overline{\gamma_i})=F(\gamma_i^t)$. Notice that since every $w_i$ is Hermitian then for every Hermitian $v\in\mathbb{C}^k\otimes\mathbb{C}^k$, $\overline{w_i}^tv=tr(\overline{F^{-1}(w_i)}^tF^{-1}(v))\in\mathbb{R}$ (see remark \ref{spectral}) and $S(A^{t_2})v$ is also Hermitian. Thus, $\mathbb{C}^k\otimes\mathbb{C}^k$ has an orthonormal basis of Hermitian eigenvectors of $S(A^{t_2})$ $($see proof of lemma \ref{shape2}$)$. Therefore, we can write  $S(A^{t_2})=\sum_{i=1}^n\beta_ir_i\overline{r_i}^t$, where $\beta_i$ are the positive eigenvalues of $S(A^{t_2})$ and $r_i$ are the orthonormal Hermitian eigenvectors.

Thus,  $A^{t_2}=\sum_{i=1}^n\beta_iR_i\otimes\overline{R_i}$,  where $F^{-1}(r_i)=R_i$, by property $(1)$ of lemma \ref{propertiesofS}. So $A=\sum_{i=1}^n\beta_iR_i\otimes R_i$ is a Hermitian Schmidt decomposition of $A$, since $F$ is an isometry and $R_i$ is Hermitian. Therefore $A$ is SPC. 

Now, if $A$ is SPC then $A$ is positive semidefinite with a Hermitian Schmidt decomposition $\sum_{i=1}^n\beta_iR_i\otimes R_i$, where $R_i$ is Hermitian and $\beta_i>0$ for every $i$. Now $S(A^{t_2})=\sum_{i=1}^n\beta_ir_i\overline{r_i}^t$ is a spectral decomposition, where $F(R_i)=r_i$, and $S(A^{t_2})$ is positive semidefinite. 
\end{proof}

\begin{lemma}\label{Formatinvariant} Let $A\in M_k\otimes M_k$ be a Matrix Invariant under Realignment. Then $A$ has a Hermitian Schimidt decomposition $\sum_{i=1}^n\lambda_i\gamma_i\otimes\gamma_i^t$ with $\lambda_i>0$, for every $i$.
\end{lemma}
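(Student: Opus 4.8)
The plan is to produce the decomposition from a spectral decomposition of $A$ written in a basis of Hermitian eigenvectors, and then to apply the realignment map once, using the invariance $A=S(A)$, to turn it into a tensor decomposition of the required shape. Recall that by definition \ref{defmatricesinvariant} the matrix $A$ is positive semidefinite, Hermitian and satisfies $A=S(A)$.

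The first step I would carry out is to establish the auxiliary identity $A^{t}=TAT$. Since the realignment map commutes with entrywise conjugation (this is immediate from the definition of $S$, because $\overline{F(M)}=F(\overline{M})$, so $\overline{S(X)}=S(\overline{X})$ for every $X$), and since $A$ is Hermitian, we have $\overline{A}=A^{t}$; therefore $S(A^{t})=S(\overline{A})=\overline{S(A)}=\overline{A}=A^{t}$, so $A^{t}$ is again invariant under $S$. Combining this with property $(8)$ of \lemref{propertiesofS} and the hypothesis $S(A)=A$ gives $A^{t}=S(A^{t})=TS(A)T=TAT$, and conjugating by $T$ (recall $T^{2}=Id$) yields $A=TA^{t}T$.

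Next I would show that $A$ maps every Hermitian vector to a Hermitian vector. Using definition \ref{definition1} one checks that a vector $v\in\mathbb{C}^{k}\otimes\mathbb{C}^{k}$ is Hermitian precisely when $v=T\overline{v}$ (transposition of $F^{-1}(v)$ corresponds to multiplication by $T$ and conjugation corresponds to conjugation). For such a $v$ one then computes, using $\overline{A}=A^{t}$ and $A=TA^{t}T$, that $T\overline{Av}=T\overline{A}\,\overline{v}=TA^{t}\overline{v}=TA^{t}Tv=Av$, so $Av$ is Hermitian. Consequently, exactly as in the proof of \lemref{shape2}, every eigenvector of $A$ is a linear combination of Hermitian eigenvectors, and $\mathbb{C}^{k}\otimes\mathbb{C}^{k}$ admits an orthonormal basis of Hermitian eigenvectors of $A$. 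Since $A$ is positive semidefinite, its nonzero eigenvalues are positive, so one obtains a spectral decomposition $A=\sum_{i=1}^{n}\lambda_{i}w_{i}\overline{w_{i}}^{t}$ with $\lambda_{i}>0$ and $\{w_{1},\dots,w_{n}\}$ an orthonormal set of Hermitian vectors.

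Finally, applying $S$ and using $A=S(A)$ together with property $(1)$ of \lemref{propertiesofS}, I would write
$$A=S(A)=\sum_{i=1}^{n}\lambda_{i}\,F^{-1}(w_{i})\otimes F^{-1}(\overline{w_{i}}).$$
Setting $\gamma_{i}=F^{-1}(w_{i})$, each $\gamma_{i}$ is Hermitian (because $w_{i}$ is a Hermitian vector), so $F^{-1}(\overline{w_{i}})=\overline{\gamma_{i}}=\gamma_{i}^{t}$, and the $\gamma_{i}$ form an orthonormal set because $F$ is an isometry (remark \ref{spectral}). Hence $A=\sum_{i=1}^{n}\lambda_{i}\gamma_{i}\otimes\gamma_{i}^{t}$ is a Hermitian Schmidt decomposition with $\lambda_{i}>0$, as required. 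I expect the main obstacle to be the first two steps, that is, extracting the symmetry $A^{t}=TAT$ from the invariance and translating it into the statement that $A$ preserves Hermitian vectors; once $A$ is known to have an orthonormal Hermitian eigenbasis, the remaining passage through $S$ is routine.
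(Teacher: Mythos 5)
Your proof is correct and follows the same overall route as the paper's: establish that $A$ maps Hermitian vectors to Hermitian vectors, invoke the argument of \lemref{shape2} to obtain an orthonormal basis of Hermitian eigenvectors of $A$, and then use $A=S(A)$ together with property $(1)$ of \lemref{propertiesofS} to convert the resulting spectral decomposition $\sum_{i=1}^n\lambda_i w_i\overline{w_i}^t$ into $\sum_{i=1}^n\lambda_i\gamma_i\otimes\gamma_i^t$. The only point where you diverge is the first step: the paper obtains the Hermitian-preservation property directly, by applying property $(1)$ to an arbitrary spectral decomposition so that $A=S(A)=\sum_{i=1}^n\lambda_i V_i\otimes\overline{V_i}$, whence $A\,F(X)=F\bigl(\sum_{i=1}^n\lambda_i V_iXV_i^*\bigr)$ is Hermitian whenever $X$ is; you instead derive the symmetry $A^t=TAT$ from property $(8)$ and the fact that $S$ commutes with entrywise conjugation, and then check $T\overline{Av}=Av$ using the characterization of Hermitian vectors as those with $v=T\overline{v}$. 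Your detour is slightly longer but fully rigorous, and it has the merit of making explicit computations (in particular the identity $v=T\overline{v}$ and the identity $A^t=TAT$, which the paper only establishes later, in the proof of \lemref{invariantPPT} via \lemref{shape2}) that the paper's terse ``Therefore, $Av$ is Hermitian'' leaves to the reader.
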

\begin{proof} By definition \ref{defmatricesinvariant}, $A$ is a positive semidefinite Hermitian matrix. Thus, $A$ has a spectral decomposition $A=\sum_{i=1}^n\lambda_i v_i\overline{v_i}^t$, with $\lambda_i>0$ and $v_1,\ldots,v_n$ orthonormal.

Now, by property $(1)$ of \ref{propertiesofS}, $A=S(A)=\sum_{i=1}^n\lambda_i V_i\otimes\overline{V_i}$, where $V_i=F^{-1}(v_i)$. Therefore, $Av$ is Hermitian for every Hermitian $v\in\mathbb{C}^k\otimes\mathbb{C}^k$.

Therefore exists an orthonormal basis of $\mathbb{C}^k\otimes\mathbb{C}^k$ formed by Hermitian eigenvectors $($see proof of lemma \ref{shape2}$)$.
Thus, we may suppose without loss of generality that $v_1,\ldots,v_n$ above are Hermitian vectors. Thus, $V_1,\ldots, V_n$ are Hermitian and orthonormal $($since $F$ is an isometry$)$. Finally, define $\gamma_i=V_i$.
\end{proof}

\section{Completely Reducible Maps}

  Let us consider the usual inner product in $M_k:$ $\langle A,B\rangle=tr(AB^*)$.  In the context of positive maps, sometimes the term \textit{self-adjoint} means $L(A^*)=L(A)^*$ (see \cite{evans}). Here, we shall use this term with its usual meaning. We say that $L:VM_kV\rightarrow VM_kV$ is self-adjoint if $L$ is equal to its adjoint $L^*$ (i.e.$\langle L(A),B\rangle=\langle A,L(B)\rangle$).
 
 In this section, we use well known theorems of Perron-Frobenius Theory to describe some properties of completely reducible maps. These are theorems 2.3 and 2.5 in \cite{evans}: If $L:VM_kV\rightarrow VM_kV$ is a positive map then exists $\gamma\in P_k\cap VM_kV$ such that $L(\gamma)=\lambda\gamma$, where $\lambda$ is the spectral radius of $L$. Moreover, if $L$ is irreducible then this eigenvalue has multiplicity $1$.

Here we prove that if $L:VM_kV\rightarrow VM_kV$ is a self-adjoint positive map  then $L$ is completely reducible if and only if $L$ has the decomposition property (proposition \ref{propositioncompletelyreducible}). In the next section, we provide an equivalent way to prove that $L$ has the decomposition property (lemma \ref{lemmaequivalence}) and we shall give two applications of completely reducible maps in Quantum Information Theory. 
  
\begin{definition} \label{definitioncompletelyreducible} $($Completely Reducible Maps$)$: A positive map  $L:VM_kV\rightarrow VM_kV$ is called completely reducible, if there are orthogonal projections $V_1,\ldots,V_s\in M_k$ such that $V_iV_j=0\ (i\neq j)$, $V_iV=V_i$, $VM_kV=V_1M_kV_1\oplus\ldots \oplus V_sM_kV_s \oplus R$,  $R\perp V_1M_kV_1\oplus\ldots \oplus V_sM_kV_s$ and

\begin{enumerate}
\item $L(V_iM_kV_i)\subset V_iM_kV_i$,
\item $L|_{V_iM_kV_i}$ is irreducible,
\item $L|_R\equiv 0$.
\end{enumerate}
\end{definition}  
  
\begin{definition}: \label{definitiondecompositionproperty} Let $L:VM_kV\rightarrow VM_kV$ be a self-adjoint positive map.  We say that $L$ has the decomposition property if for every  $\gamma\in P_k\cap VM_kV$ such that $L(\gamma)=\lambda\gamma$, $\lambda>0$ and $V_1\in M_k$ is the orthogonal projection onto $\Im(\gamma)$ then $L|_{R}\equiv 0$, where $R=(V-V_1)M_kV_1\oplus V_1M_k(V-V_1)$. Notice that $R$ is the orthogonal complement of $V_1M_kV_1\oplus (V-V_1)M_k(V-V_1)$ in $VM_kV$.
\end{definition}

The next lemma is well known.

\begin{lemma}\label{lemmawellknown} Let $L:VM_kV\rightarrow WM_mW$ be a positive map. If $\gamma\in P_k\cap VM_kV$ and $L(\gamma)=\delta$ then $L(V_1M_kV_1)\subset W_1M_mW_1$, where $V_1$ is the orthogonal projection onto $\Im(\gamma)$ and $W_1$ is the orthogonal projection onto $\Im(\delta)$. 
\end{lemma}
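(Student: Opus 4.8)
The plan is to reduce the statement to positive semidefinite elements and then exploit the order-preserving nature of positive maps. First I would observe that the hereditary subalgebra $V_1M_kV_1$ is spanned, as a complex vector space, by its positive cone $V_1M_kV_1\cap P_k$: any $Y\in V_1M_kV_1$ decomposes as $Y=H_1+iH_2$ with $H_1,H_2$ Hermitian elements of $V_1M_kV_1$, and each Hermitian $H\in V_1M_kV_1$ splits as $H=H^+-H^-$ with $H^\pm\in V_1M_kV_1\cap P_k$ (since $\Im(H^\pm)\subset\Im(H)\subset\Im(V_1)$). As $L$ is linear, it therefore suffices to prove $L(P)\in W_1M_mW_1$ for every $P\in V_1M_kV_1\cap P_k$.

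Next I would set up a domination inequality. Fix such a $P$. Because $\Im(P)\subset\Im(V_1)=\Im(\gamma)$, we have $\gamma\ge\mu V_1$, where $\mu>0$ is the smallest nonzero eigenvalue of $\gamma$ (writing $\gamma=\sum_i\mu_i v_i\overline{v_i}^t$ and $V_1=\sum_i v_i\overline{v_i}^t$ over the nonzero eigenvalues, one sees $\gamma-\mu V_1\in P_k$). Hence $P\le\|P\|V_1\le(\|P\|/\mu)\gamma$. Setting $c=\|P\|/\mu>0$, the matrix $c\gamma-P$ lies in $P_k\cap VM_kV$, and since $L$ is a positive map we get $L(c\gamma-P)=c\delta-L(P)\in P_m$. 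Thus $0\le L(P)\le c\delta$, where $L(P)$ is itself positive semidefinite and Hermitian.

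The key step is to pass from this operator inequality back to a range containment: if $0\le A\le B$ are positive semidefinite, then $\ker(B)\subset\ker(A)$, equivalently $\Im(A)\subset\Im(B)$, because $x\in\ker(B)$ forces $0\le x^*Ax\le x^*Bx=0$ and hence $Ax=0$. Applying this with $A=L(P)$ and $B=c\delta$ yields $\Im(L(P))\subset\Im(\delta)=\Im(W_1)$, so that $W_1L(P)=L(P)$ and, $L(P)$ being Hermitian, also $L(P)W_1=L(P)$; therefore $L(P)=W_1L(P)W_1\in W_1M_mW_1$. Combined with the first paragraph, this gives $L(V_1M_kV_1)\subset W_1M_mW_1$.

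I expect the only genuinely delicate points to be the three standard order-theoretic facts feeding this argument: that a positive linear map is order preserving, that range inclusion $\Im(P)\subset\Im(\gamma)$ is equivalent to a domination $P\le c\gamma$, and that $0\le A\le B$ forces $\Im(A)\subset\Im(B)$. None of these is hard, but each must be invoked cleanly; the remaining work is routine bookkeeping with the projections $V_1$ and $W_1$.
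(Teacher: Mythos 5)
Your proof is correct. Note that the paper itself offers no proof of this lemma at all --- it is introduced with ``The next lemma is well known'' and stated without argument --- so your write-up supplies the missing details, and it does so by the standard route: spanning the hereditary subalgebra $V_1M_kV_1$ by its positive cone, establishing the domination $P\le \|P\|V_1\le (\|P\|/\mu)\gamma$, applying positivity of $L$ to get $0\le L(P)\le c\,\delta$, and using the fact that $0\le A\le B$ forces $\Im(A)\subset\Im(B)$. Each of the three order-theoretic ingredients you flag is justified correctly (in particular $\gamma\ge\mu V_1$ via the smallest nonzero eigenvalue, and $x^*Ax=0$ with $A\ge 0$ forcing $Ax=0$), and the reduction to positive semidefinite $P$ via $Y=H_1+iH_2$ and $H=H^+-H^-$ is airtight because $V_1M_kV_1$ is $*$-closed and the spectral parts $H^{\pm}$ have range inside $\Im(H)\subset\Im(V_1)$.
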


\begin{corollary}\label{corollarywellknown} Let $L:VM_kV\rightarrow VM_kV$ be a positive map and $\gamma\in P_k\cap VM_kV$ be such that $L(\gamma)=\lambda\gamma,$ $\lambda>0$. Then, $L(V_1M_kV_1)\subset V_1M_kV_1$, where $V_1$ is the orthogonal projection onto $\Im(\gamma)$.
\end{corollary}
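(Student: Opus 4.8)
The plan is to derive this directly from Lemma~\ref{lemmawellknown} by specializing its codomain. I would begin by setting $W=V$ and $m=k$ in the statement of Lemma~\ref{lemmawellknown}, so that $L$ is regarded as a positive map $L:VM_kV\rightarrow VM_kV$, exactly as in the corollary. The hypothesis of the lemma requires only that $\gamma\in P_k\cap VM_kV$, which is given, and produces $\delta=L(\gamma)$; here $\delta=L(\gamma)=\lambda\gamma$.

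The single observation that makes the corollary fall out is that the range is insensitive to multiplication by a positive scalar. Since $\lambda>0$, we have $\Im(\delta)=\Im(\lambda\gamma)=\Im(\gamma)$, so the orthogonal projection $W_1$ onto $\Im(\delta)$ coincides with the orthogonal projection $V_1$ onto $\Im(\gamma)$. I would state this explicitly, as it is the only nontrivial point, and it is genuinely elementary: for $\lambda>0$, the vectors $\gamma x$ and $(\lambda\gamma)x=\lambda(\gamma x)$ span the same subspace of $\mathbb{C}^k$.

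With $W_1=V_1$ in hand, the conclusion of Lemma~\ref{lemmawellknown} reads $L(V_1M_kV_1)\subset W_1M_mW_1=V_1M_kV_1$, which is precisely the assertion of the corollary. There is no real obstacle here: the entire content is packaged in Lemma~\ref{lemmawellknown}, and the corollary is simply the self-map case together with the remark that an eigenvector $\gamma$ for a positive eigenvalue has the same range as its image $L(\gamma)$. I would therefore keep the proof to two or three lines, citing the lemma and the scalar-invariance of the range.
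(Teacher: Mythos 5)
Your proof is correct and is exactly the derivation the paper intends: the paper states this as an immediate corollary of Lemma~\ref{lemmawellknown} without further argument, and your specialization ($W=V$, $m=k$, together with the observation that $\Im(\lambda\gamma)=\Im(\gamma)$ for $\lambda>0$, so $W_1=V_1$) is precisely that implicit argument.
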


\begin{lemma} \label{lemmairreducible} Let $L:VM_kV\rightarrow VM_kV$ be a self-adjoint positive map. $L$ is irreducible if and only if the biggest eigenvalue has multiplicity $1$ with respect to an eigenvector $\gamma\in P_k\cap VM_kV$ such that $\Im(\gamma)=\Im(V)$.
\end{lemma}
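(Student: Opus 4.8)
The plan is to prove both implications, using the Perron--Frobenius theorems 2.3 and 2.5 of \cite{evans} together with corollary \ref{corollarywellknown} for the forward implication, and a self-adjointness-plus-positivity argument for the converse. For the direction ``$L$ irreducible $\Rightarrow$ the biggest eigenvalue has multiplicity $1$'', I would start from theorem 2.3 of \cite{evans}, which gives $\gamma\in P_k\cap VM_kV$ with $L(\gamma)=\lambda\gamma$ and $\lambda$ equal to the spectral radius; since $L$ is non-null and positive, $\lambda>0$, and for a self-adjoint positive map the spectral radius is exactly the biggest eigenvalue. Theorem 2.5 of \cite{evans} then yields multiplicity $1$. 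To obtain $\Im(\gamma)=\Im(V)$, let $V_1$ be the orthogonal projection onto $\Im(\gamma)$; by corollary \ref{corollarywellknown} the corner $V_1M_kV_1$ is invariant under $L$, so irreducibility forces $V_1=0$ or $V_1=V$, and $\gamma\neq 0$ rules out $V_1=0$.

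For the converse I would argue by contraposition: assume $L$ is reducible, so there is a corner $V'M_kV'\subset VM_kV$ with $L(V'M_kV')\subset V'M_kV'$ and $0\neq V'\neq V$, and derive that the biggest-eigenvalue condition of the statement fails. Put $V''=V-V'$. The first key step is to upgrade invariance of $V'M_kV'$ to invariance of the complementary corner $V''M_kV''$. Because $L$ is self-adjoint, the orthogonal complement $R=(V'M_kV')^{\perp}\cap VM_kV = V''M_kV''\oplus V'M_kV''\oplus V''M_kV'$ is invariant, so for $X\in P_k\cap V''M_kV''$ we get $L(X)\in R$ and hence $V'L(X)V'=0$; since $L(X)$ is positive semidefinite, a vanishing diagonal block forces the off-diagonal blocks to vanish as well, giving $L(X)=V''L(X)V''\in V''M_kV''$. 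By linearity $V''M_kV''$ is invariant, and then, again by self-adjointness, so is the remaining orthogonal piece $R''=V'M_kV''\oplus V''M_kV'$. Thus $VM_kV$ splits into three $L$-invariant, mutually orthogonal pieces $V'M_kV'$, $V''M_kV''$ and $R''$.

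The second key step is spectral bookkeeping. Applying theorem 2.3 of \cite{evans} to $L|_{V'M_kV'}$ and $L|_{V''M_kV''}$ produces positive eigenvectors $\gamma'\in P_k\cap V'M_kV'$ and $\gamma''\in P_k\cap V''M_kV''$ for the respective spectral radii $\lambda',\lambda''$, which are then genuine eigenvectors of $L$. If $\lambda'=\lambda$ (or $\lambda''=\lambda$), then $\gamma'$ is a $\lambda$-eigenvector supported on $\Im(V')\subsetneq\Im(V)$, hence linearly independent from any full-support eigenvector, contradicting multiplicity $1$. Otherwise $\lambda',\lambda''<\lambda$, so the whole $\lambda$-eigenspace of $L$ lies in $R''$; but every element of $R''$ has vanishing diagonal block $V'(\cdot)V'$, whereas a positive semidefinite eigenvector $\gamma$ with $\Im(\gamma)=\Im(V)$ satisfies $V'\gamma V'\neq0$ (it is positive definite on $\Im(V')$). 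In either case no full-support $\lambda$-eigenvector of multiplicity $1$ can exist, which is the desired failure of the stated condition.

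I expect the main obstacle to be the first key step of the converse, namely promoting invariance of one corner to invariance of its complement, since this is exactly the point where self-adjointness and positivity must be combined: invariance of the orthogonal complement gives only that the $V'$-diagonal block of $L(X)$ vanishes, and one needs the positive-semidefinite fact that a zero diagonal block kills the adjacent off-diagonal blocks to conclude $L(X)\in V''M_kV''$. Once all three pieces are known to be invariant, the contradiction with multiplicity one and full support is routine. I would also dispose separately of the degenerate cases ($L$ null, or $V$ of rank one), since the definition of irreducibility explicitly excludes the null map and these cases make the eigenvalue $\lambda$ vanish.
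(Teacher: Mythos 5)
Your forward implication is essentially the paper's own: Perron--Frobenius (theorems 2.3 and 2.5 of \cite{evans}) gives the positive eigenvector for the spectral radius and the multiplicity-one statement, and corollary \ref{corollarywellknown} plus irreducibility forces $\Im(\gamma)=\Im(V)$. Your converse, however, takes a genuinely different route. The paper argues directly: assuming the spectral condition, it takes an arbitrary invariant corner $V_1M_kV_1$, produces a positive eigenvector $\gamma'$ inside it via \cite{evans}, and derives a contradiction from two elementary facts --- eigenvectors of a self-adjoint map belonging to distinct eigenvalues are orthogonal, while two nonzero positive semidefinite matrices with nested ranges $\Im(\gamma')\subset\Im(\gamma)$ can never be trace-orthogonal. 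Your contrapositive instead first upgrades invariance of one corner $V'M_kV'$ to a full $L$-invariant orthogonal splitting $V'M_kV'\oplus V''M_kV''\oplus R''$ (self-adjointness plus the fact that a positive semidefinite matrix with a vanishing diagonal block has vanishing adjacent off-diagonal blocks), and then runs a two-case spectral bookkeeping argument. Both are correct. The paper's proof is shorter and needs no splitting; your splitting is precisely the mechanism the paper defers to the proof of proposition \ref{propositioncompletelyreducible} (obtained there a bit differently, from $tr(L(V-V_1)V_1)=0$ together with lemma \ref{lemmawellknown}), so your lemma proof in effect pre-pays part of that later proposition. Two small points to tighten: the inference ``$L$ non-null and positive $\Rightarrow\lambda>0$'' is false for positive maps in general (e.g.\ $L(X)=(e_1^tXe_1)\,e_2e_2^t$ on $M_2$ is positive, non-null and nilpotent); what saves you is self-adjointness, since a non-null self-adjoint map has a nonzero eigenvalue. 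And in your Case B you should state explicitly that, the three pieces being $L$-invariant and spanning $VM_kV$, the components of any $\lambda$-eigenvector in the three pieces are themselves $\lambda$-eigenvectors or zero, which is what confines the $\lambda$-eigenspace to $R''$; with that sentence added, the case analysis is airtight.
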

\begin{proof}
Since $L$ is self-adjoint, the eigenvalues of $L$ are real numbers. Since $L:VM_kV\rightarrow VM_kV$ is a positive map, by theorem 2.5 in \cite{evans}, the spectral radius $\lambda$ is an eigenvalue and exists $\gamma\in P_k\cap VM_kV$ such that $L(\gamma)=\lambda\gamma$. Therefore the spectral radius is the biggest eigenvalue of $L$. Since $L$  is irreducible, the multiplicity of $\lambda$ is $1$ by proposition 2.3 in \cite{evans}. Let $V_1\in M_k$ be the orthogonal projection onto $\Im(\gamma)$. Notice that $\Im(V_1)\subset\Im(V)$. By the previous corollary $L(V_1M_kV_1)\subset V_1M_kV_1$. Since $L$ is irreducible then $V_1=V$ and $\Im(\gamma)=\Im(V)$.

Now, for the converse if $L(V_1M_kV_1)\subset V_1M_kV_1$, $\Im(V_1)\subset\Im(V)$, then the positive map $L:V_1M_kV_1\rightarrow V_1M_kV_1$ has an eigenvector $\gamma'\in  P_k\cap V_1M_kV_1$, by proposition 2.5 in \cite{evans}. If  $\Im(V_1)\neq\Im(V)$ then $\Im(\gamma')\neq\Im(\gamma)$ and $\gamma'$ is not a multiple of $\gamma$. Since the multiplicity of the biggest eigenvalue is $1$ then $\gamma'$ is associated to a different eigenvalue. Thus, $\gamma'$ is orthogonal to $\gamma$, since $L$ is self-adjoint. However, $\gamma'$ and $\gamma$ are positive semidefinite and $\Im(\gamma')\subset\Im(V_1)\subset\Im(V)=\Im(\gamma)$, thus they can not be orthogonal. Thus, $\Im(V_1)=\Im(V)$ and $V_1=V$. Therefore, $L$ is irreducible. 
\end{proof}

\begin{lemma}\label{lemmareduction} Let $L:VM_kV\rightarrow VM_kV$ be a self-adjoint positive map. Let us assume that $L$ has the decomposition property (definition \ref{definitiondecompositionproperty}). Let $V'M_kV'\subset VM_kV$ be such that $L(V'M_kV')\subset V'M_kV'$  then $L|_{V'M_kV'}$ has also the decomposition property. 
\end{lemma}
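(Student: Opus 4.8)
The plan is to show that the decomposition property passes to invariant hereditary subalgebras by transferring a positive eigenvector of the restriction $L':=L|_{V'M_kV'}$ back to the full map $L$, invoking the decomposition property of $L$ there, and then comparing the two off-diagonal complements. First I would check that $L'$ is again a self-adjoint positive map on $V'M_kV'$, so that Definition \ref{definitiondecompositionproperty} even applies to it: positivity follows from $P_k\cap V'M_kV'\subset P_k\cap VM_kV$ together with the hypothesis $L(V'M_kV')\subset V'M_kV'$, and self-adjointness is inherited because for $A,B\in V'M_kV'$ one has $\langle L'A,B\rangle=\langle LA,B\rangle=\langle A,LB\rangle=\langle A,L'B\rangle$.

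Next, I would fix any $\gamma'\in P_k\cap V'M_kV'$ with $L'(\gamma')=\lambda'\gamma'$ and $\lambda'>0$, and let $V_1'$ be the orthogonal projection onto $\Im(\gamma')$. Since $\gamma'\in V'M_kV'\subset VM_kV$ and $L'(\gamma')=L(\gamma')$, the matrix $\gamma'$ is also a positive eigenvector of the full map $L:VM_kV\rightarrow VM_kV$ with the same positive eigenvalue. As $L$ has the decomposition property, applying Definition \ref{definitiondecompositionproperty} to $\gamma'$ yields $L|_{R}\equiv 0$, where $R=(V-V_1')M_kV_1'\oplus V_1'M_k(V-V_1')$.

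It then remains to prove that the complement $R'=(V'-V_1')M_kV_1'\oplus V_1'M_k(V'-V_1')$ attached to $L'$ is contained in $R$; since $R'\subset V'M_kV'$ and $L'$ agrees with $L$ there, this gives $L'|_{R'}=L|_{R'}\equiv 0$, which is exactly the decomposition property for $L'$. Because $\Im(\gamma')\subset\Im(V')$ we have $V_1'\leq V'\leq V$ as orthogonal projections, so $V-V'$ and $V'-V_1'$ are mutually orthogonal, giving $V'-V_1'\leq V-V_1'$ and hence $(V-V_1')(V'-V_1')=V'-V_1'$. Thus $(V'-V_1')XV_1'=(V-V_1')(V'-V_1')XV_1'\in(V-V_1')M_kV_1'$ for every $X$, and symmetrically for the second block, so $R'\subset R$.

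The hard part will be nothing more than this projection bookkeeping, namely establishing $V'-V_1'\leq V-V_1'$ and the ensuing inclusion $R'\subset R$; the transfer of the eigenvector from $L'$ to $L$ and the invocation of the decomposition property of $L$ are immediate consequences of the invariance hypothesis $L(V'M_kV')\subset V'M_kV'$.
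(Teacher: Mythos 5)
Your proposal is correct and follows essentially the same route as the paper: take a positive eigenvector $\gamma'$ of the restriction, observe it is also an eigenvector of the full map $L$, invoke the decomposition property of $L$ to kill $R=(V-V_1')M_kV_1'\oplus V_1'M_k(V-V_1')$, and then use the projection identity $(V-V_1')(V'-V_1')=V'-V_1'$ to conclude $R'\subset R$, exactly as in the paper's argument. The only addition is your explicit verification that $L|_{V'M_kV'}$ is itself a self-adjoint positive map, which the paper leaves implicit.
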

\begin{proof} Let $\gamma\in  P_k\cap V'M_kV'$ be such that $L(\gamma)=\lambda\gamma, \lambda>0$. Since $L:VM_kV\rightarrow VM_kV$ has the decomposition property (definition \ref{definitiondecompositionproperty}) then $L|_{R}\equiv 0$, where $R=(V-V_1)M_kV_1\oplus V_1M_k(V-V_1)$  and $V_1\in M_k$ is the orthogonal projection such that $\Im(V_1)=\Im(\gamma)$. Notice that $\Im(V_1)=\Im(\gamma)\subset\Im(V')\subset\Im(V)$.

Now, consider $R'=(V'-V_1)M_kV_1\oplus V_1M_k(V'-V_1)$.
Since  $(V'-V_1)M_kV_1=(V-V_1)(V'-V_1)M_kV_1\subset (V-V_1)M_kV_1$ and $V_1M_k(V'-V_1)=V_1M_k(V'-V_1)(V-V_1)\subset V_1M_k(V-V_1)$ then $R'\subset R$ and $L|_{R'}\equiv 0$.
Thus, $L:V'M_kV'\rightarrow V'M_kV'$ has the decomposition property.
\end{proof}

\begin{proposition} \label{propositioncompletelyreducible} If $L:VM_kV\rightarrow VM_kV$ is a self-adjoint positive map then $L$ has the decomposition property if and only if $L$ is completely reducible. Moreover, the orthogonal projections $V_1,\ldots,V_s$ in definition \ref{definitioncompletelyreducible} are unique and $s\geq$ the multiplicity of the biggest eigenvalue of $L$.
\end{proposition}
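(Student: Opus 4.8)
The plan is to prove both implications separately, treating the forward direction (completely reducible $\Rightarrow$ decomposition property) as routine bookkeeping and the converse by strong induction on the rank of $V$.

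\textbf{Forward direction.} Suppose $L$ is completely reducible with blocks $V_1,\dots,V_s$ and remainder $R_0$ (the orthogonal complement of $\bigoplus_i V_iM_kV_i$, on which $L$ vanishes). Given $\gamma\in P_k\cap VM_kV$ with $L(\gamma)=\lambda\gamma$, $\lambda>0$, I would decompose $\gamma$ along the orthogonal splitting $VM_kV=\bigoplus_i V_iM_kV_i\oplus R_0$. Writing $\gamma=\sum_i V_i\gamma V_i+\rho$ with $\rho\in R_0$ and applying $L$, the relations $L|_{R_0}\equiv 0$ and $L(V_iM_kV_i)\subset V_iM_kV_i$ force $\lambda\rho=0$, hence $\rho=0$, and $L(V_i\gamma V_i)=\lambda V_i\gamma V_i$ for each $i$. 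Each $\gamma_i:=V_i\gamma V_i$ is a positive semidefinite eigenvector of the irreducible map $L|_{V_iM_kV_i}$; by corollary \ref{corollarywellknown} and irreducibility its support is $0$ or $V_i$, and comparing it with the Perron eigenvector of that block via self-adjointness (two positive definite eigenvectors for distinct eigenvalues cannot be orthogonal) yields $\Im(\gamma_i)=\Im(V_i)$ whenever $\gamma_i\neq0$. Thus the support projection of $\gamma$ is $W=\sum_{i\in I}V_i$ with $I=\{i:\gamma_i\neq 0\}$, and the off-diagonal space $R=(V-W)M_kW\oplus WM_k(V-W)$ consists entirely of cross blocks $V_iM_kV_j$ with $i\neq j$, so $R\subset R_0$ and $L|_R\equiv 0$, which is precisely the decomposition property.

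\textbf{Converse.} I would argue by strong induction on the rank of $V$. If $L=0$ the statement is trivial ($s=0$, $R=VM_kV$). If $L\neq 0$ let $\lambda>0$ be its spectral radius; by theorem 2.5 of \cite{evans} there is $\gamma\in P_k\cap VM_kV$ with $L(\gamma)=\lambda\gamma$, and if $L$ is already irreducible we are done with $s=1$, $V_1=V$. Otherwise I claim one can split $V$ into two nonzero orthogonal projections whose corners are $L$-invariant with $L$ vanishing on the cross terms. Indeed, for the support projection $P$ of any positive semidefinite eigenvector with positive eigenvalue, corollary \ref{corollarywellknown} gives $L(PM_kP)\subset PM_kP$, the decomposition property gives $L\equiv 0$ on $R=(V-P)M_kP\oplus PM_k(V-P)$, and self-adjointness then forces $(V-P)M_k(V-P)$ to be invariant as well: for $X\in(V-P)M_k(V-P)$ and $Y$ lying in $PM_kP$ or in $R$ one has $\langle L(X),Y\rangle=\langle X,L(Y)\rangle=0$, so $L(X)\in(V-P)M_k(V-P)$.

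It remains to produce such a $P$ with $0\neq P\subsetneq V$ whenever $L$ is not irreducible, and this is the main obstacle. When $\Im(\gamma)\neq\Im(V)$ one simply takes $P$ equal to the support of $\gamma$. The delicate case is $\Im(\gamma)=\Im(V)$: by lemma \ref{lemmairreducible} non-irreducibility then forces the multiplicity of $\lambda$ to exceed $1$, so (using that a positive map sends Hermitian matrices to Hermitian matrices, whence the $\lambda$-eigenspace has a Hermitian basis) there is a Hermitian eigenvector $\delta\perp\gamma$ for $\lambda$. Since $\gamma$ is positive definite and $\mathrm{tr}(\gamma\delta)=0$, the matrix $\delta$ is not semidefinite, so $t_+=\sup\{t\ge 0:\gamma+t\delta\ge 0\}$ is finite and positive and $\gamma''=\gamma+t_+\delta$ is a nonzero positive semidefinite eigenvector for $\lambda$ whose support $P$ satisfies $0\neq P\subsetneq V$. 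In either case $P$ and $V-P$ give two proper invariant corners; by lemma \ref{lemmareduction} each inherits the decomposition property, so by induction each is completely reducible, and concatenating the two resulting families of projections—after checking that the total remainder (the two inner remainders together with $R$) is orthogonal to every block and annihilated by $L$—exhibits $L$ as completely reducible.

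\textbf{Uniqueness and the eigenvalue bound.} For the moreover part I would note that each block carries, by irreducibility and Perron--Frobenius, a positive definite eigenvector $\gamma_i$ with $\Im(\gamma_i)=\Im(V_i)$ and eigenvalue $\lambda_i>0$. The computation in the forward direction shows that the support of \emph{every} positive semidefinite eigenvector with positive eigenvalue is a sum $\bigoplus_{i\in I}\Im(V_i)$; hence the ranges $\Im(V_i)$ are exactly the minimal nonzero such supports, a description intrinsic to $L$, which forces the family $V_1,\dots,V_s$ to be unique. Finally, since within block $i$ the top eigenvalue $\lambda_i$ is simple and every other eigenvalue of that block is strictly smaller, the multiplicity of the biggest eigenvalue $\lambda=\max_i\lambda_i$ of $L$ equals $\#\{i:\lambda_i=\lambda\}\le s$, giving $s\ge$ the multiplicity of the biggest eigenvalue.
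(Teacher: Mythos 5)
Your proof is correct, but the hard direction runs on a different engine from the paper's. The easy implication (completely reducible $\Rightarrow$ decomposition property) matches the paper's argument--decompose the eigenvector along the blocks--and your explicit elimination of the remainder component via $\lambda\rho=0$ is, if anything, more careful than the paper's. For the converse, the paper picks a positive semidefinite eigenvector of \emph{minimal rank} among those with positive eigenvalue; minimality makes $L$ restricted to its support corner irreducible (a proper invariant subcorner would contain, by the Evans--H{\o}egh-Krohn theorem, an eigenvector of strictly smaller rank), so each induction step peels off one irreducible block and induction is applied only to the complementary corner. You instead take an arbitrary Perron eigenvector $\gamma$ and, in the full-support non-irreducible case, invoke lemma \ref{lemmairreducible} to get multiplicity $\geq 2$, extract a Hermitian eigenvector $\delta\perp\gamma$, and slide $\gamma+t\delta$ to the boundary of the positive semidefinite cone to manufacture a \emph{singular} positive semidefinite eigenvector, then induct on both corners. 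This boundary-perturbation trick is a genuine alternative to the minimal-rank selection: it costs you the multiplicity discussion but yields a clean symmetric split, whereas the paper's choice gets irreducibility of the first block for free. Two further deviations, both sound: you prove invariance of $(V-P)M_k(V-P)$ purely from self-adjointness and $L|_R\equiv 0$ (the paper uses positivity, via $tr(L(V-V_1)V_1)=0$ and lemma \ref{lemmawellknown}), and you prove uniqueness by characterizing the $\Im(V_i)$ as the minimal supports of positive semidefinite eigenvectors with positive eigenvalue, while the paper characterizes the $V_iM_kV_i$ as the only irreducible invariant corners. One imprecision worth fixing: $(V-W)M_kW\oplus WM_k(V-W)$ need not consist of cross blocks $V_iM_kV_j$, since when $\sum_i V_i\neq V$ part of $V-W$ lies outside every block; the correct (and sufficient) statement is that this space is orthogonal to every $V_iM_kV_i$, hence contained in $R_0$, which is exactly how the paper argues the corresponding step.
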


\begin{proof} First, suppose that $L$ has the decomposition property and let us prove that $L$ is completely reducible by induction on the rank of $V$. Notice that if $\text{rank}(V)=1$ then $\dim(VM_kV)=1$ and $L$ is irreducible on $VM_kV$. Thus, $L$ is completely reducible by definition \ref{definitioncompletelyreducible}. Let us assume that $\text{rank}(V)>1$.

Since $L$ is a positive map then  $S=\{\gamma,\ 0\neq\gamma\in  P_k\cap VM_kV,\ L(\gamma)=\lambda\gamma, \lambda>0 \}\neq \emptyset $, by proposition 2.5 in \cite{evans}. Let $\gamma\in S$ be such that $\text{rank}(\gamma)=\min\{\text{rank}(\gamma'),\ \gamma' \in S\}$.

By corollary \ref{corollarywellknown}, $L(V_1M_kV_1)\subset V_1M_kV_1$, where $V_1$ is the orthogonal projection onto $\Im(\gamma)$. 
Now, if  $L|_{V_1M_kV_1}$ is not irreducible then exists $V_1'M_kV_1'\subset V_1M_kV_1$ with $\text{rank}(V_1')<\text{rank}(V_1)$ and $L(V_1'M_kV_1')\subset V_1'M_kV_1'$.

By proposition 2.5 in \cite{evans}, exists $0\neq\delta\in  P_k\cap V_1'M_kV_1'$ such that $L(\delta)=\mu\delta$, $\mu>0$. However, $\text{rank}(\delta)\leq \text{rank}(V_1')<\text{rank}(V_1)=\text{rank}(\gamma)$. This is a  contradiction with the choice of $\gamma$.
Thus, $L|_{V_1M_kV_1}$ is irreducible. 

Now, if $\text{rank}(V_1)=\text{rank}(V)$ then $V_1=V$ and $L|_{VM_kV}$ is irreducible. Therefore, $L:VM_kV\rightarrow VM_kV$ is completely reducible by definition \ref{definitioncompletelyreducible}. 

Next, suppose $\text{rank}(V_1)<\text{rank}(V)$.
Since $L(V_1M_kV_1)\subset V_1M_kV_1$ and $L$ is self-adjoint then $L((V_1M_kV_1)^{\perp})\subset (V_1M_kV_1)^{\perp}$. Therefore, $tr(L(V-V_1) V_1)=0$. Since $L(V-V_1)$ and $V_1$ are positive semidefinite then $\Im(L(V-V_1))\subset\Im(V-V_1)$. By lemma \ref{lemmawellknown}, $L((V-V_1)M_k(V-V_1))\subset (V-V_1)M_k(V-V_1)$.

Notice that $L|_{(V-V_1)M_k(V-V_1)}$ is a self adjoint positive map with the decomposition property by lemma \ref{lemmareduction}. Since $\text{rank}(V-V_1)<\text{rank}(V)$, by induction on the rank, $L|_{(V-V_1)M_k(V-V_1)}$ is completely reducible.

Thus, there are orthogonal projections $V_2,\ldots,V_s\in M_k$ satisfying $V_iV_j= 0\ (i\neq j)$, $V_i(V-V_1)=V_i\ (i\geq2)$,  $(V-V_1)M_k(V-V_1)=V_2M_kV_2\oplus\ldots \oplus V_sM_kV_s \oplus \widetilde{R}$ with 
$\widetilde{R}\perp V_2M_kV_2\oplus\ldots \oplus V_sM_kV_s$, $L|_{V_iM_kV_i}$ is irreducible for $2\leq i\leq s$ and $L|_{\widetilde{R}}\equiv 0$. 

Now, since $L$  has the decomposition property then $VM_kV=V_1M_kV_1\oplus (V-V_1)M_k(V-V_1) \oplus R$, where  $L|_{R}\equiv 0$ and $R\perp V_1M_kV_1\oplus (V-V_1)M_k(V-V_1)$. 

Thus, we obtained $VM_kV=V_1M_kV_1\oplus V_2M_kV_2\oplus\ldots \oplus V_sM_kV_s \oplus \widetilde{R} \oplus R$ such that  $L|_{V_iM_kV_i}$ is irreducible for $1\leq i\leq s$ and $L|_{\widetilde{R}\oplus R}\equiv 0$. Notice that $V_iV_j=0$, for  $2\leq i\neq j\leq s$ and $V_1V_i=0$, for $2\leq i\leq s$, because $\Im(V_i)\subset\Im(V-V_1)$.

Notice that $\widetilde{R}\perp  V_2M_kV_2\oplus\ldots \oplus V_sM_kV_s$ and $\widetilde{R}\perp V_1M_kV_1$, because $\widetilde{R}\subset (V-V_1)M_k(V-V_1)$. Therefore  $\widetilde{R}\oplus R\perp V_1M_kV_1\oplus V_2M_kV_2\oplus\ldots \oplus V_sM_kV_s$ and $L|_{\widetilde{R}\oplus R}\equiv 0$. Thus, $L$ is completely reducible.

For the converse let us assume that $L$ is completely reducible and let us prove that $L$ has the decomposition property. Thus, $VM_kV=V_1M_kV_1\oplus\ldots \oplus V_sM_kV_s \oplus R$,  $R\perp V_1M_kV_1\oplus\ldots \oplus V_sM_kV_s$, $L(V_iM_kV_i)\subset V_iM_kV_i$, $L|_{V_iM_kV_i}$ is irreducible and $L|_R\equiv 0$.

Assume $L(\gamma')=\lambda\gamma'$, $\lambda>0$ and $\gamma'\in P_k\cap VM_kV$ and let $V'\in M_k$ be the orthogonal projection onto $\Im(\gamma')$. By corollary \ref{corollarywellknown}, we have $L(V'M_kV')\subset V'M_kV'$.

Notice that, $\gamma'=\gamma_1'+\ldots+\gamma_s'$, where $\gamma_i'\in V_iM_kV_i$. Now, since $\Im(\gamma_i')\subset\Im(V_i)$ and $\Im(V_i)\perp\Im(V_j)$, for $i\neq j$, then each $\gamma_i'\in P_k$. Since each $V_iM_kV_i$ is an invariant subspace of $L$ then we must also conclude that $L(\gamma_i')=\lambda\gamma_i'$. Notice that not all $\gamma_i'=0$. Assume without loss of generality that $\gamma'=\gamma_1'+\ldots+\gamma_m'$ and $\gamma_i'\neq 0$, for  $1\leq i\leq m\leq s$.   

Now, if for some $1\leq i\leq m$, $\Im(\gamma_i')\neq \Im(V_i)$  then $L|_{V_iM_kV_i}$ is not irreducible, by corollary \ref{corollarywellknown}, which is a contradiction. Therefore, $\Im(\gamma_i')=\Im(V_i)$ for $1\leq i\leq m$ and $V_1+\dots+V_m=V'$.
    
Next, $VM_kV=V'M_kV'\oplus (V-V')M_k(V-V')\oplus R'$, where $R'=(V-V')M_kV'\oplus V'M_k(V-V')$. Notice that $R'\perp V'M_kV'\oplus (V-V')M_k(V-V')$.  

Now, $V_1M_kV_1\oplus\ldots \oplus V_mM_kV_m\subset V'M_kV'$ and $V_{m+1}M_kV_{m+1}\oplus\ldots \oplus V_sM_kV_s\subset (V-V')M_k(V-V')$, therefore $R'\perp V_1M_kV_1\oplus\ldots \oplus V_sM_kV_s$ and $R'\subset R$. Therefore, $L|_{R'}\equiv 0$ and $L$ has the decomposition property by definition \ref{definitiondecompositionproperty}.

Finally, if $L:VM_kV\rightarrow VM_kV$ is a self adjoint completely reducible map then the non-null eigenvalues of $L$ are the non-null eigenvalues of $L|_{V_iM_kV_i}$. Since $L|_{V_iM_kV_i}$  is irreducible then the
multiplicity of the biggest eigenvalue is $1$ by lemma \ref{lemmairreducible}. Therefore each $L|_{V_iM_kV_i}$ has at most one biggest eigenvalue of $L$. Thus, $s\geq$ the multiplicity of the biggest eigenvalue of $L:VM_kV\rightarrow VM_kV$. Now, if $L(V''M_kV'')\subset V''M_kV''$ and $L|_{V''M_kV''}$ is irreducible then by lemma \ref{lemmairreducible}, there is $\gamma''\in P_k\cap V''M_kV''$ such that 
$L(\gamma'')=\lambda\gamma''$, $\lambda>0$ and $\Im(\gamma'')=\Im(V'')$.
As we noticed in the second part of this proof, 
there is  $V_{i}M_kV_{i}\subset V''M_kV''$ ($V''$ is a sum of some $V_i$'s). Since   $L(V_{i}M_kV_{i})\subset V_iM_kV_{i}$ then $L|_{V''M_kV''}$ is irreducible if and only if  $V''=V_i$, for some $1\leq i\leq s$.
\end{proof}

\section{Two Applications in Quantum Information Theory}

Let $A\in M_k\otimes M_m\simeq M_{km}$, $A\in P_{km}$. Thus, $A$ has a Hermitian decomposition $\sum_{i=1}^n A_i\otimes B_i$ (see \cite{cariello}). Let  $F_A: M_m\rightarrow M_k$ be $F_{A}(X)=\sum_{i=1}^ntr(B_iX)A_i$ and $G_A: M_k\rightarrow M_m$ be $G_{A}(X)=\sum_{i=1}^ntr(A_iX)B_i$.  
 These maps are adjoints with respect to the trace inner product (Since $A_i,B_i$ are Hermitian matrices then $F_A(Y^*)=F_A(Y)^*$, for every $Y\in M_m$. Notice that if $X\in M_k$ and $Y\in M_m$ then $tr(A(X \otimes Y^*))=tr(G_A(X)Y^*)=tr(XF_A(Y^*))=tr(XF_A(Y)^*)$). 
 
 These maps are also positive maps and $F_A\circ G_A:M_k\rightarrow M_k$ is a self-adjoint positive map. Notice that if $\{\gamma_1,\ldots,\gamma_{k^2}\}$ is an orthonormal basis of $M_k$ formed by Hermitian matrices then $A=\sum_{i=1}^{k^2}\gamma_i\otimes G_A(\gamma_i) $.

In this section, we assume that $F_A\circ G_A:M_k\rightarrow M_k$ is completely reducible. This is a very strong restriction. However, in the next section, we shall prove that if $A$ is PPT or SPC or invariant under realignment then $F_A\circ G_A:M_k\rightarrow M_k$ is indeed completely reducible.  We begin this section with a simple lemma that provides an equivalent way to prove that $F_A\circ G_A:M_k\rightarrow M_k$ is completely reducible. This lemma shall be used in the next section in order to prove these theorems. 

Here, we assume that $F_A\circ G_A:M_k\rightarrow M_k$ is completely reducible and we give two applications in Quantum Information Theory. The first application is the reduction of the separability problem to the weakly irreducible case (corollary \ref{corollarysplitdecomposition}) and the second is proposition \ref{propositioneigenvalues1} which grants the separability of $A$, if $F_A\circ G_A:M_k\rightarrow M_k$ has only eigenvalues 1 or 0. These theorems were proved in \cite{cariello}, when $A$ is PPT or SPC. Thus, we extend this theorems for matrices invariant under realignment.

In section 5, we present our last application concerning mutually unbiased bases using this proposition \ref{propositioneigenvalues1} for a matrix invariant under realignment (see proposition \ref{mainapplication} and theorem \ref{thelastbasis}).

\begin{lemma} \label{lemmaequivalence} Let $A\in M_k\otimes M_m\simeq M_{km}$, $A\in P_{km}$.  Thus, $F_A\circ G_A:M_k\rightarrow M_k$ is completely reducible if and only if for every  $\gamma\in P_{k}$ such that $F_A\circ G_A(\gamma)=\lambda\gamma$, $\lambda>0$, we have $A=(V_1\otimes W_1)A(V_1\otimes W_1)+(Id-V_1\otimes Id-W_1)A(Id-V_1\otimes Id-W_1),$ where $V_1\in M_k, W_1\in M_m$ are orthogonal projections onto $\Im(\gamma),\Im(G_A(\gamma))$, respectively.
\end{lemma}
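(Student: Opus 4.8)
The plan is to connect the completely reducible property of $F_A\circ G_A$ with the decomposition property (Definition \ref{definitiondecompositionproperty}), which by Proposition \ref{propositioncompletelyreducible} is equivalent to it, and then translate the condition ``$L|_R\equiv 0$'' into the stated matrix equation for $A$. The key bridge is the relationship between the map $L=F_A\circ G_A$ and the compressions of $A$ by tensor products of projections. I expect the main obstacle to be establishing the precise dictionary between the block $R=(V-V_1)M_kV_1\oplus V_1M_k(V-V_1)$ (here $V=Id$) being annihilated by $L$ and the vanishing of the ``cross terms'' of $A$ under the compression by $V_1\otimes Id$ and $Id\otimes W_1$.

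First I would recall from Proposition \ref{propositioncompletelyreducible} that, since $F_A\circ G_A$ is a self-adjoint positive map, it is completely reducible if and only if it has the decomposition property. By Definition \ref{definitiondecompositionproperty}, this means: for every $\gamma\in P_k$ with $F_A\circ G_A(\gamma)=\lambda\gamma$, $\lambda>0$, and $V_1$ the orthogonal projection onto $\Im(\gamma)$, we have $L|_R\equiv 0$ where $R=(Id-V_1)M_kV_1\oplus V_1M_k(Id-V_1)$. So the task reduces to proving the equivalence, for each such fixed $\gamma$, between $L|_R\equiv 0$ and the claimed identity for $A$. I would fix one such $\gamma$ and set $W_1$ to be the orthogonal projection onto $\Im(G_A(\gamma))$; note that by Lemma \ref{lemmawellknown} applied to the positive map $G_A$, we have $G_A(V_1M_kV_1)\subset W_1M_mW_1$, and symmetrically $F_A(W_1M_mW_1)\subset V_1M_kV_1$ since $F_A=G_A^*$ sends $\Im(G_A(\gamma))$ back inside $\Im(\gamma)$.

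Next I would compute the quadratic form governing $L|_R$. The core identity I would establish is that for a Hermitian $X\in M_k$, the quantity $\langle F_A\circ G_A(X),X\rangle=\langle G_A(X),G_A(X)\rangle=\|G_A(X)\|^2$, and more generally the off-diagonal behaviour of $L$ on the block decomposition $M_k=V_1M_kV_1\oplus V_1M_k(Id-V_1)\oplus(Id-V_1)M_kV_1\oplus(Id-V_1)M_k(Id-V_1)$ is controlled by the way $G_A$ maps these blocks. The crucial point is that $L|_R\equiv 0$ forces $G_A$ to vanish on $R$, and since $F_A\circ G_A$ is self-adjoint positive, $\langle L(r),r\rangle=\|G_A(r)\|^2=0$ gives $G_A(r)=0$ for all $r\in R$. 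This means $A$ has no components mixing the subspace $\Im(V_1)\otimes\mathbb{C}^m$ with $\Im(Id-V_1)\otimes\mathbb{C}^m$ in the way that would contribute to the cross blocks; I would make this precise by writing $A=\sum_i\gamma_i\otimes G_A(\gamma_i)$ for an orthonormal Hermitian basis $\{\gamma_i\}$ adapted to the block decomposition and observing that $G_A(r)=0$ for $r\in R$ kills exactly the terms sitting in $(V_1\otimes Id)A(Id-V_1\otimes Id)$-type positions on the first factor.

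Finally I would assemble the identity. Writing $Id\otimes Id=(V_1\otimes Id)+((Id-V_1)\otimes Id)$ and similarly splitting the second factor by $W_1$ and $Id-W_1$, the compression $A=(V_1\otimes W_1)A(V_1\otimes W_1)+(Id-V_1\otimes Id-W_1)A(Id-V_1\otimes Id-W_1)$ asserts precisely that all mixed terms vanish. I would verify the forward direction by showing each vanishing cross term corresponds to $G_A(r)=0$ or $F_A(s)=0$ for appropriate $r\in R$ (on the first factor) and $s$ in the analogous block on the second factor, using that $F_A$ and $G_A$ are adjoint and that $\Im(G_A(\gamma))=\Im(W_1)$. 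For the converse, assuming the matrix identity, I would read off directly that $G_A$ annihilates $R$ and hence $L|_R=F_A\circ G_A|_R\equiv 0$, giving the decomposition property and thus complete reducibility. The delicate bookkeeping — keeping straight which projection ($V_1$ on $M_k$ versus $W_1$ on $M_m$) controls which factor, and confirming that the second-factor cross terms are forced to vanish as well and not merely the first-factor ones — is where I expect to spend the most care.
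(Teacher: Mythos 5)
Your proposal is correct and follows essentially the same route as the paper: reduce to the decomposition property via Proposition \ref{propositioncompletelyreducible}, use self-adjointness to get $G_A|_R\equiv 0$ from $F_A\circ G_A|_R\equiv 0$, expand $A=\sum_i \gamma_i\otimes G_A(\gamma_i)$ in a Hermitian orthonormal basis adapted to the block decomposition, and locate the image blocks with Lemma \ref{lemmawellknown}; the converse likewise matches (the matrix identity forces $G_A|_R\equiv 0$, hence the decomposition property). The one step you flag as delicate — that $G_A$ maps $(Id-V_1)M_k(Id-V_1)$ into $(Id-W_1)M_m(Id-W_1)$ — is handled in the paper by the orthogonality $\langle G_A(Id-V_1),G_A(\gamma)\rangle=0$ plus Lemma \ref{lemmawellknown}, which is the same adjointness-plus-positivity mechanism you describe.
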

\begin{proof}

Suppose $F_A\circ G_A:M_k\rightarrow M_k$ is completely reducible then
 $F_A\circ G_A:M_k\rightarrow M_k$ has the decomposition property (definition \ref{definitiondecompositionproperty}) by proposition \ref{propositioncompletelyreducible}. 

If $\gamma\in P_{k}$ is such that $F_A\circ G_A(\gamma)=\lambda\gamma$, $\lambda>0$, then $M_k=V_1M_kV_1\oplus (Id-V_1)M_k(Id-V_1)\oplus R$, where $R\perp V_1M_kV_1\oplus (Id-V_1)M_k(Id-V_1)$ and $F_A\circ G_A|_{R}\equiv 0$, where $V_1\in M_k$ is the orthogonal projection onto $\Im(\gamma)$.

Next, let $W_1\in M_m$ be the orthogonal projection onto the $\Im(G_A(\gamma))$. By lemma \ref{lemmawellknown}, we have $G_A(V_1M_kV_1)\subset W_1M_mW_1$, because $G_A$ is a positive map, since $A\in P_{km}$.

Now, $\langle G_A(Id-V_1), G_A(\gamma)\rangle=\langle Id-V_1, F_A\circ G_A(\gamma)\rangle=\lambda \langle Id-V_1, \gamma\rangle=0$. Since $G_A(Id-V_1)$ and $G_A(\gamma)$ are postive semidefinite then $\Im(G_A(Id-V_1))\perp\Im(G_A(\gamma))=\Im(W_1)$. Thus, $\Im(G_A(Id-V_1))\subset \Im(Id-W_1)$. Again by lemma \ref{lemmawellknown}, we have $G_A((Id-V_1)M_k(Id-V_1))\subset (Id-W_1)M_m(Id-W_1)$.

Now since $F_A\circ G_A|_{R}\equiv 0$ and $F_A$,$G_A$ are adjoint maps then $G_A|_{R}\equiv 0$.

Now, let $\{\gamma_1,\ldots,\gamma_r\}$ be an orthonormal basis of $V_1M_kV_1$ formed by Hermitian matrices, $\{\delta_1,\ldots,\delta_s\}$ be an orthonormal basis of $(Id-V_1)M_k(Id-V_1)$ formed by Hermitian matrices and $\{\alpha_1,\ldots,\alpha_t\}$ be an orthonormal basis of $R$ formed by Hermitian matrices. Then $A=\sum_{i=1}^r\gamma_i\otimes G_A(\gamma_i)+\sum_{i=1}^s\delta_i\otimes G_A(\delta_i)+\sum_{i=1}^s\alpha_i\otimes G_A(\alpha_i)$.
Since $G_A(\alpha_i)=0$ then $A=\sum_{i=1}^r\gamma_i\otimes G_A(\gamma_i)+\sum_{i=1}^s\delta_i\otimes G_A(\delta_i)$.

Now, since $\{\gamma_1,\ldots,\gamma_r\}\subset V_1M_kV_1$ then $G_A(\gamma_i)\in W_1M_mW_1$ and since 
 $\{\delta_1,\ldots,\delta_s\}\subset (Id-V_1)M_k(Id-V_1)$ then  $G_A(\delta_i)\in (Id-W_1)M_m(Id-W_1)$. Therefore, 
 $(V_1\otimes W_1)A(V_1\otimes W_1)=\sum_{i=1}^r\gamma_i\otimes G_A(\gamma_i)$ and $(Id-V_1\otimes Id-W_1)A(Id-V_1\otimes Id-W_1)=\sum_{i=1}^s\delta_i\otimes G_A(\delta_i)$ and $A=(V_1\otimes W_1)A(V_1\otimes W_1)+(Id-V_1\otimes Id-W_1)A(Id-V_1\otimes Id-W_1)$.
 
Now, for the converse, assume that if $\gamma\in P_{k}$ is such that $F_A\circ G_A(\gamma)=\lambda\gamma$, $\lambda>0$ then $A=(V_1\otimes W_1)A(V_1\otimes W_1)+(Id-V_1\otimes Id-W_1)A(Id-V_1\otimes Id-W_1),$ where $V_1, W_1$ are orthogonal projections onto $\Im(\gamma),\Im(G_A(\gamma))$, respectively.

Let $M_k=V_1M_kV_1\oplus (Id-V_1)M_k(Id-V_1)\oplus R$, $R\perp V_1M_kV_1\oplus (Id-V_1)M_k(Id-V_1)$.

Notice that $G_A|_{R}\equiv 0$ and $F_A\circ G_A|_{R}\equiv 0$. Therefore, $F_A\circ G_A:M_k\rightarrow M_k$ has the decomposition property (definition \ref{definitiondecompositionproperty}) and by proposition \ref{propositioncompletelyreducible}, $F_A\circ G_A:M_k\rightarrow M_k$ is completely reducible.
\end{proof}

\begin{definition}\label{definitionweaklyirreducible}  Let $A\in M_k\otimes M_m\simeq M_{km}$, $A\in P_{km}$. We say that $A$ is weakly irreducible if for every orthogonal projections $V_1,V_2\in M_k$ and $W_1,W_2\in M_m$ such that $V_2=Id-V_1$, $W_2=Id-W_1$ and $A=(V_1\otimes W_1) A(V_1\otimes W_1)+(V_2\otimes W_2) A(V_2\otimes W_2)$, we obtain $(V_1\otimes W_1) A(V_1\otimes W_1)=0$ or $(V_2\otimes W_2) A(V_2\otimes W_2)=0$.
\end{definition}

\begin{proposition}\label{propositionweaklyirred}  Let $A\in M_k\otimes M_m\simeq M_{km}$, $A\in P_{km}$. Let $\sum_{i=1}^n\lambda_i\gamma_i\otimes\delta_i$ be a Hermitian Schmidt decomposition of $A$ such that $\lambda_1\geq\lambda_2\geq\ldots\geq\lambda_n>0$. If $F_A\circ G_A:M_k\rightarrow M_k$ is completely reducible  then the following conditions are equivalent:
\begin{enumerate}
\item $A$ is weakly irreducible,
\item $s=1$ in definition \ref{definitioncompletelyreducible} with $L=F_A\circ G_A:M_k\rightarrow M_k$,
\item $\lambda_1>\lambda_2$ and $\Im(\gamma_i)\subset\Im(\gamma_1)$, $\Im(\delta_i)\subset\Im(\delta_1)$, for $1\leq i\leq n$.
\end{enumerate} 
\end{proposition}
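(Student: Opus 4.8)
The plan is to begin with the computation that makes the whole statement transparent: the given Hermitian Schmidt decomposition diagonalizes $F_A\circ G_A$. Since $\{\gamma_i\}$ and $\{\delta_i\}$ are orthonormal for the trace inner product and Hermitian, one checks directly that $G_A(\gamma_j)=\lambda_j\delta_j$ and $F_A(\delta_j)=\lambda_j\gamma_j$, so $F_A\circ G_A(\gamma_j)=\lambda_j^2\gamma_j$, while $F_A\circ G_A$ annihilates every Hermitian matrix orthogonal to $\mathrm{span}\{\gamma_i\}$. Hence the nonzero eigenvalues of $F_A\circ G_A$ are exactly the $\lambda_i^2$, the largest is $\lambda_1^2$, and its multiplicity is $\#\{i:\lambda_i=\lambda_1\}$. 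With this recorded I would prove the equivalence cyclically, $(1)\Rightarrow(2)\Rightarrow(3)\Rightarrow(1)$.

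For $(1)\Rightarrow(2)$ I argue by contradiction. If $s\geq 2$, take the block $V_1'M_kV_1'$ and its Perron eigenvector $\eta\in P_k$, which satisfies $\Im(\eta)=\Im(V_1')$ and has positive eigenvalue (lemma \ref{lemmairreducible}). Applying lemma \ref{lemmaequivalence} to $\gamma=\eta$ gives $A=(V_1'\otimes W')A(V_1'\otimes W')+(Id-V_1'\otimes Id-W')A(Id-V_1'\otimes Id-W')$, where the first summand is nonzero because $G_A(\eta)\neq 0$. Weak irreducibility then forces the second summand to vanish, so $A=(V_1'\otimes W')A(V_1'\otimes W')$, which makes $G_A$, hence $F_A\circ G_A$, vanish on the orthogonal complement of $V_1'M_kV_1'$. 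But a second block supplies a nonzero-eigenvalue eigenvector orthogonal to $V_1'M_kV_1'$, a contradiction; therefore $s=1$.

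For $(2)\Rightarrow(3)$, when $s=1$ the map is irreducible on the single block and zero on its complement, so every $\gamma_i$ (an eigenvector with nonzero eigenvalue) lies in that block; writing $V_1,W_1$ for the projections onto $\Im(\gamma_1),\Im(\delta_1)$ this gives $\Im(\gamma_i)\subset\Im(V_1)=\Im(\gamma_1)$, and applying $G_A$ together with lemma \ref{lemmawellknown} gives $\Im(\delta_i)\subset\Im(W_1)=\Im(\delta_1)$. Since the block is irreducible, its top eigenvalue $\lambda_1^2$ is simple (lemma \ref{lemmairreducible}), and as all $\gamma_i$ sit in the block this is the multiplicity for the whole map, i.e.\ $\lambda_1>\lambda_2$.

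The main obstacle is $(3)\Rightarrow(1)$, since weak irreducibility must be checked against \emph{every} splitting. Here $\lambda_1>\lambda_2$ makes the top eigenspace one dimensional, so the Perron eigenvector is a multiple of $\gamma_1$ and I may take $\gamma_1,\delta_1$ positive semidefinite; the image conditions then give $A=(V_1\otimes W_1)A(V_1\otimes W_1)$ with $\Im(A)\subset\Im(\gamma_1)\otimes\Im(\delta_1)$. Given any splitting $A=(P_1\otimes Q_1)A(P_1\otimes Q_1)+(P_2\otimes Q_2)A(P_2\otimes Q_2)$ with $P_2=Id-P_1$, $Q_2=Id-Q_1$, I expand $A=\sum A_i\otimes B_i$ to obtain the grading identity $G_A(X)=Q_1G_A(P_1XP_1)Q_1+Q_2G_A(P_2XP_2)Q_2$ and its $F_A$-analogue; these show $F_A\circ G_A$ maps each $P_jM_kP_j$ into itself and kills $P_1M_kP_2\oplus P_2M_kP_1$. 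Feeding $\gamma_1$ into this grading forces its off-diagonal part to vanish, and since $\lambda_1^2$ is simple at most one of $P_1\gamma_1P_1,P_2\gamma_1P_2$ is nonzero, so $\Im(\gamma_1)\subset\Im(P_1)$ or $\Im(\gamma_1)\subset\Im(P_2)$. In the first case $G_A(\gamma_1)=G_A(P_1\gamma_1P_1)\in Q_1M_mQ_1$ forces $\Im(\delta_1)\subset\Im(Q_1)$, whence $\Im(A)\subset\Im(P_1)\otimes\Im(Q_1)$ and the second summand is zero; the second case is symmetric. This establishes weak irreducibility and closes the cycle.
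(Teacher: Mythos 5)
Your proof is correct, and for the first two implications it is essentially the paper's own argument: you diagonalize $F_A\circ G_A$ via the Schmidt decomposition, use lemma \ref{lemmairreducible} to identify each irreducible block with the support of a positive semidefinite Perron eigenvector, and combine lemma \ref{lemmaequivalence} with weak irreducibility to force $s=1$ (the paper argues directly, you by contradiction on $s\geq 2$, but the ingredients are identical), and your $(2)\Rightarrow(3)$ coincides with the paper's. The genuine divergence is in $(3)\Rightarrow(1)$: the paper settles this implication by citing theorem 44 of \cite{cariello}, while you prove it from scratch. Your mechanism --- from any splitting $A=(P_1\otimes Q_1)A(P_1\otimes Q_1)+(P_2\otimes Q_2)A(P_2\otimes Q_2)$ derive the grading identity $G_A(X)=Q_1G_A(P_1XP_1)Q_1+Q_2G_A(P_2XP_2)Q_2$ and its $F_A$-analogue, conclude that $F_A\circ G_A$ preserves each $P_jM_kP_j$ and annihilates $P_1M_kP_2\oplus P_2M_kP_1$, then feed in $\gamma_1$ and use the simplicity of the eigenvalue $\lambda_1^2$ to force $\Im(\gamma_1)\subset\Im(P_j)$, hence $\Im(\delta_1)\subset\Im(Q_j)$ and $\Im(A)\subset\Im(P_j\otimes Q_j)$, killing the other summand --- is sound, and it buys two things: the proposition becomes self-contained (no appeal to \cite{cariello}), and it exhibits that $(3)\Rightarrow(1)$ needs neither the complete-reducibility hypothesis nor the positivity of $\gamma_1,\delta_1$, only condition $(3)$. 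What the paper's citation buys is brevity and the full strength of the external theorem. One small presentational point: in your $(2)\Rightarrow(3)$, the inclusion $\Im(\gamma_i)\subset\Im(\gamma_1)$ tacitly identifies the single block's projection with the projection onto $\Im(\gamma_1)$; this identification itself requires lemma \ref{lemmairreducible} (the block's Perron eigenvector spans the block's support and, by simplicity of the top eigenvalue, must be a scalar multiple of $\gamma_1$), which you invoke only at the end of that paragraph, so the two sentences should be swapped for the logic to read in order.
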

\begin{proof}
 Notice that $F_A\circ G_A(\gamma_i)=\lambda_i^2\gamma_i$ for $1\leq i \leq n$. Thus, the biggest eigenvalue of $F_A\circ G_A$ is $\lambda_1^2$. By definition \ref{definitioncompletelyreducible}, $M_k=V_1M_kV_1\oplus\ldots \oplus V_sM_kV_s \oplus R$, $F_A\circ G_A|_{V_iM_kV_i}$ is irreducible and $F_A\circ G_A|_{R}\equiv 0$, where $R\perp V_1M_kV_1\oplus\ldots \oplus V_sM_kV_s$. Since the eigenvalues of $F_A\circ G_A:M_k\rightarrow M_k$ are the eigenvalues of $F_A\circ G_A|_{V_iM_kV_i}$, for $1\leq i\leq s$, then $\lambda_1^2$ is the biggest eigenvalue of some $F_A\circ G_A|_{V_iM_kV_i}$. Without loss of generality we may assume that 
exists $0\neq\gamma\in P_{k}\cap V_1M_kV_1$ such that $F_A\circ G_A(\gamma)=\lambda_1^2\gamma$ and $\Im(\gamma)=\Im(V_1)$, by lemma \ref{lemmairreducible}. Thus, by lemma \ref{lemmaequivalence}, $A=(V_1\otimes W_1)A(V_1\otimes W_1)+(Id-V_1\otimes Id-W_1)A(Id-V_1\otimes Id-W_1),$ where $V_1, W_1$ are orthogonal projections onto $\Im(\gamma),\Im(G_A(\gamma))$, respectively.

First, let us assume that $A$ is weakly irreducible, then or $(V_1\otimes W_1)A(V_1\otimes W_1)=0$ or $(Id-V_1\otimes Id-W_1)A(Id-V_1\otimes Id-W_1)=0$. Notice that if $(V_1\otimes W_1)A(V_1\otimes W_1)=0$ then $A=(Id-V_1\otimes Id-W_1)A(Id-V_1\otimes Id-W_1)$ and $G_A(\gamma)=0$, since $\gamma\in V_1M_kV_1$. Therefore,  $0=F_A\circ G_A(\gamma)=\lambda_1^2\gamma$, which is a contradiction. Therefore $(Id-V_1\otimes Id-W_1)A(Id-V_1\otimes Id-W_1)=0$ and $A=(V_1\otimes W_1)A(V_1\otimes W_1)$. In this case, $G_A|_{(V_1M_k V_1)^{\perp}}\equiv 0$ and $F_A\circ G_A|_{(V_1M_k V_1)^{\perp}}\equiv 0$. Thus, $s=1$ in definition \ref{definitioncompletelyreducible}.

Second, suppose that $s=1$ in definition \ref{definitioncompletelyreducible} then $M_k=V_1M_k V_1\oplus R$, $F_A\circ G_A|_{V_1M_k V_1}$ is irreducible and $F_A\circ G_A|_{R}\equiv 0$, where $R=(V_1M_k V_1)^{\perp}$.
Thus,  $\gamma_i\in V_1M_kV_1$ for $1\leq i\leq n$, since  $F_A\circ G_A(\gamma_i)=\lambda_i^2\gamma_i$ and $F_A\circ G_A(M_k)=F_A\circ G_A(V_1M_kV_1)\subset V_1M_kV_1$.
By lemma \ref{lemmawellknown}, $G_A(V_1M_kV_1)\subset W_1M_mW_1$, since $\Im(G_A(\gamma))=\Im(W_1)$. Thus, $\lambda_i\delta_i=G_A(\gamma_i)\in W_1M_mW_1$ and $\Im(\delta_i)\subset\Im(W_1)$.

Next, since $F_A\circ G_A:V_1M_kV_1\rightarrow V_1M_kV_1$ is irreducible then the multiplicity of the biggest eigenvalue is $1$ by lemma \ref{lemmairreducible}, thus $\lambda_1^2>\lambda_2^2$ and $\lambda_1>\lambda_2$. Moreover, $\gamma$ must be a multiple of $\gamma_1$, because $F_A\circ G_A(\gamma_1)=\lambda_1^2\gamma_1$.
 
Thus, $G_A(\gamma)$ is also a multiple of $\delta_1$. Therefore, $\Im(\gamma_i)\subset\Im(V_1)=\Im(\gamma)=\Im(\gamma_1)$ and $\Im(\delta_i)\subset\Im(W_1)=\Im(G_A(\gamma))=\Im(\delta_1)$.

Finally, if $\lambda_1>\lambda_2$ and $\Im(\gamma_i)\subset\Im(\gamma_1)$, $\Im(\delta_i)\subset\Im(\delta_1)$, for $1\leq i\leq n$ then $A$ is weakly irreducible by theorem 44 in \cite{cariello}.
\end{proof}

\begin{proposition} \label{splitdecomposition} Let $A\in M_k\otimes M_m\simeq M_{km}$, $A\in P_{km}$. If $F_A\circ G_A:M_k\rightarrow M_k$ is completely reducible  then $A=\sum_{i=1}^s(V_i\otimes W_i)A(V_i\otimes W_i)$ such that 
\begin{enumerate}
\item $V_1,\ldots,V_s\in M_k$ are orthogonal projections such that $V_iV_j=0$
\item $W_1,\ldots,W_s\in M_m$ are orthogonal projections such that $W_iW_j=0$
\item $(V_i\otimes W_i)A(V_i\otimes W_i)$ is weakly irreducible and non-null for every $i$.
\item $s\geq$ multiplicity of the biggest eigenvalue of $F_A\circ G_A:M_k\rightarrow M_k$.
\end{enumerate}
\end{proposition}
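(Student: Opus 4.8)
The plan is to read off the blocks $V_i$ directly from complete reducibility, produce the matching projections $W_i$ from the eigenvectors, and then recognize each block as a weakly irreducible matrix via \lemref{propositionweaklyirred}. First I would invoke \lemref{propositioncompletelyreducible}: since $F_A\circ G_A$ is completely reducible, definition \ref{definitioncompletelyreducible} supplies orthogonal projections $V_1,\ldots,V_s\in M_k$ with $V_iV_j=0$ for $i\neq j$, a decomposition $M_k=V_1M_kV_1\oplus\cdots\oplus V_sM_kV_s\oplus R$ with each $F_A\circ G_A|_{V_iM_kV_i}$ irreducible and $F_A\circ G_A|_R\equiv 0$, and the bound $s\geq$ multiplicity of the biggest eigenvalue. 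The bound is exactly item (4), and the $V_i$ give item (1). For each $i$, \lemref{lemmairreducible} then furnishes $\gamma_i\in P_k\cap V_iM_kV_i$ with $F_A\circ G_A(\gamma_i)=\lambda_i\gamma_i$, $\lambda_i>0$, and $\Im(\gamma_i)=\Im(V_i)$; I define $W_i\in M_m$ to be the orthogonal projection onto $\Im(G_A(\gamma_i))$.

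Next I would prove $W_iW_j=0$ (item (2)). Because $V_iV_j=0$, the subspaces $V_iM_kV_i$ and $V_jM_kV_j$ are orthogonal, so $\langle \gamma_i,\gamma_j\rangle=0$; using that $F_A,G_A$ are adjoint together with the eigenvalue equation, $\langle G_A(\gamma_i),G_A(\gamma_j)\rangle=\langle \gamma_i,F_A\circ G_A(\gamma_j)\rangle=\lambda_j\langle\gamma_i,\gamma_j\rangle=0$. Since $G_A(\gamma_i)$ and $G_A(\gamma_j)$ are positive semidefinite, orthogonality in the trace inner product forces $\Im(G_A(\gamma_i))\perp\Im(G_A(\gamma_j))$, i.e. $W_iW_j=0$. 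For the decomposition $A=\sum_{i=1}^s(V_i\otimes W_i)A(V_i\otimes W_i)$, I would argue exactly as in \lemref{lemmaequivalence}: the relation $F_A\circ G_A|_R\equiv 0$ and adjointness give $G_A|_R\equiv 0$, while \lemref{lemmawellknown} gives $G_A(V_iM_kV_i)\subset W_iM_mW_i$ (since $\Im(G_A(\gamma_i))=\Im(W_i)$). Choosing an orthonormal Hermitian basis of $M_k$ adapted to $V_1M_kV_1\oplus\cdots\oplus V_sM_kV_s\oplus R$ and writing $A=\sum_\gamma \gamma\otimes G_A(\gamma)$, every basis vector from $R$ contributes $0$ and every basis vector $\gamma$ from $V_iM_kV_i$ contributes $\gamma\otimes G_A(\gamma)\in (V_i\otimes W_i)M_{km}(V_i\otimes W_i)$. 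As the blocks $V_i\otimes W_i$ are mutually orthogonal, this yields $A=\sum_i(V_i\otimes W_i)A(V_i\otimes W_i)$, with the $i$-th summand $A_i:=(V_i\otimes W_i)A(V_i\otimes W_i)=\sum_{\gamma\in\,\mathrm{basis}(V_iM_kV_i)}\gamma\otimes G_A(\gamma)$.

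Finally, for item (3) I would show each $A_i$ is weakly irreducible and non-null. The key identification is $F_{A_i}\circ G_{A_i}=F_A\circ G_A|_{V_iM_kV_i}$: writing $A_i=\sum_{\gamma}\gamma\otimes G_A(\gamma)$ over the basis of $V_iM_kV_i$, one has $G_{A_i}(X)=\sum_\gamma tr(\gamma X)\,G_A(\gamma)=G_A(X)$ for $X\in V_iM_kV_i$, since $\sum_\gamma tr(\gamma X)\gamma=X$ there; composing with $F_{A_i}$ and using adjointness and the invariance of $V_iM_kV_i$ returns $F_A\circ G_A|_{V_iM_kV_i}$. As this restriction is irreducible, $F_{A_i}\circ G_{A_i}$ is irreducible, hence completely reducible with a single block ($s=1$), so \lemref{propositionweaklyirred} gives that $A_i$ is weakly irreducible; and $A_i\neq 0$ because $G_A(\gamma_i)\neq 0$ (otherwise $F_A\circ G_A(\gamma_i)=0\neq\lambda_i\gamma_i$). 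I expect this last step to be the main obstacle: the routine but careful verification that $F_{A_i}\circ G_{A_i}$ really is the restriction of $F_A\circ G_A$, which is what licenses the appeal to \lemref{propositionweaklyirred}; everything else is bookkeeping mirroring \lemref{lemmaequivalence}.
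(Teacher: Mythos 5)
Your proposal is correct and takes essentially the same route as the paper's own proof: the same appeal to Proposition~\ref{propositioncompletelyreducible} and Lemma~\ref{lemmairreducible} to produce the blocks $V_i$ and eigenvectors $\gamma_i$, the same adjointness computation forcing $W_iW_j=0$, the same Hermitian-basis expansion of $A$ killing the $R$-part, and the same identification $F_{A_i}\circ G_{A_i}|_{V_iM_kV_i}=F_A\circ G_A|_{V_iM_kV_i}$ feeding into item (2) of Proposition~\ref{propositionweaklyirred}. The only difference is that you spell out the verification of this last identification, which the paper merely asserts.
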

\begin{proof}

Since $F_A\circ G_A:M_k\rightarrow M_k$ is completely reducible then $M_k=V_1M_kV_1\oplus\ldots \oplus V_sM_kV_s \oplus R$, $F_A\circ G_A(V_iM_kV_i)\subset V_iM_kV_i$, $F_A\circ G_A|_{V_iM_kV_i}$ is irreducible, $F_A\circ G_A|_{R}\equiv 0$ and $s\geq$ multiplicity of the biggest eigenvalue of $F_A\circ G_A:M_k\rightarrow M_k$, by proposition \ref{propositioncompletelyreducible}. 

By lemma \ref{lemmairreducible}, there is $\gamma_1^j\in P_k\cap V_jM_kV_j$, $1\leq j\leq s$, such that $\gamma_1^j$ is an eigenvector of $F_A\circ G_A:V_jM_kV_j\rightarrow V_jM_kV_j$ associated to the unique biggest eigenvalue and $\Im(\gamma_1^j)=\Im(V_j)$. Since $G_A$ is a positive map then $G_A(\gamma_1^j)\in P_m$.

By lemma \ref{lemmawellknown}, $G_A(V_jM_kV_j)\subset W_jM_kW_j$, where $W_j$ is the orthogonal projection onto $\Im(G_A(\gamma_1^j))$. Notice that $V_jM_kV_j\perp V_iM_kV_i$, for $i\neq j$, since $V_iV_j=0$. Therefore, $\langle G_A(\gamma_1^j),G_A(\gamma_1^i)\rangle=\langle \gamma_1^j,F_A\circ G_A(\gamma_1^i)\rangle=0$, for $i\neq j$. Thus, $W_iW_j=0$ for $i\neq j$.

Let $\{\gamma_1^j,\ldots,\gamma_{r_j}^j\}$ be an orthonormal basis of $V_jM_kV_j$ formed by Hermitian matrices. 
Let $\{\delta_1,\dots,\delta_r\}$  be an orthonormal basis of $R$  formed by Hermitian matrices.  Thus, 
$\bigcup_{j=1}^s\{\gamma_1^j,\ldots,\gamma_{r_j}^j\}\cup\{\delta_1,\dots,\delta_r\}$ is an orthonormal basis of $M_k$ formed by Hermitian matrices.
Let $A_j=\gamma_1^j\otimes G_A(\gamma_1^j)+\ldots+\gamma_{r_j}^j\otimes G_A(\gamma_{r_j}^j)$. 
Thus, $A=\sum_{j=1}^s A_j+\delta_1\otimes G_A(\delta_1)+\dots+\delta_r\otimes G_A(\delta_r)$

Now, since $F_A\circ G_A|_{R}\equiv 0$ and $F_A, G_A$ are adjoints then $G_A|_{R}\equiv 0$ and $A=\sum_{j=1}^s A_j$.

Next, $(V_j\otimes W_j)A(V_j\otimes W_j)=A_j$, since $\gamma_l^j\in V_jM_kV_j$, $G_A(\gamma_l^j)\in W_jM_mW_j$, $V_iV_j=0$ and $W_iW_j=0$ for $i\neq j$. Therefore, $A_j\in P_{km}$.

Notice that, $F_{A_j}\circ G_{A_j}|_{V_jM_kV_j}=F_A\circ G_A|_{V_jM_kV_j}$ which is irreducible. Therefore $A_j\neq 0$ for $1\leq j\leq s$. Next, $M_k=V_jM_kV_j\oplus (V_jM_kV_j)^{\perp}$ and $F_{A_j}\circ G_{A_j}((V_jM_kV_j)^{\perp})=0$.  Therefore, by definition \ref{definitioncompletelyreducible}, $F_{A_j}\circ G_{A_j}: M_k\rightarrow M_k$ is completely reducible with $s=1$ . Finally, by item 2 of proposition \ref{propositionweaklyirred}, $A_j$ is weakly irreducible.
\end{proof}

\begin{corollary}\label{corollarysplitdecomposition} Let $A$ be the matrix of proposition \ref{splitdecomposition}. Then $A$ is separable if and only if each $(V_i\otimes W_i)A(V_i\otimes W_i)$ is separable. Thus, for this type of $A$ the separability problem is reduced to the weakly irreducible case.
\end{corollary}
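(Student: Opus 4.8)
The plan is to leverage \thmref{splitdecomposition}, which already gives us the precise decomposition $A=\sum_{i=1}^s(V_i\otimes W_i)A(V_i\otimes W_i)$ into weakly irreducible summands $A_i=(V_i\otimes W_i)A(V_i\otimes W_i)$ supported on the mutually orthogonal local subspaces determined by the projections $V_i\in M_k$ and $W_i\in M_m$ with $V_iV_j=0$ and $W_iW_j=0$ for $i\neq j$. The key structural fact I would exploit is that the local supports are orthogonal in \emph{both} tensor factors simultaneously, so the summands $A_i$ occupy orthogonal blocks of $M_k\otimes M_m$ that do not interact. The proof is a direct two-directional argument.

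For the easy direction, I would assume each $A_i=(V_i\otimes W_i)A(V_i\otimes W_i)$ is separable. Since each $A_i$ is positive semidefinite (being a compression of the positive semidefinite $A$) and separable, and since separability is closed under finite sums, the sum $A=\sum_{i=1}^s A_i$ is separable. This uses only that a sum of separable matrices is separable, which is immediate from \ref{definitionseparability}: approximate each $A_i$ by sums of product matrices $\sum_j C^i_j\otimes D^i_j$ with $C^i_j,D^i_j$ positive semidefinite Hermitian, and add these approximations.

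For the converse, suppose $A$ is separable. The main point is to recover each summand $A_i$ from $A$ by a local operation that preserves separability. I would apply the compression map $X\mapsto(V_i\otimes W_i)X(V_i\otimes W_i)$ to $A$. Because $V_i$ and $W_i$ are orthogonal projections acting on the first and second tensor factors respectively, this map sends a product $C\otimes D$ with $C,D\succeq 0$ to $(V_iCV_i)\otimes(W_iDW_i)$, which is again a product of positive semidefinite Hermitian matrices; hence it preserves separability and continuity. Applying it to $A=\sum_{l=1}^s A_l$ and using the orthogonality relations $V_iV_l=0$, $W_iW_l=0$ for $l\neq i$ — which kill every cross term $(V_i\otimes W_i)A_l(V_i\otimes W_i)=0$ — I would obtain $(V_i\otimes W_i)A(V_i\otimes W_i)=A_i$. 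Thus each $A_i$ is the image of the separable matrix $A$ under a separability-preserving local map, so each $A_i$ is separable.

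The main obstacle, and the step deserving the most care, is verifying that the local compression $X\mapsto(V_i\otimes W_i)X(V_i\otimes W_i)$ genuinely preserves separability, including the closure (norm-approximation) aspect built into \ref{definitionseparability}; this requires noting both that it maps positive product matrices to positive product matrices and that it is norm-continuous, so limits of separable matrices map to separable matrices. Everything else reduces to the block-orthogonality bookkeeping inherited from \thmref{splitdecomposition}, which guarantees the cross terms vanish.
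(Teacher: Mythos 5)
Your proof is correct. The paper states this corollary without proof, treating it as an immediate consequence of Proposition \ref{splitdecomposition}, and your argument supplies exactly the intended details: the forward direction because sums of matrices approximable by positive product decompositions are again so approximable, and the converse because the local compression $X\mapsto (V_i\otimes W_i)X(V_i\otimes W_i)$ sends positive products to positive products, is norm-continuous (hence preserves separability as defined in \ref{definitionseparability}), and recovers each summand $A_i$ from $A$ thanks to the orthogonality relations $V_iV_j=0$, $W_iW_j=0$.
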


\begin{proposition}\label{propositioneigenvalues1}  Let $A\in M_k\otimes M_m\simeq M_{km}$, $A\in P_{km}$. If $F_A\circ G_A:M_k\rightarrow M_k$ is completely reducible with all eigenvalues equal to $1$ or $0$ then exists a unique Hermitian Schmidt decomposition of $A$, $\sum_{i=1}^n\gamma_i\otimes\delta_i$, such that $\gamma_i\in P_{k},\delta_i\in P_{m}$. Therefore, $A$ is separable.
\end{proposition}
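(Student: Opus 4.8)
The plan is to first produce the desired decomposition explicitly from the block structure of $F_A\circ G_A$, and then to use positivity to pin it down uniquely. Since $F_A\circ G_A:M_k\to M_k$ is completely reducible, proposition \ref{splitdecomposition} gives $A=\sum_{j=1}^s A_j$ with $A_j=(V_j\otimes W_j)A(V_j\otimes W_j)$ weakly irreducible and non-null, where $M_k=V_1M_kV_1\oplus\ldots\oplus V_sM_kV_s\oplus R$, each $F_A\circ G_A|_{V_jM_kV_j}$ is irreducible, and $F_A\circ G_A|_R\equiv 0$. I would first observe that on each block $V_jM_kV_j$ the map $F_A\circ G_A$ is irreducible, hence non-null, so its spectral radius cannot be $0$; as every eigenvalue is $1$ or $0$, the spectral radius equals $1$ and, by lemma \ref{lemmairreducible}, $1$ is a simple eigenvalue of $F_A\circ G_A|_{V_jM_kV_j}$ with a positive semidefinite eigenvector $\gamma_1^j$ satisfying $\Im(\gamma_1^j)=\Im(V_j)$.

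Next I would identify each summand $A_j$. Because $F_{A_j}\circ G_{A_j}$ agrees with $F_A\circ G_A$ on $V_jM_kV_j$ and vanishes on its orthogonal complement (as in the proof of proposition \ref{splitdecomposition}), its eigenvalues are those of $F_A\circ G_A$ on that block, all equal to $1$ or $0$, so its only nonzero eigenvalue is $1$ with multiplicity $1$. Recalling that the Schmidt coefficients $\lambda_i$ of $A_j$ satisfy $F_{A_j}\circ G_{A_j}(\gamma_i)=\lambda_i^2\gamma_i$, this forces $A_j$ to have Hermitian Schmidt rank $1$, namely $A_j=\gamma_1^j\otimes\delta^j$ with $\delta^j=G_A(\gamma_1^j)$. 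Since $G_A$ is a positive map, $\delta^j\in P_m$, while $\gamma_1^j\in P_k$ by the previous paragraph; normalizing $\|\gamma_1^j\|=1$ and using $\langle G_A(\gamma_1^i),G_A(\gamma_1^j)\rangle=\langle F_A\circ G_A(\gamma_1^i),\gamma_1^j\rangle=\langle\gamma_1^i,\gamma_1^j\rangle$ together with the mutual orthogonality of the supports $\Im(V_j)$ shows that both $\{\gamma_1^j\}$ and $\{\delta^j\}$ are orthonormal. Hence $A=\sum_{j=1}^s\gamma_1^j\otimes\delta^j$ is a Hermitian Schmidt decomposition with positive semidefinite factors, and $A$ is separable directly by definition \ref{definitionseparability}.

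For uniqueness, suppose $A=\sum_{i=1}^n\mu_i\otimes\nu_i$ is any Hermitian Schmidt decomposition with $\mu_i\in P_k$ and $\nu_i\in P_m$ (so the $\mu_i$ are orthonormal and the coefficients are $1$). Since the number of terms in a Schmidt decomposition is an invariant of $A$ (the number of nonzero singular values), $n=s$. Each $\mu_i$ satisfies $F_A\circ G_A(\mu_i)=\mu_i$, so $\mu_i$ lies in the eigenvalue-$1$ eigenspace, which by the block structure is exactly the real span of the $\gamma_1^j$; write $\mu_i=\sum_j c_{ij}\gamma_1^j$ with $c_{ij}\in\mathbb{R}$. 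Because the $\gamma_1^j$ have mutually orthogonal supports and are positive definite on them, positivity of $\mu_i$ forces $c_{ij}\ge 0$, while orthonormality of $\{\mu_i\}$ makes $(c_{ij})$ an $s\times s$ orthogonal matrix. An orthogonal matrix with nonnegative entries is a permutation matrix, so $\{\mu_i\}=\{\gamma_1^j\}$ as sets; since $\nu_i=G_A(\mu_i)$ is then determined, the decomposition is unique.

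I expect the uniqueness argument to be the main obstacle: existence of a positive Schmidt decomposition follows fairly mechanically from Perron-Frobenius and proposition \ref{splitdecomposition}, whereas uniqueness hinges on the two positivity observations — that the block-orthogonal structure forces nonnegative expansion coefficients, and that a nonnegative orthogonal matrix must be a permutation. The point requiring care is that all singular values coincide (they equal $1$), so the Schmidt decomposition is highly non-unique as a bare orthonormal factorization; it is precisely the positivity constraint $\mu_i,\nu_i\ge 0$ together with the rigid Perron eigenvector $\gamma_1^j$ of each irreducible block that removes this freedom.
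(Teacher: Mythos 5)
Your proof is correct, and while its existence half follows essentially the paper's own path, your uniqueness argument takes a genuinely different route. For existence, both arguments rest on the block structure coming from complete reducibility together with lemma \ref{lemmairreducible} (each irreducible block carries the simple eigenvalue $1$ with a positive semidefinite eigenvector of full support, and $G_A$ annihilates the orthogonal complement because $F_A$ and $G_A$ are adjoint); you merely package these ingredients through proposition \ref{splitdecomposition} and identify each weakly irreducible summand $A_j$ as Schmidt-rank one, where the paper writes $A=\sum_i \gamma_i\otimes G_A(\gamma_i)$ directly from the blocks --- a cosmetic difference. For uniqueness, the paper argues operator-algebraically: given a competing decomposition $\sum_i\gamma_i'\otimes\delta_i'$ with $\gamma_i'\in P_k$, it uses corollary \ref{corollarywellknown} and lemma \ref{lemmairreducible} to show that $F_A\circ G_A$ restricted to the support algebra of each $\gamma_i'$ is irreducible, and then invokes the uniqueness of the projections $V_1,\ldots,V_s$ in proposition \ref{propositioncompletelyreducible} to force each $\gamma_i'$ to coincide with some $\gamma_j$. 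You instead work inside the eigenvalue-$1$ eigenspace: expand $\mu_i=\sum_j c_{ij}\gamma_1^j$ with real coefficients, observe that positivity of $\mu_i$ together with the mutually orthogonal supports of the $\gamma_1^j$ forces $c_{ij}\ge 0$, and finish with the elementary fact that an orthogonal matrix with nonnegative entries is a permutation matrix. Your route is more elementary and self-contained once the $\gamma_1^j$ are in hand (it never reuses the Perron--Frobenius machinery), and it isolates a clean reusable linear-algebra fact; the paper's route instead exploits the rigidity clause built into proposition \ref{propositioncompletelyreducible} --- that the invariant support projections of a completely reducible map are unique --- and thereby yields as a by-product the stronger structural statement that any positive semidefinite Schmidt factor has support exactly equal to one of the canonical projections $V_j$. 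Both arguments are complete and reach the same conclusion.
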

\begin{proof} Suppose the multiplicity of the eigenvalue $1$ is $n$. Since $F_A\circ G_A:M_k\rightarrow M_k$ is completely reducible then there are orthogonal projections $V_1,\ldots,V_s$ such that $\Im(V_i)\perp\Im(V_j)$,
 $M_k=V_1M_kV_1\oplus\ldots \oplus V_sM_kV_s \oplus R$, $F_A\circ G_A(V_iM_kV_i)\subset V_iM_kV_i$, $F_A\circ G_A|_{V_iM_kV_i}$ is irreducible, $F_A\circ G_A|_{R}\equiv 0$ and $s\geq n$, by definition \ref{definitioncompletelyreducible} and proposition \ref{propositioncompletelyreducible}. Remind that 
each $F_A\circ G_A|_{V_iM_kV_i}$ has a unique biggest eigenvalue, since $F_A\circ G_A|_{V_iM_kV_i}$ is irreducible by lemma \ref{lemmairreducible}. Moreover, the eigenvalues of $F_A\circ G_A|_{V_iM_kV_i}$ are $1$ or $0$. Thus, $s=n$ and for each $F_A\circ G_A|_{V_iM_kV_i}$ exists a unique normalized eigenvector $\gamma_i\in P_k$ such that $F_A\circ G_A(\gamma_i)=\gamma_i$ and $\Im(\gamma_i)=\Im(V_i)$, by lemma \ref{lemmairreducible}.

Notice that $\Im(\gamma_i)=\Im(V_i)\perp\Im(V_j)=\Im(\gamma_j)$, therefore $\gamma_1,\ldots,\gamma_n$ are orthonormal.
Complete this set to obtain an orthonormal basis $\{\gamma_1,\ldots,\gamma_n,\gamma_{n+1},\ldots,\gamma_{k^2}\}$ of $M_k$ formed by Hermitian matrices. Notice that  $F_A\circ G_A(\gamma_j)=0$, for $j>n$. Since $F_A$ and $G_A$ are adjoint maps, $G_A(\gamma_j)=0$ for $j>n$. 

Thus, $A=\gamma_1\otimes G_A(\gamma_1)+\ldots+\gamma_{k^2}\otimes G_A(\gamma_{k^2})= \gamma_1\otimes G_A(\gamma_1)+\ldots+\gamma_{n}\otimes G_A(\gamma_{n})$. Notice that $\langle G_A(\gamma_i), G_A(\gamma_j)\rangle=\langle \gamma_i, F_A\circ G_A(\gamma_j)\rangle=\langle \gamma_i, \gamma_j\rangle$, $1\leq i,j\leq n$, therefore $G_A(\gamma_1),\ldots, G_A(\gamma_n)$ are orthonormal too. Remind that $G_A$ is a positive map then $G_A(\gamma_i)\in P_m$. Define $\delta_i=G_A(\gamma_i)$.

Finally, if  $\sum_{i=1}^n\gamma_i'\otimes\delta_i'$ is a Hermitian Schmidt decomposition with $\gamma_i'\in P_{k},\delta_i'\in P_{m}$ then $F_A\circ G_A(\gamma_i')=\gamma_i'$. Thus, $F_A\circ G_A(V_i'M_kV_i')\subset V_i'M_kV_i'$, $1\leq i\leq n$, where $V_i'$ is the orthogonal projection onto $\Im(\gamma_i')$, by corollary \ref{corollarywellknown}. Notice that each $F_A\circ G_A|_{V_i'M_kV_i'}$ has one eigenvalue equal to 1 and the others equal to $0$. Thus, $F_A\circ G_A|_{V_i'M_kV_i'}$ is irreducible by lemma \ref{lemmairreducible}. Now, each $V_i'$ must be equal to some $V_j$, by proposition \ref{propositioncompletelyreducible}. 

Since each $F_A\circ G_A|_{V_jM_kV_j}=F_A\circ G_A|_{V_i'M_kV_i'}$ has only one eigenvalue equal to 1 then $\gamma_i'$ is a multiple of $\gamma_j$, but both matrices are positive semidefinite Hermitian matrices and normalized then $\gamma_i'=\gamma_j$. Thus, each $\gamma_i'$ is equal to some $\gamma_j$ and this Hermitian Schmdit decomposition is unique.
\end{proof}

\section{Examples and Counterexamples.}

In this section, we prove that if $A\in M_{k}\otimes M_m$ is positive under partial transposition (definition \ref{defPPT}) or symmetric with positive coefficients (definition \ref{defSPC}) or invariant under realignment (definition \ref{defmatricesinvariant}) then $F_A\circ G_A: M_k\rightarrow M_k$ is completely reducible. Thus, the theorems proved in the previous section hold for these three types of matrices. 
We shall search in section 6 for other types of matrices that could have the same property.

Here, we also present two examples of positive semidefinite Hermitian matrices in $M_{k}\otimes M_k\simeq M_{k^2}$ such that  $F_A\circ G_A: M_k\rightarrow M_k$ is not completely reducible.

\subsection{Examples}

Remind that $P_k$ stands for the set of positive semidefinite Hermitian matrices in $M_k$. Let $A\in M_k\otimes M_m\simeq M_{km}$, $A\in P_{km}$ and $A=\sum_{i=1}^n A_i\otimes B_i$. Let  $F_A: M_m\rightarrow M_k$ be $F_{A}(X)=\sum_{i=1}^ntr(B_iX)A_i$ and $G_A: M_k\rightarrow M_m$ be $G_{A}(X)=\sum_{i=1}^ntr(A_iX)B_i$. Recall that $tr(A(X\otimes Y))=tr(G_A(X)Y)=tr(XF_A(Y))$, for every $X\in M_k$ and $Y\in M_m$.  These are adjoint positive maps (see the introduction of section 3 for more details). Thus, $F_A\circ G_A:M_k\rightarrow M_k$ is a self-adjoint positive map.

\begin{theorem}\label{decomposablepropertyPPT}  Let $A\in M_k\otimes M_m\simeq M_{km}$, $A\in P_{km}$. If $A$ is PPT then $F_A\circ G_A:M_k\rightarrow M_k$ is completely reducible.
\end{theorem}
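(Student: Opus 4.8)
The plan is to invoke the equivalence criterion in \lemref{lemmaequivalence}: to show $F_A\circ G_A:M_k\rightarrow M_k$ is completely reducible it suffices to prove that for every $\gamma\in P_k$ with $F_A\circ G_A(\gamma)=\lambda\gamma$, $\lambda>0$, we have the splitting $A=(V_1\otimes W_1)A(V_1\otimes W_1)+(Id-V_1\otimes Id-W_1)A(Id-V_1\otimes Id-W_1)$, where $V_1,W_1$ are the orthogonal projections onto $\Im(\gamma)$ and $\Im(G_A(\gamma))$ respectively. So fix such a $\gamma$ and set $V_2=Id-V_1$, $W_2=Id-W_1$. What must be ruled out are the two cross terms $(V_1\otimes W_2)A(\cdot)$ and $(V_2\otimes W_1)A(\cdot)$ (together with their adjoints), i.e. I must show that $A$ has no ``off-block'' mass relative to the decomposition $\mathbb{C}^k\otimes\mathbb{C}^m=(\Im V_1\otimes\Im W_1)\oplus(\Im V_2\otimes\Im W_2)\oplus(\text{rest})$.

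\medskip

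First I would record the positivity structure that the PPT hypothesis gives. From the eigenvalue equation and the adjointness of $F_A,G_A$, one has $\langle G_A(V_2),G_A(\gamma)\rangle=\langle V_2,F_A G_A(\gamma)\rangle=\lambda\langle V_2,\gamma\rangle=0$, and since $G_A$ is positive both $G_A(\gamma)$ and $G_A(V_2)$ are positive semidefinite, forcing $\Im(G_A(V_2))\perp\Im(G_A(\gamma))=\Im(W_1)$; hence $G_A(V_2M_kV_2)\subset W_2M_mW_2$ by \lemref{lemmawellknown}. Likewise $G_A(V_1M_kV_1)\subset W_1M_mW_1$. This already confines the diagonal blocks correctly; the content is to kill the blocks $(V_1\otimes W_2)A$ and $(V_2\otimes W_1)A$. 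The idea is to compute a suitable inner product of $A$, or of the PPT matrix $A^{t_2}$, against a rank-one test operator of the form $(\text{vector in }\Im V_1\otimes\Im W_2)(\cdots)^*$ and exploit that both $A\ge 0$ and $A^{t_2}\ge 0$. Concretely I expect to use that for $x\in\Im V_1\otimes\Im W_2$ and $y\in\Im V_2\otimes\Im W_1$ the partial transpose swaps the $M_m$-index, moving the cross block into a position controlled by the diagonal blocks, and then a Cauchy--Schwarz / positivity argument (of the kind $|\langle Au,v\rangle|^2\le \langle Au,u\rangle\langle Av,v\rangle$ applied to both $A$ and $A^{t_2}$) shows the off-block entries must vanish whenever the corresponding diagonal entries vanish.

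\medskip

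The cleanest route is probably to work with $A^{t_2}$ directly. Writing $A=\sum\gamma_i\otimes G_A(\gamma_i)$ in an orthonormal Hermitian basis adapted to $V_1M_kV_1\oplus V_2M_kV_2\oplus R$, the relation $G_A|_R\equiv 0$ (which follows from $F_A G_A|_R\equiv 0$ and adjointness, exactly as in the proof of \lemref{lemmaequivalence}) removes the $R$-contribution, so $A=\sum_i \gamma_i^{(1)}\otimes G_A(\gamma_i^{(1)})+\sum_j \delta_j\otimes G_A(\delta_j)$ with the first sum supported in $V_1\otimes W_1$ and the second in $V_2\otimes W_2$. The cross terms are absent by construction once one verifies there is no component of $A$ linking $\Im V_1$ on the left with $\Im V_2$ on the left; this is where PPT must enter, since for a merely positive $A$ the off-diagonal coupling $\gamma_i\otimes G_A(\gamma_i)$ with $\gamma_i$ straddling both blocks need not vanish. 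I anticipate the main obstacle is precisely this: showing that the eigenvector $\gamma$ together with $A\ge 0$ and $A^{t_2}\ge 0$ forces each $\gamma_i$ to respect the block decomposition (equivalently, that $(V_1\otimes W_2)A(V_2\otimes W_1)^*$-type terms vanish). The trick I would reach for is to apply the positivity of $A^{t_2}$ to vectors of the form $V_1 a\otimes W_2 b$ and combine it with $G_A(V_2M_kV_2)\subset W_2M_mW_2$ to deduce that $(V_1\otimes Id)A(V_2\otimes Id)$ has no surviving part after partial transposition, closing the argument. Everything else is the routine bookkeeping of orthonormal Hermitian bases and the adjoint relation between $F_A$ and $G_A$.
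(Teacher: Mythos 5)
Your overall strategy---reducing to \lemref{lemmaequivalence} and exploiting positivity of both $A$ and $A^{t_2}$---is the same as the paper's, and your derivation of $G_A(V_1M_kV_1)\subset W_1M_mW_1$ and $G_A(V_2M_kV_2)\subset W_2M_mW_2$ is correct. But the proposal has two genuine gaps. The first is a circularity: in your third paragraph you discard the $R$-contribution by asserting $G_A|_R\equiv 0$, ``which follows from $F_A\circ G_A|_R\equiv 0$ and adjointness, exactly as in the proof of \lemref{lemmaequivalence}.'' But $F_A\circ G_A|_R\equiv 0$ is exactly the decomposition property (definition \ref{definitiondecompositionproperty}) for the eigenvector $\gamma$, i.e.\ the statement the theorem is asking you to prove; in the proof of \lemref{lemmaequivalence} it is available only because complete reducibility is there a \emph{hypothesis}. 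You cannot invoke it here.

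The second gap is that the step where PPT actually enters is never carried out, and the blocks you single out as the obstruction are the wrong ones. The blocks $(V_1\otimes W_2)A$ and $(V_2\otimes W_1)A$ (and hence all terms such as $(V_1\otimes W_2)A(V_2\otimes W_1)$) are killed by positivity of $A$ alone: $tr(A(V_1\otimes W_2))=tr(G_A(V_1)W_2)=0$ together with $A\in P_{km}$ and $V_1\otimes W_2\in P_{km}$ forces $A(V_1\otimes W_2)=(V_1\otimes W_2)A=0$, and similarly $tr(A(V_2\otimes W_1))=tr(V_2F_A(W_1))=0$ kills the other one. What survives this argument is $A=\sum_{i,j=1}^2(V_i\otimes W_i)A(V_j\otimes W_j)$, i.e.\ the possible coupling $(V_1\otimes W_1)A(V_2\otimes W_2)$ between the two diagonal corners; this is the \emph{only} place the PPT hypothesis is needed, and your sketch never addresses it concretely. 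The paper kills it by partially transposing the identities above: $0=(A(V_1\otimes W_2))^{t_2}=(Id\otimes W_2^t)A^{t_2}(V_1\otimes Id)$, whose trace gives $tr(A^{t_2}(V_1\otimes W_2^t))=0$, so positivity of $A^{t_2}$ yields $A^{t_2}(V_1\otimes W_2^t)=(V_1\otimes W_2^t)A^{t_2}=0$, and analogously $A^{t_2}(V_2\otimes W_1^t)=0$; expanding $A^{t_2}$ in the sixteen blocks, these annihilation relations leave $A^{t_2}=\sum_{i=1}^2(V_i\otimes W_i^t)A^{t_2}(V_i\otimes W_i^t)$, and partially transposing back gives $A=\sum_{i=1}^2(V_i\otimes W_i)A(V_i\otimes W_i)$, which is what \lemref{lemmaequivalence} requires. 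Your ``Cauchy--Schwarz on $A$ and $A^{t_2}$'' gesture is in the right spirit, but without identifying the surviving diagonal-corner coupling and this transpose--trace--positivity chain, the argument is incomplete.
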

\begin{proof} Let $\gamma\in P_k\cap VM_kV$ be such that $F_A( G_A(\gamma))=\lambda\gamma$, $\lambda>0$. Let $V_1\in M_k$ be the orthogonal projection onto $\Im(\gamma)$.
Let $W_1\in M_m$ be the orthogonal projection onto $\Im(G_A(\gamma))$. 

By lemma \ref{lemmawellknown}, we have $G_A(V_1M_kV_1)\subset W_1M_mW_1$ and $F_A(W_1M_mW_1)\subset V_1M_kV_1$.

If $V_2=Id-V_1$ and $W_2=Id-W_1$ then $A=\sum_{i,j,k,s=1}^2 (V_i\otimes W_j)A(V_k\otimes W_s)$.

Notice that $tr(A (V_1\otimes W_2))=tr(G_A(V_1)W_2)=0$. Thus,  $A (V_1\otimes W_2)=(V_1\otimes W_2)A=0$, since $A\in P_{km}$ and $V_1\otimes W_2\in P_{km}$.
 Notice that $tr(A (V_2\otimes W_1))=tr(V_2F_A(W_1))=0$. Thus, $A (V_2\otimes W_1)=(V_2\otimes W_1)A=0$, since $A\in P_{km}$ and $V_2\otimes W_1\in P_{km}$.
 
Therefore,  $A=\sum_{i,j=1}^2 (V_i\otimes W_i)A(V_j\otimes W_j)$.

Next, $0=(A (V_1\otimes W_2))^{t_2}=(Id\otimes W_2^t)A^{t_2}(V_1\otimes Id)$ and $0=tr((Id\otimes W_2^t)A^{t_2}(V_1\otimes Id))=tr(A^{t_2}(V_1\otimes W_2^{t}))$. Since $A$ is PPT then $A^{t_2}$ is positive semidefinite and $A^{t_2}(V_1\otimes W_2^{t})=(V_1\otimes W_2^{t})A^{t_2}=0$.  Analogously, we obtain $A^{t_2}(V_2\otimes W_1^{t})=(V_2\otimes W_1^{t})A^{t_2}=0$. 

Thus, $A^{t_2}=\sum_{i,j=1}^2 (V_i\otimes W_j^t)A^{t_2}(V_j\otimes W_i^t)$ and $A^{t_2}=\sum_{i=1}^2 (V_i\otimes W_i^t)A^{t_2}(V_i\otimes W_i^t)$.

Finally, $A=\sum_{i=1}^2 (V_i\otimes W_i)A(V_i\otimes W_i)$ and by lemma \ref{lemmaequivalence}, $F_A\circ G_A:M_k\rightarrow M_k$ is completely reducible.
\end{proof}

\begin{theorem}\label{decomposablepropertySPC} Let $A\in M_k\otimes M_k\simeq M_{k^2}$, $A\in P_{k^2}$. If $A$ is SPC then $F_A\circ G_A:M_k\rightarrow M_k$ is completely reducible.
\end{theorem}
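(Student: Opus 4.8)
The plan is to reduce the SPC case to the already-established PPT case (Theorem \ref{decomposablepropertyPPT}) by exploiting the characterization in Lemma \ref{SPCequiv}, namely that $A$ is SPC if and only if $S(A^{t_2})$ is positive semidefinite. First I would observe that the completely reducible property of $F_A\circ G_A$ should be insensitive to applying the realignment map to $A^{t_2}$, because what really governs the decomposition is the pair of adjoint positive maps $F_A, G_A$ together with the positivity of both $A$ and a suitable transform of $A$. So the strategy is to mimic the proof of Theorem \ref{decomposablepropertyPPT} line by line, replacing the role played there by $A^{t_2}$ (whose positivity came from the PPT hypothesis) with the matrix $S(A^{t_2})$ (whose positivity now comes from the SPC hypothesis via Lemma \ref{SPCequiv}).

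Concretely, I would start exactly as before: take $\gamma\in P_k$ with $F_A(G_A(\gamma))=\lambda\gamma$, $\lambda>0$, let $V_1,W_1$ be the orthogonal projections onto $\Im(\gamma)$ and $\Im(G_A(\gamma))$, and set $V_2=Id-V_1$, $W_2=Id-W_1$. The same positivity arguments using Lemma \ref{lemmawellknown} and the adjointness of $F_A,G_A$ give $A(V_1\otimes W_2)=A(V_2\otimes W_1)=0$, so that $A=\sum_{i,j=1}^2(V_i\otimes W_i)A(V_j\otimes W_j)$, with the off-diagonal-in-the-refined-sense cross terms $(V_1\otimes W_1)A(V_2\otimes W_2)$ and its conjugate still to be killed. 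The crux is therefore to show these two remaining cross blocks vanish, and this is where I would invoke $S(A^{t_2})\geq 0$ instead of $A^{t_2}\geq 0$.

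The main obstacle, and the step I expect to require genuine care, is translating the block-annihilation conditions $A(V_1\otimes W_2)=0$ and $A(V_2\otimes W_1)=0$ through the composite map $S\circ(\cdot)^{t_2}$ so that they become trace/positivity conditions on $S(A^{t_2})$ that force the cross terms to vanish. The tool for this is the conjugation rule for $S$ under left/right multiplication by tensors, property $(3)$ of Lemma \ref{propertiesofS}, namely $S((V\otimes W)A(M\otimes N))=(V\otimes M^t)S(A)(W^t\otimes N)$, combined with property $(4)$ or $(5)$ relating $S$, the flip, and partial transposition. I would apply $(\cdot)^{t_2}$ and then $S$ to the relations $A(V_1\otimes W_2)=0$, convert them into statements that certain corner blocks of $S(A^{t_2})$ are zero, take traces against the appropriate positive projections, and use $S(A^{t_2})\geq 0$ to conclude that the corresponding symmetric blocks of $S(A^{t_2})$ vanish, exactly as positivity of $A^{t_2}$ was used in the PPT proof.

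Finally, once I have deduced that $S(A^{t_2})=\sum_{i=1}^2(V_i\otimes W_i^{\,?})S(A^{t_2})(V_i\otimes W_i^{\,?})$ for the correctly transformed projections, I would undo the transformations by applying $S$ again (using $S^2=Id$, property $(2)$) and then the partial transpose, to recover $A=\sum_{i=1}^2(V_i\otimes W_i)A(V_i\otimes W_i)$. At that point Lemma \ref{lemmaequivalence} applies verbatim and yields that $F_A\circ G_A:M_k\rightarrow M_k$ is completely reducible. The bookkeeping of which projection gets transposed or flipped under each application of the identities in Lemma \ref{propertiesofS} is the delicate part; everything else is a faithful copy of the PPT argument.
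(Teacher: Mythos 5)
Your overall strategy coincides with the paper's: invoke Lemma \ref{SPCequiv} to get $S(A^{t_2})\geq 0$, rerun the PPT argument with $S(A^{t_2})$ playing the role of $A^{t_2}$, transport the block relations through property $(3)$ of Lemma \ref{propertiesofS}, and conclude with Lemma \ref{lemmaequivalence}. However, there is a genuine gap exactly at the step you defer as ``delicate bookkeeping,'' and it cannot be fixed by bookkeeping alone: you keep $V_1$ and $W_1$ as possibly distinct projections, and with distinct projections the realigned relations do not annihilate the cross terms. Concretely, from $A(V_1\otimes W_2)=0$ and $A(V_2\otimes W_1)=0$ you get, via $(\cdot)^{t_2}$ and property $(3)$, the relations $S(A^{t_2})(W_2\otimes V_1^t)=0$ and $S(A^{t_2})(W_1\otimes V_2^t)=0$; note how realignment mixes the $V$'s and $W$'s across the two tensor factors. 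On the other hand, applying $S\circ(\cdot)^{t_2}$ to $A=\sum_{i,j=1}^2(V_i\otimes W_i)A(V_j\otimes W_j)$ gives
$S(A^{t_2})=\sum_{i,j=1}^2(V_i\otimes V_j^t)\,S(A^{t_2})\,(W_j\otimes W_i^t)$,
whose cross terms are $(V_1\otimes V_2^t)S(A^{t_2})(W_2\otimes W_1^t)$ and $(V_2\otimes V_1^t)S(A^{t_2})(W_1\otimes W_2^t)$. Neither vanishing relation matches either factor of either cross term unless $V_i=W_i$, so the PPT proof does not transfer ``line by line.''

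The missing idea --- and it is the opening step of the paper's proof --- is that the SPC hypothesis forces $W_1=V_1$. For SPC $A=\sum_{i=1}^n\lambda_i\gamma_i\otimes\gamma_i$, the map $G_A(X)=\sum_{i=1}^n\lambda_i tr(\gamma_iX)\gamma_i$ is itself a self-adjoint operator on $M_k$ with non-negative eigenvalues, so any eigenvector of $F_A\circ G_A=G_A^2$ with eigenvalue $\lambda^2>0$ is an eigenvector of $G_A$: $G_A(\gamma)=\lambda\gamma$, hence $\Im(G_A(\gamma))=\Im(\gamma)$ and $W_1=V_1$ (in particular the whole proof runs with a single family of projections $V_i$ and their transposes). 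With $W_i=V_i$, the vanishing relations become $S(A^{t_2})(V_2\otimes V_1^t)=0$ and $S(A^{t_2})(V_1\otimes V_2^t)=0$, which exactly kill the cross terms $(V_1\otimes V_2^t)S(A^{t_2})(V_2\otimes V_1^t)$ and $(V_2\otimes V_1^t)S(A^{t_2})(V_1\otimes V_2^t)$; then $S^2=Id$ and property $(3)$ recover $A=\sum_{i=1}^2(V_i\otimes V_i)A(V_i\otimes V_i)$, and Lemma \ref{lemmaequivalence} applies. Once you insert this observation, your plan becomes precisely the paper's proof of Theorem \ref{decomposablepropertySPC}.
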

\begin{proof} If $A$ is SPC then $A=\sum_{i=1}^n\lambda_i\gamma_i\otimes\gamma_i$ is a Hermitian Schmidt decomposition of $A$, with $\lambda_i>0$, by definition \ref{defSPC}. Thus, $G_A: M_k\rightarrow M_k$, $G_{A}(X)=\sum_{i=1}^n\lambda_i\gamma_itr(\gamma_iX)$, is self-adjoint with non-negative eigenvalues  (the eigenvalues are $\lambda_i$ or $0$), which implies that the eigenvectors of $G_A$ and $G_A^2=F_A\circ G_A$ are the same.

Let $\gamma\in P_k\cap VM_kV$ be such that $F_A( G_A(\gamma))=\lambda^2\gamma$, $\lambda>0$. Thus, $G_A(\gamma)=\lambda\gamma$. Let $V_1\in M_k$ be the orthogonal projection onto $\Im(\gamma)$. 
By lemma \ref{lemmawellknown}, we have $G_A(V_1M_kV_1)\subset V_1M_kV_1$.

If $V_2=Id-V_1$ then $A=\sum_{i,j,k,s=1}^2 (V_i\otimes V_j)A(V_k\otimes V_s)$.

Notice that $tr(A (V_1\otimes V_2))=tr(G_A(V_1)V_2)=0$. Thus,  $A (V_1\otimes V_2)=(V_1\otimes V_2)A=0$, since $A\in P_{k^2}$ and $V_1\otimes V_2\in P_{k^2}$.
 Notice that $tr(A (V_2\otimes V_1))=tr(V_2F_A(V_1))=0$. Thus, $A (V_2\otimes V_1)=(V_2\otimes V_1)A=0$, since $A\in P_{k^2}$ and $V_2\otimes V_1\in P_{k^2}$.

Therefore,  $A=\sum_{i,j=1}^2 (V_i\otimes V_i)A(V_j\otimes V_j)$.

Next, $0=(A (V_1\otimes V_2))^{t_2}=(Id\otimes V_2^t)A^{t_2}(V_1\otimes Id)$ and $0=S((Id\otimes V_2^t)A^{t_2}(V_1\otimes Id))=(Id\otimes V_1^t)S(A^{t_2})(V_2\otimes Id)$, by property 3 of lemma \ref{propertiesofS}.

Now, $0=tr((Id\otimes V_1^t)S(A^{t_2})(V_2\otimes Id))=tr(S(A^{t_2})(V_2\otimes V_1^t))$. Since $S(A^{t_2})$ is positive semidefinite, by lemma \ref{SPCequiv}, $S(A^{t_2})(V_2\otimes V_1^t)=(V_2\otimes V_1^t)S(A^{t_2})=0$.  Analogously, we obtain $S(A^{t_2})(V_1\otimes V_2^t)=(V_1\otimes V_2^t)S(A^{t_2})=0$.

Thus, $A^{t_2}=\sum_{i,j=1}^2 (V_i\otimes V_j^t)A^{t_2}(V_j\otimes V_i^t)$ and $S(A^{t_2})=\sum_{i,j=1}^2 (V_i\otimes V_j^t)S(A^{t_2})(V_j\otimes V_i^t)=\sum_{i=1}^2 (V_i\otimes V_i^t)S(A^{t_2})(V_i\otimes V_i^t)$, by property 3 of lemma \ref{propertiesofS}.

Finally, $A^{t_2}=S^2(A^{t_2})=\sum_{i=1}^2 (V_i\otimes V_i^t)S^2(A^{t_2})(V_i\otimes V_i^t)=\sum_{i=1}^2 (V_i\otimes V_i^t)A^{t_2}(V_i\otimes V_i^t)$, by properties 2 and 3 of lemma \ref{propertiesofS}. Therefore,
$A= \sum_{i=1}^2 (V_i\otimes V_i)A(V_i\otimes V_i)$. Thus, by lemma \ref{lemmaequivalence}, $F_A\circ G_A:M_k\rightarrow M_k$ is completely reducible.
\end{proof}

\begin{theorem}\label{decomposablepropertyInvariant} Let $A\in M_k\otimes M_k\simeq M_{k^2}$, $A\in P_{k^2}$. If $A$ is invariant under realignment then $F_A\circ G_A:M_k\rightarrow M_k$ is completely reducible.
\end{theorem}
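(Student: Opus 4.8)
The plan is to follow the template of \thmref{decomposablepropertyPPT} and \thmref{decomposablepropertySPC} and verify the hypothesis of \lemref{lemmaequivalence}. So I fix $\gamma\in P_k$ with $F_A\circ G_A(\gamma)=\mu\gamma$, $\mu>0$, let $V_1,W_1\in M_k$ be the orthogonal projections onto $\Im(\gamma),\Im(G_A(\gamma))$, and set $V_2=Id-V_1$, $W_2=Id-W_1$. The goal is to establish $A=(V_1\otimes W_1)A(V_1\otimes W_1)+(V_2\otimes W_2)A(V_2\otimes W_2)$.

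First I would pin down the relation between $V_1$ and $W_1$, which is where realignment invariance enters. By \lemref{Formatinvariant}, $A=\sum_{i=1}^n\lambda_i\gamma_i\otimes\gamma_i^t$ with $\{\gamma_i\}$ Hermitian and orthonormal and $\lambda_i>0$. A direct computation using $tr(\gamma_i^t\gamma_j^t)=tr(\gamma_i\gamma_j)=\delta_{ij}$ gives $F_A\circ G_A(X)=\sum_i\lambda_i^2\gamma_i\,tr(\gamma_i X)$, so the eigenvector $\gamma$ lies in $\mathrm{span}\{\gamma_i:\lambda_i^2=\mu\}$, and feeding this back in yields $G_A(\gamma)=\sqrt{\mu}\,\gamma^t$. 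Since $\gamma$ is Hermitian positive semidefinite, so is $\gamma^t=\overline{\gamma}$, and $\Im(G_A(\gamma))=\Im(\gamma^t)=\overline{\Im(\gamma)}$; hence $W_1=\overline{V_1}=V_1^t$ and likewise $W_2=V_2^t$. This is the analogue, for the invariant case, of the identity $V_1=W_1$ exploited in \thmref{decomposablepropertySPC}.

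Next I would run the positivity argument verbatim. By \lemref{lemmawellknown}, $G_A(V_1M_kV_1)\subset W_1M_kW_1$ and $F_A(W_1M_kW_1)\subset V_1M_kV_1$, so $tr(A(V_1\otimes W_2))=tr(G_A(V_1)W_2)=0$ and $tr(A(V_2\otimes W_1))=tr(V_2F_A(W_1))=0$; positivity of $A$ forces $A(V_1\otimes W_2)=(V_1\otimes W_2)A=0$ and $A(V_2\otimes W_1)=(V_2\otimes W_1)A=0$. Writing $P_{ab}=V_a\otimes V_b^t$ (so that $V_i\otimes W_i=P_{ii}$ and $V_1\otimes W_2=P_{12}$, $V_2\otimes W_1=P_{21}$), these relations read $P_{12}A=AP_{12}=0$ and $P_{21}A=AP_{21}=0$; thus the only surviving blocks of $A$ have left and right projections in $\{P_{11},P_{22}\}$, i.e. $A=\sum_{i,j=1}^2 P_{ii}AP_{jj}$, and in particular $P_{12}AP_{12}=P_{21}AP_{21}=0$.

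The decisive step is to exploit $A=S(A)$ to annihilate the remaining cross terms $P_{11}AP_{22}$ and $P_{22}AP_{11}$. Applying property $(3)$ of \lemref{propertiesofS} to each surviving summand gives $S\bigl((V_i\otimes V_i^t)A(V_j\otimes V_j^t)\bigr)=(V_i\otimes V_j^t)S(A)(V_i\otimes V_j^t)=(V_i\otimes V_j^t)A(V_i\otimes V_j^t)$, whence $A=S(A)=\sum_{i,j=1}^2 P_{ij}AP_{ij}$. Since the four projections $P_{ij}$ are mutually orthogonal and sum to the identity, the equality $A=\sum_{i,j}P_{ij}AP_{ij}$ forces every off-diagonal block $P_{ab}AP_{cd}$ with $(a,b)\neq(c,d)$ to vanish, so $P_{11}AP_{22}=P_{22}AP_{11}=0$. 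Combining with the previous paragraph leaves $A=P_{11}AP_{11}+P_{22}AP_{22}=(V_1\otimes W_1)A(V_1\otimes W_1)+(V_2\otimes W_2)A(V_2\otimes W_2)$, and \lemref{lemmaequivalence} then gives that $F_A\circ G_A$ is completely reducible. The main obstacle, and the one genuinely new ingredient relative to the PPT and SPC cases, is precisely this realignment manipulation: one must notice that property $(3)$ turns the block family $\{V_i\otimes V_i^t\}$ coming from positivity into the orthogonal family $\{V_i\otimes V_j^t\}$, so that $S(A)=A$ kills the cross terms directly, without passing through $A^{t_2}$ and $S(A^{t_2})$ as in \thmref{decomposablepropertySPC}.
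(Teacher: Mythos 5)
Your proof is correct and follows essentially the same route as the paper's: both verify the hypothesis of Lemma~\ref{lemmaequivalence} by first using Lemma~\ref{Formatinvariant} to get $G_A(\gamma)=\sqrt{\mu}\,\gamma^t$ (hence $W_1=V_1^t$), then killing the blocks $V_1\otimes V_2^t$ and $V_2\otimes V_1^t$ by positivity, and finally applying property $(3)$ of Lemma~\ref{propertiesofS} together with $S(A)=A$ to eliminate the remaining cross terms. The only differences are cosmetic: the paper derives $G_A(\gamma)=\lambda\gamma^t$ by passing through the self-adjoint positive map $G_{A^{t_2}}$ rather than by expanding $\gamma$ in the basis $\{\gamma_i\}$, and it removes the $i\neq j$ terms of $\sum_{i,j}(V_i\otimes V_j^t)A(V_i\otimes V_j^t)$ using the annihilation relations already in hand rather than the mutual orthogonality of the four projections.
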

\begin{proof}
Let $A$ be invariant under realigment and 
 let $\sum_{i=1}^n\lambda_i\gamma_i\otimes\gamma_i^t$ be a Hermitian Schmidt decomposition of $A$, $\lambda_i>0$ for every $i$, by lemma \ref{Formatinvariant}. 
 
Notice that $A^{t_2}= \sum_{i=1}^n\lambda_i\gamma_i\otimes\gamma_i$ and $G_{A^{t_2}}: M_k\rightarrow M_k$, $G_{A^{t_2}}(X)=\sum_{i=1}^n\lambda_i\gamma_itr(\gamma_iX)$, is self-adjoint with non-negative eigenvalues (the eigenvalues are $\lambda_i$ or $0$), which implies that the eigenvectors of $G_{A^{t_2}}$ and $G_{A^{t_2}}^2=F_{A^{t_2}}\circ G_{A^{t_2}}$ are the same. Notice also that $F_{A^{t_2}}\circ G_{A^{t_2}}(X)=F_A\circ G_A(X)=\sum_{i=1}^n\lambda_i^2tr(\gamma_iX)\gamma_i$.

Let $\gamma\in P_k\cap VM_kV$ be such that $F_A\circ G_A(\gamma)=\lambda^2\gamma$, $\lambda>0$. Thus, $F_{A^{t_2}}\circ G_{A^{t_2}}(\gamma)=G_{A^{t_2}}^2(\gamma)=\lambda^2\gamma$ then $G_{A^{t_2}}(\gamma)=\lambda\gamma$. 

Next, $G_{A^{t_2}}(X)^t=G_A(X)$ then $G_A(\gamma)=\lambda\gamma^t$ and $F_A(\gamma^t)=\lambda\gamma$. Let $V_1\in M_k$ be the orthogonal projection onto $\Im(\gamma)$. By lemma \ref{lemmawellknown}, we have $G_A(V_1M_kV_1)\subset V_1^tM_kV_1^t$ and $F_A(V_1^tM_kV_1^t)\subset V_1M_kV_1$.
 Now, if $V_2=Id-V_1$ then $A=\sum_{i,j,k,s=1}^2 (V_i\otimes V_j^t)A(V_k\otimes V_s^t)$.

Notice that $tr(A (V_1\otimes V_2^t))=tr(G_A(V_1)V_2^t)=0$. Thus,  $A (V_1\otimes V_2^t)=(V_1\otimes V_2^t)A=0$, since $A\in P_{k^2}$ and $V_1\otimes V_2^t\in P_{k^2}$.
 Notice that $tr(A (V_2\otimes V_1^t))=tr(V_2F_A(V_1^t))=0$. Thus, $A (V_2\otimes V_1^t)=(V_2\otimes V_1^t)A=0$, since $A\in P_{k^2}$ and $V_2\otimes V_1^t\in P_{k^2}$.  
 
Therefore,  $A=\sum_{i,j=1}^2 (V_i\otimes V_i^t)A(V_j\otimes V_j^t)$.

Next, $S(A)=\sum_{i,j=1}^2 S((V_i\otimes V_i^t)A(V_j\otimes V_j^t))=\sum_{i,j=1}^2 (V_i\otimes V_j^t)S(A)(V_i\otimes V_j^t)$, by property 3 of lemma \ref{propertiesofS}.  

Since $A=S(A)$, we have $A=\sum_{i,j=1}^2 (V_i\otimes V_j^t)A(V_i\otimes V_j^t)=\sum_{i=1}^2 (V_i\otimes V_i^t)A(V_i\otimes V_i^t)$.

Therefore, by lemma \ref{lemmaequivalence}, $F_A\circ G_A:M_k\rightarrow M_k$ is completely reducible. 
\end{proof}

\subsection{Counterexamples}

\begin{lemma}\label{notdecomposable} Let $u\in\mathbb{C}^k\otimes\mathbb{C}^k$ be the vector defined in \ref{definition1} and $A=uu^t\in M_k\otimes M_k$. The linear transformation $F_{A}\circ G_{A}: M_k\rightarrow M_k$ is not completely reducible.
\end{lemma}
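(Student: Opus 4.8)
The plan is to compute $F_A\circ G_A$ explicitly for $A=uu^t$ and exhibit an eigenvector that violates the decomposition property from definition \ref{definitiondecompositionproperty}, thereby contradicting proposition \ref{propositioncompletelyreducible}. First I would unwind the structure of $A=uu^t$, where $u=\sum_{i=1}^k e_i\otimes e_i$. Writing $u=F(Id)$ via the map of definition \ref{definition1}, one sees that $A=uu^t$ corresponds under the tensor identification to a rank-one projection (up to scaling) in $\mathbb{C}^k\otimes\mathbb{C}^k$ built from the identity matrix. Concretely, $A=\sum_{i,j=1}^k (e_ie_j^t)\otimes(e_ie_j^t)$, which is a Hermitian decomposition with $A_{ij}=e_ie_j^t=B_{ij}$. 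From this one reads off $G_A(X)=\sum_{i,j} \mathrm{tr}(e_ie_j^t X)\, e_ie_j^t$, and a short computation should give $G_A(X)=X^t$ and likewise $F_A(Y)=Y^t$, so that $F_A\circ G_A(X)=(X^t)^t=X$ is the identity map on $M_k$.

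Once $F_A\circ G_A=\mathrm{Id}_{M_k}$ is established, the remaining step is to show the identity map fails the decomposition property. The key point is that every nonzero $\gamma\in P_k$ is an eigenvector with eigenvalue $\lambda=1>0$, so I am free to choose $\gamma$ of rank one, say $\gamma=e_1e_1^t$, giving $V_1=e_1e_1^t$. Then $R=(Id-V_1)M_kV_1\oplus V_1M_k(Id-V_1)$ is nonzero whenever $k\geq 2$, and since $F_A\circ G_A$ is the identity, $F_A\circ G_A|_R=\mathrm{Id}|_R\not\equiv 0$. This directly contradicts the requirement $L|_R\equiv 0$ in definition \ref{definitiondecompositionproperty}, so $F_A\circ G_A$ does not have the decomposition property, hence is not completely reducible by proposition \ref{propositioncompletelyreducible}.

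Alternatively, and perhaps more cleanly, I would argue via multiplicity: the identity map on $M_k$ has its unique eigenvalue $1$ with multiplicity $\dim(M_k)=k^2$, and every Hermitian matrix is an eigenvector. If $F_A\circ G_A$ were completely reducible, proposition \ref{propositioncompletelyreducible} forces $M_k=V_1M_kV_1\oplus\cdots\oplus V_sM_kV_s\oplus R$ with each restriction irreducible; but the identity map restricted to any $V_iM_kV_i$ of rank exceeding one has its biggest eigenvalue $1$ with multiplicity greater than one, contradicting lemma \ref{lemmairreducible} unless every $V_i$ has rank $1$. Rank-one blocks $V_iM_kV_i$ are one-dimensional and their direct sum spans only the diagonal part $\sum_i V_iM_kV_i$, leaving the off-diagonal entries (which are nonzero eigenvectors with eigenvalue $1$) unaccounted for in $R$, where the map must vanish --- again a contradiction for $k\geq 2$.

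The computation $F_A\circ G_A=\mathrm{Id}$ is the crux and I expect it to be the only genuine obstacle, though it is a routine verification; the cleanest route is to test both maps on the basis $\{e_ie_j^t\}$ of $M_k$ using $\mathrm{tr}(e_ie_j^t\, e_pe_q^t)=\delta_{jp}\delta_{iq}$, which collapses the double sum immediately. I would lead with this explicit identification of $A$ and the evaluation of $G_A$ and $F_A$ as transposition, then invoke either the decomposition-property contradiction or the multiplicity argument above to finish. Both arguments require only $k\geq 2$, which is implicit since $u\in\mathbb{C}^k\otimes\mathbb{C}^k$ is the genuinely entangled maximally correlated vector.
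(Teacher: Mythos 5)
Your proposal is correct and follows essentially the same route as the paper: compute $A=\sum_{i,j}e_ie_j^t\otimes e_ie_j^t$, deduce $G_A(X)=F_A(X)=X^t$ so that $F_A\circ G_A=\mathrm{Id}_{M_k}$, and then conclude via proposition \ref{propositioncompletelyreducible} that the identity map lacks the decomposition property. Your explicit choice of the rank-one eigenvector $\gamma=e_1e_1^t$ (and the alternative multiplicity argument) merely fills in the detail that the paper compresses into the remark that the identity has null kernel and every matrix as an eigenvector.
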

\begin{proof}

By definition \ref{definition1}, $u=\sum_{i=1}^ke_i\otimes e_i$, where $\{e_1,\ldots,e_k\}$ is the canonical basis of $\mathbb{C}^k$. Thus, $A=uu^t=\sum_{i,j=1}^ke_ie_j^t\otimes e_ie_j^t$ and $G_A(X)=F_A(X)=\sum_{i,j=1}^ke_ie_j^t tr(e_ie_j^tX)=X^t$.

Now,  the identity map $Id=F_A\circ G_A:M_k\rightarrow M_k$ has null kernel and every matrix is an eigenvector. Thus, $Id:M_k\rightarrow M_k$ does not have the decomposition property (definition \ref{definitiondecompositionproperty}) and $F_A\circ G_A$ is not completely reducible by proposition \ref{propositioncompletelyreducible}.
\end{proof}

\begin{lemma}\label{notdecomposable2} Let $k\geq 3$. Let  $v_1,e_3\in\mathbb{C}^k$ be such that $v_1^t=(\frac{1}{\sqrt{2}},\frac{i}{\sqrt{2}},0,\ldots,0)$ and $e_3^t=(0,0,1,0,\ldots,0)$. Consider $v=v_1\otimes \overline{v_1}+e_3\otimes e_3\in\mathbb{C}^k\otimes\mathbb{C}^k$. Let $A$ be the positive semidefinite Hermitian matrix $A=v\overline{v}^t+S(\overline{v}v^t)\in M_k\otimes M_k$ then $F_{A}\circ G_{A}: M_k\rightarrow M_k$ is not completely reducible.
\end{lemma}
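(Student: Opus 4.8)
The plan is to produce a single positive semidefinite eigenvector of $F_A\circ G_A$ for which the splitting in lemma \ref{lemmaequivalence} fails; by that lemma this is exactly what is needed to conclude that $F_A\circ G_A$ is not completely reducible. Set $P=v_1v_1^*$ and $Q=e_3e_3^*$; these are orthogonal rank-one projections since $v_1\perp e_3$, and $V:=F^{-1}(v)=P+Q$ is a rank-two projection. The decisive point, forced by the complex entries of $v_1$, is that $v_1^tv_1=\tfrac12+\bigl(\tfrac{i}{\sqrt2}\bigr)^2=0$, so that $\overline{v_1}\perp v_1$ and hence $P\overline{P}=\overline{P}P=0$. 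Combined with $PQ=QP=0$ this gives the identities $V\overline{V}=\overline{V}V=Q$ and $V\overline{V}V=\overline{V}V\overline{V}=Q$, which will drive every computation.

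First I would record closed forms for the two maps. Because the first summand of $A$ is $vv^*=F(V)F(V)^*$, a direct index computation yields $G_{vv^*}(X)=(VXV)^t$ and $F_{vv^*}(X)=VX^tV$ (for $V=Id$ this recovers $G(X)=X^t$ from lemma \ref{notdecomposable}). For the second summand, property $(1)$ of lemma \ref{propertiesofS} gives $S(\overline{v}v^t)=\overline{V}\otimes V$, hence $G_{\overline{V}\otimes V}(X)=tr(\overline{V}X)\,V$ and $F_{\overline{V}\otimes V}(X)=tr(VX)\,\overline{V}$. Adding the contributions and simplifying with $V^2=V$, $V\overline{V}V=Q$ and $\overline{V}V\overline{V}=Q$ produces the compact expression
\[
F_A\circ G_A(X)=VXV+tr(\overline{V}X)\,Q+\bigl(X_{33}+2\,tr(\overline{V}X)\bigr)\overline{V}.
\]

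Next I would check that $\gamma=P$ is the eigenvector I want. Since $VPV=P$, $tr(\overline{V}P)=tr(\overline{P}P)+tr(QP)=0$, and $P_{33}=0$, the displayed formula gives $F_A\circ G_A(P)=P$; thus $\gamma=P\in P_k$ is positive semidefinite with eigenvalue $\lambda=1>0$. Moreover $G_A(P)=(VPV)^t+tr(\overline{V}P)\,V=\overline{P}$, so in the notation of lemma \ref{lemmaequivalence} the relevant projections are $V_1=P$ (onto $\Im(\gamma)=\mathrm{span}(v_1)$) and $W_1=\overline{P}$ (onto $\Im(G_A(\gamma))=\mathrm{span}(\overline{v_1})$).

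Finally I would evaluate the two diagonal blocks of lemma \ref{lemmaequivalence}. Writing $a=v_1\otimes\overline{v_1}$ and $b=e_3\otimes e_3$, so $v=a+b$, one finds $(P\otimes\overline{P})v=a$ and $((Id-P)\otimes(Id-\overline{P}))v=b$, whence $(V_1\otimes W_1)A(V_1\otimes W_1)=aa^*=P\otimes\overline{P}$; and using $(Id-P)\overline{V}(Id-P)=\overline{V}$ together with $(Id-\overline{P})V(Id-\overline{P})=V$ one gets $((Id-V_1)\otimes(Id-W_1))A((Id-V_1)\otimes(Id-W_1))=bb^*+\overline{V}\otimes V=Q\otimes Q+\overline{V}\otimes V$. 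On the other hand $A=vv^*+\overline{V}\otimes V=P\otimes\overline{P}+Q\otimes Q+ab^*+ba^*+\overline{V}\otimes V$, so the sum of the two diagonal blocks differs from $A$ exactly by $ab^*+ba^*$, which is nonzero (its entry in the $(e_1\otimes e_1,\,e_3\otimes e_3)$ slot equals $\tfrac12$). Hence the splitting of lemma \ref{lemmaequivalence} fails for $\gamma=P$, and $F_A\circ G_A$ is not completely reducible. The only labor is the bookkeeping for $G_A,F_A$ and the block computation; the conceptual obstruction is the isotropy $v_1^tv_1=0$, which places $a=v_1\otimes\overline{v_1}$ in the $(V_1\otimes W_1)$ corner and $b=e_3\otimes e_3$ in the $(V_2\otimes W_2)$ corner, so that the cross terms of the rank-one matrix $vv^*$ link the two blocks and cannot be removed.
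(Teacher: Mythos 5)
Your proof is correct, and it rests on the same two pillars as the paper's own argument: the positive semidefinite eigenvector $\gamma_1=v_1\overline{v_1}^t$ (your $P$) with eigenvalue $1$, and the isotropy $v_1^tv_1=0$, which forces $P\overline{P}=\overline{P}P=0$. The verification route differs, however. The paper never computes $F_A\circ G_A$ in closed form and never touches the diagonal blocks of $A$; it writes $v\overline{v}^t=\sum_{i=1}^4\gamma_i\otimes\gamma_i^t$ with $\gamma_2=(e_3\overline{v_1}^t+v_1e_3^t)/\sqrt{2}$ and $\gamma_3=i(e_3\overline{v_1}^t-v_1e_3^t)/\sqrt{2}$, checks by a handful of trace orthogonalities that $\gamma_1$ and $\gamma_2$ are both fixed by $F_A\circ G_A$, and then observes that $\gamma_2$ lies in the corner $R=(Id-V_1)M_kV_1\oplus V_1M_k(Id-V_1)$, so the decomposition property (definition \ref{definitiondecompositionproperty}) fails and proposition \ref{propositioncompletelyreducible} concludes. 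You instead invoke lemma \ref{lemmaequivalence} and show that the two diagonal blocks of $A$ determined by $(V_1,W_1)=(P,\overline{P})$ fail to sum to $A$, the defect being $ab^*+ba^*$. These two certificates are in fact the same object: a direct expansion gives $ab^*+ba^*=\gamma_2\otimes\gamma_2^t+\gamma_3\otimes\gamma_3^t$, i.e., exactly the part of $A$ supported on the corner $R$ that the paper detects through the fixed point $\gamma_2$. The paper's version is shorter, since it needs only the orthogonality of the $\gamma_i$'s and of $\gamma_1$ with $\overline{V}$; yours costs more bookkeeping but buys explicit closed forms for $G_A$, $F_A$ and $F_A\circ G_A$ (via $V\overline{V}V=Q$) and makes visible precisely which rank-two piece of $A$ links the two blocks and obstructs the splitting.
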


\begin{proof}

Let $\gamma_1=v_1\overline{v_1}^t,\gamma_2=\frac{e_3\overline{v_1}^t+v_1e_3^t}{\sqrt{2}}, \gamma_3=\frac{i(e_3\overline{v_1}^t-v_1e_3^t)}{\sqrt{2}},\gamma_4=e_3e_3^t$. Notice that $v\overline{v}^t=\sum_{i=1}^4\gamma_i\otimes\gamma_i^t$. 

Now, $S(\overline{v}v^t)=\overline{V}\otimes V$, where $V=F^{-1}(v)=v_1\overline{v_1}^t+e_3e_3^t$, by item 1 of lemma \ref{propertiesofS}. Thus, $A=\sum_{i=1}^4\gamma_i\otimes\gamma_i^t+\overline{V}\otimes V$.

Since  $0=tr(\gamma_1\gamma_2)=tr(\gamma_1\gamma_3)=tr(\gamma_1\gamma_4)=tr(\gamma_1\overline{V})$
then $G_A(\gamma_1)=\gamma_1^t$ and $F_A(\gamma_1^t)=\gamma_1$. Therefore $F_A\circ G_A(\gamma_1)=\gamma_1$.
Next,  $0=tr(\gamma_2\gamma_1)=tr(\gamma_2\gamma_3)=tr(\gamma_2\gamma_4)=tr(\gamma_2\overline{V})$. Thus, $G_A(\gamma_2)=\gamma_2^t$ and $F_A(\gamma_2^t)=\gamma_2$, thus $F_A\circ G_A(\gamma_2)=\gamma_2$.

Finally, notice that $\gamma_2\in (Id-V_1)M_kV_1\oplus V_1M_k(Id-V_1)=R$, where $V_1$ is the orthogonal projection onto $\Im(\gamma_1)$. Therefore $F_A\circ G_A|_{R}\neq 0$. Thus, $F_A\circ G_A$ does not have the decomposition property(definition \ref{definitiondecompositionproperty}) and $F_A\circ G_A$ is not completely reducible by proposition \ref{propositioncompletelyreducible}.
\end{proof}

\section{Last Application: The Last Mutually Unbiased Basis}

In this section, we obtain a new proof of the following theorem proved in \cite{weiner}: If there is a set of $k$ mutually unbiased bases of $\mathbb{C}^k$ then  exists another orthonormal basis which is mutually unbiased with the first $k$.
  Our proof relies on proposition \ref{mainapplication}. We also proved that this last basis is unique up to multiplication by complex numbers of norm 1.

\begin{proposition}\label{mainapplication}  Let $A\in M_k\otimes M_k\simeq M_{k^2}$, $A\in P_{k^2}$. If $A$ is invariant under realignment  and $F_A\circ G_A:M_k\rightarrow M_k$ has $n$ eigenvalues equal to $1$ and the others $0$ then 
\begin{itemize}
\item[a)] exists an orthonormal set $\{v_1,\ldots, v_n\}\subset\mathbb{C}^k$ such that  $A=\sum_{i=1}^nv_i\overline{v_i}^t\otimes\overline{v_i}v_i^t$. 
\item[b)] The orthonormal set of item a$)$ is unique up to multiplication by complex numbers of norm one.
\end{itemize}
\end{proposition}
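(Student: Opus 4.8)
The plan is to exploit \thmref{propositioneigenvalues1} applied to the invariant matrix $A$. Since $A$ is invariant under realignment, \thmref{decomposablepropertyInvariant} guarantees that $F_A\circ G_A:M_k\rightarrow M_k$ is completely reducible, and by hypothesis its eigenvalues are $1$ or $0$. Thus \thmref{propositioneigenvalues1} applies and yields a \emph{unique} Hermitian Schmidt decomposition $A=\sum_{i=1}^n\gamma_i\otimes\delta_i$ with $\gamma_i\in P_k$, $\delta_i\in P_m$ orthonormal (here $m=k$), and $\delta_i=G_A(\gamma_i)$. The first task is to pin down the shape of $\gamma_i$ and $\delta_i$ using invariance. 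From the proof of \lemref{Formatinvariant} I would extract that the Hermitian Schmidt decomposition of an invariant matrix has the form $\sum\lambda_i\gamma_i\otimes\gamma_i^t$; combined with the uniqueness from \thmref{propositioneigenvalues1}, this should force $\delta_i=\gamma_i^t$ (after matching up the two decompositions, which are both Schmidt decompositions with the same positive coefficient $1$). So I expect to obtain $A=\sum_{i=1}^n\gamma_i\otimes\gamma_i^t$ with $\gamma_i\in P_k$ orthonormal and $F_A\circ G_A(\gamma_i)=\gamma_i$.

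The key step for part a) is to show each $\gamma_i$ has rank one. Since $F_A\circ G_A(\gamma_i)=\gamma_i$ and, as computed in the proof of \thmref{decomposablepropertyInvariant}, $F_A\circ G_A(X)=\sum_j\lambda_j^2\,tr(\gamma_jX)\gamma_j$, the eigenvalue-$1$ condition together with $\{\gamma_j\}$ orthonormal forces the surviving coefficients to be exactly $1$. I would then use invariance $A=S(A)$ directly: applying the realignment map and \lemref{propertiesofS}(1) to $A=\sum_i\gamma_i\otimes\gamma_i^t$ gives $S(A)=\sum_i F(\gamma_i)F(\gamma_i^t)^t$, and matching this against the spectral decomposition of $A$ as a positive matrix on $\mathbb{C}^k\otimes\mathbb{C}^k$ (using \ref{spectral}) should force $\gamma_i$ to be a rank-one projection $v_i\overline{v_i}^t$. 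The orthonormality of $\{\gamma_i\}$ in the trace inner product then translates into orthonormality of $\{v_i\}$ in $\mathbb{C}^k$, and $\gamma_i^t=(v_i\overline{v_i}^t)^t=\overline{v_i}v_i^t$ produces the claimed form $A=\sum_{i=1}^n v_i\overline{v_i}^t\otimes\overline{v_i}v_i^t$.

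For part b), the uniqueness of $\{v_i\}$ up to phases follows from combining the uniqueness of the Hermitian Schmidt decomposition in \thmref{propositioneigenvalues1} with the uniqueness of the spectral decomposition of each rank-one positive matrix $\gamma_i=v_i\overline{v_i}^t$: the projection $\gamma_i$ determines $v_i$ up to multiplication by a complex number of modulus one. Any other orthonormal set $\{v_i'\}$ producing the same $A$ must give the same set of rank-one projections $\{\gamma_i\}$, hence the $v_i'$ agree with the $v_i$ up to phase and reordering.

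The main obstacle I anticipate is the rank-one step: showing $\gamma_i=v_i\overline{v_i}^t$ rather than a higher-rank positive matrix. The uniqueness theorem alone guarantees a canonical decomposition but does not immediately rule out higher-rank Hermitian summands; the decisive extra input must be the invariance $A=S(A)$ interacting with the eigenvalue-$1$ normalization. Concretely, I would argue that the positive semidefinite operator $A$ on $\mathbb{C}^{k^2}$ has, via \ref{spectral}, eigenvectors $F(\gamma_i)$ with eigenvalue equal to the Schmidt coefficient; forcing these coefficients to be $1$ while simultaneously requiring $\gamma_i^t\gamma_i$ to behave correctly under realignment should squeeze each $\gamma_i$ down to rank one. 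Getting this squeeze clean is where the real work lies.
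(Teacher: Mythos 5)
Your proposal has exactly the paper's skeleton: theorem \ref{decomposablepropertyInvariant} for complete reducibility, proposition \ref{propositioneigenvalues1} for the unique Hermitian Schmidt decomposition $\sum_{i=1}^n\gamma_i\otimes\delta_i$ with positive semidefinite factors, lemma \ref{Formatinvariant} plus invariance to force the final form, and the same uniqueness argument for part b). However, two steps are not closed as written. The first is your claim that $\delta_i=\gamma_i^t$ follows by ``matching up the two decompositions'' via the uniqueness in proposition \ref{propositioneigenvalues1}. That uniqueness statement only covers decompositions in which \emph{both} families of factors are positive semidefinite, whereas lemma \ref{Formatinvariant} produces merely Hermitian factors $A_j$; moreover, Schmidt decompositions with a repeated coefficient (here all coefficients equal $1$) are never unique --- the orthonormal families can be rotated by any real orthogonal matrix --- so no term-by-term matching is available. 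The repair is the paper's own argument, and you already hold its ingredients: writing $A=\sum_j A_j\otimes A_j^t$ with all $\lambda_j=1$ (the nonzero eigenvalues of $F_A\circ G_A$ are the $\lambda_j^2$, hence $\lambda_j=1$), one gets $G_A(X)=\sum_j tr(A_jX)A_j^t$ and $F_A\circ G_A(X)=\sum_j tr(A_jX)A_j$; every eigenvector $\gamma$ with eigenvalue $1$ therefore lies in the span of the $A_j$ and satisfies $G_A(\gamma)=\gamma^t$, and since proposition \ref{propositioneigenvalues1} gives $\delta_i=G_A(\gamma_i)$, this yields $\delta_i=\gamma_i^t$.

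The second issue is the rank-one squeeze, which you explicitly leave open (``where the real work lies''). It does close, and your global strategy finishes in two lines: by remark \ref{spectral} the vectors $r_i=F(\gamma_i)$ are orthonormal, so $A=S(A)=\sum_{i=1}^n r_i\overline{r_i}^t$ is a spectral decomposition and $\mathrm{rank}(A)=n$; on the other hand, $tr(\gamma_i\gamma_j)=0$ with $\gamma_i,\gamma_j\in P_k$ forces pairwise orthogonal supports, so $A=\sum_{i=1}^n\gamma_i\otimes\gamma_i^t$ has rank $\sum_{i=1}^n(\mathrm{rank}\,\gamma_i)^2$, and equating this with $n$ forces $\mathrm{rank}\,\gamma_i=1$ for every $i$. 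The paper instead argues locally: compressing by $V_i\otimes V_i^t$ (with $V_i$ the projection onto $\Im(\gamma_i)$) gives $(V_i\otimes V_i^t)A(V_i\otimes V_i^t)=\gamma_i\otimes\gamma_i^t$, and then property $(3)$ of lemma \ref{propertiesofS} together with $S(A)=A$ yields $\gamma_i\otimes\gamma_i^t=S(\gamma_i\otimes\gamma_i^t)=r_i\overline{r_i}^t$, which is rank one. With either completion inserted, and the $\delta_i=\gamma_i^t$ step repaired as above, your argument coincides with the paper's proof.
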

\begin{proof}
Let $A$ be invariant under realignment and let $\sum_{i=1}^n\lambda_i A_i\otimes A_i^t$ be a Hermitian Schmidt decomposition of $A$, $\lambda_i>0$ for every $i$, by lemma \ref{Formatinvariant}. Since $\lambda_i^2$ are the non null eigenvalues of $F_A\circ G_A:M_k\rightarrow M_k$ associated to the eigenvectors $A_i$ then $\lambda_i=1$.
Thus, for every eigenvector $\gamma$ of $F_A\circ G_A:M_k\rightarrow M_k$ associated to $1$, we have $G_A(\gamma)=\gamma^t$.

By lemma \ref{decomposablepropertyInvariant}, $F_A\circ G_A:M_k\rightarrow M_k$ is completely reducible. By proposition  \ref{propositioneigenvalues1}, exists a unique Hermitian Schmidt decomposition of $A$ , $\sum_{i=1}^n\gamma_i\otimes\delta_i$, such that $\gamma_i\in P_{k},\delta_i\in P_{k}$ for $1\leq i\leq n$. Notice that $\gamma_i^t=G_A(\gamma_i)=\delta_i$. Therefore, $A=\sum_{i=1}^n\gamma_i\otimes\gamma_i^t$ is the unique Hermitian Schmidt decomposition of $A$ such $\gamma_i\in P_{k}$ for $1\leq i\leq n$.

Let $V_i$ be the orthogonal projection onto $\Im(\gamma_i)$. Since $\{\gamma_1,\ldots,\gamma_n\}$ is an orthonormal set and each $\gamma_i\in P_k$ then $\Im(V_i)\perp\Im(V_j)$. Thus, $(V_i\otimes V_i^t)A(V_i\otimes V_i^t)=\gamma_i\otimes\gamma_i^t$. 

Now, let $F(\gamma_i)=r_i$(see definition \ref{definition1}). By definition \ref{defS}, $r_i\overline{r_i}^t=S(\gamma_i\otimes\overline{\gamma_i})$.
Since $\gamma_i$ is Hermitian then $\gamma_i\otimes\overline{\gamma_i}=\gamma_i\otimes\gamma_i^t$ and
$r_i\overline{r_i}^t=S(\gamma_i\otimes\gamma_i^t)=S((V_i\otimes V_i^t)A(V_i\otimes V_i^t))$.

Next,  by property 3 of lemma \ref{propertiesofS} $S((V_i\otimes V_i^t)A(V_i\otimes V_i^t))=(V_i\otimes V_i^t)S(A)(V_i\otimes V_i^t)$. 
Since $S(A)=A$ then $r_i\overline{r_i}^t=(V_i\otimes V_i^t)A(V_i\otimes V_i^t)=\gamma_i\otimes\gamma_i^t$.

Therefore, $\gamma_i\otimes\gamma_i^t$ has rank 1 and $\gamma_i$ has rank 1. Thus, $\gamma_i=v_i\overline{v_i}^t$ and $A=\sum_{i=1}^nv_i\overline{v_i}^t\otimes\overline{v_i}v_i^t$. 

 Since $tr(\gamma_i\gamma_j)=\delta_{ij}$ then $\{v_1,\ldots,v_n\}$ is an orthonormal set. 
 
 Finally,  suppose another orthonormal set $\{w_1,\ldots,w_n\}$ satisfy $A= \sum_{j=1}^n w_j\overline{w_j}^t\otimes \overline{w_j}w_j^t$. Since $\sum_{i=1}^n\gamma_i\otimes\gamma_i^t$ is unique then for each $p$ there is $q$ such that $w_p\overline{w_p}^t=v_q\overline{v_q}^t$. Therefore $w_p=cv_q$ with $|c|=1$.
\end{proof}

\begin{definition}\label{defMUBS}$($\textbf{Mutually Unbiased Bases}$)$ Let $\{v_1,\ldots,v_k\}$ and $\{w_1,\ldots,w_k\}$ be orthonormal bases of $\mathbb{C}^k$. We say that they are mutually unbiased if  $|\langle v_i,w_j\rangle |^2=\frac{1}{k}$ for every $i,j$. 
\end{definition}

\begin{definition} Let $\alpha=\{v_1,\ldots,v_k\}$ be an orthonormal basis of $\mathbb{C}^k$. Let us define $A_{\alpha}\in M_k\otimes M_k$ as $A_{\alpha}=\sum_{i=1}^kv_i\overline{v_i}^t\otimes\overline{v_i}v_i^t$. Notice that
$A_{\alpha}$ is invariant under realignment. 
\end{definition}

\begin{lemma} \label{mubscondition}If $\alpha,\beta$ are orthonormal basis of $\mathbb{C}^k$ then they are mutually unbiased if and only if $A_{\alpha}A_{\beta}=A_{\beta}A_{\alpha}=\frac{1}{k}uu^t$ $($Recall the definition of $u$ in \ref{definition1}$)$.
\end{lemma}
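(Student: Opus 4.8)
The plan is to compute the product $A_\alpha A_\beta$ explicitly in terms of the overlap numbers $|\langle v_i,w_j\rangle|^2$ and then read off mutual unbiasedness by a linear-independence argument. Writing $\alpha=\{v_1,\dots,v_k\}$ and $\beta=\{w_1,\dots,w_k\}$, I would first expand using the Kronecker multiplication rule $(A\otimes B)(C\otimes D)=AC\otimes BD$:
$$A_\alpha A_\beta=\sum_{i,j=1}^k (v_i\overline{v_i}^t w_j\overline{w_j}^t)\otimes(\overline{v_i}v_i^t\,\overline{w_j}w_j^t).$$
Each factor collapses to a rank-one dyad times a scalar, via $v_i\overline{v_i}^t w_j\overline{w_j}^t=(\overline{v_i}^t w_j)\,v_i\overline{w_j}^t$ and $\overline{v_i}v_i^t\,\overline{w_j}w_j^t=(v_i^t\overline{w_j})\,\overline{v_i}w_j^t$. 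Since $v_i^t\overline{w_j}=\langle v_i,w_j\rangle$ and $\overline{v_i}^t w_j=\overline{\langle v_i,w_j\rangle}$, the two scalars multiply to $|\langle v_i,w_j\rangle|^2$, and the identity $(ab^t)\otimes(cd^t)=(a\otimes c)(b\otimes d)^t$ turns the matrix part into $(v_i\otimes\overline{v_i})(\overline{w_j}\otimes w_j)^t$. This yields the central formula
$$A_\alpha A_\beta=\sum_{i,j=1}^k|\langle v_i,w_j\rangle|^2\,(v_i\otimes\overline{v_i})(\overline{w_j}\otimes w_j)^t.$$

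Next I would record two facts, both immediate from the isometry $F$ of definition \ref{definition1} together with the fact that $\{v_i\overline{v_i}^t\}$ and $\{\overline{w_j}w_j^t\}$ are orthonormal families of rank-one projections in $M_k$. First, $x_i:=v_i\otimes\overline{v_i}=F(v_i\overline{v_i}^t)$ form an orthonormal set, and likewise $y_j:=\overline{w_j}\otimes w_j=F(\overline{w_j}w_j^t)$. Second, the resolutions of identity $\sum_i v_i\overline{v_i}^t=Id$ and $\sum_j\overline{w_j}w_j^t=Id$ give, coordinate-wise, $\sum_i x_i=u$ and $\sum_j y_j=u$. Consequently $\tfrac1k uu^t=\tfrac1k(\sum_i x_i)(\sum_j y_j)^t=\tfrac1k\sum_{i,j}x_i y_j^t$.

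For the forward implication, if $\alpha,\beta$ are mutually unbiased then every coefficient $|\langle v_i,w_j\rangle|^2$ equals $1/k$, and the central formula factors immediately as $\tfrac1k\sum_{i,j}x_iy_j^t=\tfrac1k uu^t$; the same computation with $\alpha,\beta$ interchanged gives $A_\beta A_\alpha=\tfrac1k uu^t$. For the converse I would subtract: the hypothesis $A_\alpha A_\beta=\tfrac1k uu^t$ combined with the central formula and the factorization above gives $\sum_{i,j}(|\langle v_i,w_j\rangle|^2-\tfrac1k)\,x_iy_j^t=0$. Because $\{x_i\}$ and $\{y_j\}$ are each orthonormal, the dyads $x_iy_j^t$ are linearly independent: right-multiplying the relation by $\overline{y_l}$ and using $y_j^t\overline{y_l}=\delta_{jl}$ isolates $\sum_i(|\langle v_i,w_l\rangle|^2-\tfrac1k)x_i=0$, whereupon linear independence of $\{x_i\}$ forces every coefficient to vanish. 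Hence $|\langle v_i,w_j\rangle|^2=1/k$ for all $i,j$, which is mutual unbiasedness.

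The bulk of the argument is bookkeeping; the one spot needing care is the converse's linear-independence step, i.e.\ justifying that the coefficients in $\sum_{i,j}d_{ij}x_iy_j^t$ can be uniquely recovered. I expect the main obstacle to be purely notational: keeping the complex conjugations straight under the convention $\langle x,y\rangle=x^t\overline{y}$, so that the scalar produced by the two tensor factors is genuinely $|\langle v_i,w_j\rangle|^2$ rather than some unconjugated variant, and confirming $\sum_i v_i\otimes\overline{v_i}=u$ in the correct slot. Everything else follows from the Kronecker product identities and the isometry $F$.
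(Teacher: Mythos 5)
Your proof is correct, and the forward direction coincides with the paper's: both expand $A_\alpha A_\beta$ via the Kronecker rules into $\sum_{i,j}|\langle v_i,w_j\rangle|^2\,(v_i\otimes\overline{v_i})(\overline{w_j}\otimes w_j)^t$ and use $u=\sum_i v_i\otimes\overline{v_i}=\sum_j\overline{w_j}\otimes w_j$. Where you genuinely diverge is the converse. The paper applies the realignment map $S$ to the identity $\sum_{i,j}\lambda_{ij}\,x_iy_j^t=\tfrac1k uu^t$, obtaining $\sum_{i,j}\lambda_{ij}\,v_i\overline{v_i}^t\otimes\overline{w_j}w_j^t=\tfrac1k\,Id\otimes Id$ by property $(1)$ of Lemma \ref{propertiesofS}, and then reads off the $\lambda_{ij}$ as eigenvalues of $\tfrac1k\,Id\otimes Id$ relative to the orthonormal eigenbasis $\{v_i\otimes\overline{w_j}\}$, forcing $\lambda_{ij}=\tfrac1k$. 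You instead subtract to get $\sum_{i,j}(\lambda_{ij}-\tfrac1k)\,x_iy_j^t=0$ and extract each coefficient by right-multiplying by $\overline{y_l}$ and invoking orthonormality of $\{y_j\}$ and linear independence of $\{x_i\}$ — a purely elementary rank-one-dyad argument that never touches the realignment machinery. Both are sound; the paper's version is natural given that $S$ and its properties are already on the table (and it makes the spectral interpretation of the overlaps $\lambda_{ij}$ explicit), while yours is more self-contained and would work verbatim in any setting where one has two orthonormal families of vectors, with no reference to Lemma \ref{propertiesofS} at all.
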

\begin{proof}
Let $\alpha=\{v_1,\ldots,v_k\}$, $\beta=\{w_1,\ldots,w_k\}$ and $A_{\alpha}=\sum_{i=1}^kv_i\overline{v_i}^t\otimes\overline{v_i}v_i^t$, $A_{\beta}=\sum_{j=1}^kw_j\overline{w_j}^t\otimes\overline{w_j}w_j^t$.

Now $A_{\alpha}A_{\beta}=\sum_{i,j=1}^kv_i\overline{w_j}^t\otimes\overline{v_i}w_j^t (\overline{v_i}^tw_j)(v_i^t\overline{w_j})$.

If $\alpha,\beta$ are mutually unbiased then for every $i,j$, we have $(\overline{v_i}^tw_j)(v_i^t\overline{w_j})=|\langle v_i,w_j\rangle |^2=\frac{1}{k}$. 

Therefore, $A_{\alpha}A_{\beta}=\sum_{i,j=1}^k\frac{1}{k}v_i\overline{w_j}^t\otimes\overline{v_i}w_j^t=\frac{1}{k}uu^t$, because $u=\sum_{i=1}^kv_i\otimes\overline{v_i}=\sum_{j=1}^k\overline{w_j}\otimes w_j$.

Now suppose that $A_{\alpha}A_{\beta}=\frac{1}{k}uu^t$. 

Therefore $\sum_{i,j=1}^k \lambda_{ij} v_i\overline{w_j}^t\otimes\overline{v_i}w_j^t =\frac{1}{k}uu^t$, where $\lambda_{ij}=|\langle v_i,w_j\rangle |^2$.

Next, $\frac{1}{k}Id\otimes Id=S(\frac{1}{k}uu^t)=\sum_{i,j=1}^k\lambda_{ij} v_i\overline{v_i}^t\otimes\overline{w_j}w_j^t$.

Notice that $\{v_i\otimes\overline{w_j},\ 1\leq i,j\leq k\}$ is an orthonormal basis of $\mathbb{C}^k\otimes\mathbb{C}^k$. Therefore $\lambda_{ij}$ are the eigenvalues of $\frac{1}{k}Id\otimes Id$, thus $\lambda_{ij}=\frac{1}{k}$.
\end{proof}

\begin{lemma} Let $\alpha_1,\ldots,\alpha_{k+1}$ be orthonormal bases of $\mathbb{C}^k$. If they are pairwise mutually unbiased then $\sum_{i=1}^{k+1}A_{\alpha_i}=Id\otimes Id+uu^t\in M_k\otimes M_k$.
\end{lemma}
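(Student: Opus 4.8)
The plan is to exhibit each $A_{\alpha_i}$ as an orthogonal projection and then reduce the identity to a dimension count inside $u^\perp$. First I would record that each $A_{\alpha_i}$ is a Hermitian idempotent: writing $\alpha_i=\{v_1,\ldots,v_k\}$, the matrix $A_{\alpha_i}=\sum_{j}v_j\overline{v_j}^t\otimes\overline{v_j}v_j^t$ equals $\sum_j(v_j\otimes\overline{v_j})(v_j\otimes\overline{v_j})^*$, i.e. the orthogonal projection onto $\mathrm{span}\{v_j\otimes\overline{v_j}\}$, since the vectors $v_j\otimes\overline{v_j}$ are orthonormal; the computation in the proof of lemma \ref{mubscondition} (with $\beta=\alpha_i$) gives $A_{\alpha_i}^2=A_{\alpha_i}$ directly, so $A_{\alpha_i}$ has rank $k$. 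Moreover $u=\sum_j v_j\otimes\overline{v_j}$ lies in $\Im(A_{\alpha_i})$ for every $i$ (this is the identity for $u$ used in lemma \ref{mubscondition}), so $u$ is a common $+1$-eigenvector. Writing $P_i=A_{\alpha_i}$ and letting $Q=\frac1k uu^t$ be the rank-one orthogonal projection onto $u$, this yields $P_iQ=QP_i=Q$ for every $i$.

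The key step is a ``subtract $Q$'' trick. Pairwise mutual unbiasedness together with lemma \ref{mubscondition} gives $P_iP_j=\frac1k uu^t=Q$ for $i\neq j$. Setting $R_i=P_i-Q$, I would check that each $R_i$ is an orthogonal projection of rank $k-1$: it is Hermitian, and $R_i^2=P_i^2-P_iQ-QP_i+Q^2=P_i-Q=R_i$, using $P_iQ=QP_i=Q^2=Q$. The decisive observation is that these projections are mutually orthogonal: for $i\neq j$, $R_iR_j=(P_i-Q)(P_j-Q)=P_iP_j-P_iQ-QP_j+Q^2=Q-Q-Q+Q=0$.

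Finally I would carry out the dimension count. Since $R_iu=P_iu-Qu=u-u=0$, the range of each $R_i$ lies in $u^\perp$, and the relations $R_iR_j=0$ force the subspaces $\Im(R_1),\ldots,\Im(R_{k+1})$ to be pairwise orthogonal. Their dimensions add up to $(k+1)(k-1)=k^2-1=\dim u^\perp$, so they exhaust $u^\perp$; hence $\sum_{i=1}^{k+1}R_i$ is the orthogonal projection onto $u^\perp$, namely $Id\otimes Id-Q$. Therefore $\sum_{i=1}^{k+1}P_i=\sum_{i=1}^{k+1}R_i+(k+1)Q=(Id\otimes Id-Q)+(k+1)Q=Id\otimes Id+kQ=Id\otimes Id+uu^t$, as claimed.

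I expect the main obstacle to be organizational rather than deep: the whole argument hinges on recognizing that subtracting the common rank-one projection $Q$ turns the overlapping rank-$k$ projections $P_i$ into \emph{mutually orthogonal} rank-$(k-1)$ projections, after which the result is forced by counting dimensions inside $u^\perp$. The points requiring care are justifying $P_iQ=QP_i=Q$ (that $u$ is genuinely a $+1$-eigenvector of each $A_{\alpha_i}$) and confirming that pairwise orthogonality of the $R_i$ together with the exact dimension match really makes their ranges fill $u^\perp$; both follow from the orthonormality of $\{v_j\otimes\overline{v_j}\}$ and the relation $P_iP_j=Q$ supplied by lemma \ref{mubscondition}.
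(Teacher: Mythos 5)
Your proof is correct, and at its core it is the same argument as the paper's: both strip off the common rank-one projection $Q=\frac{1}{k}uu^t$ from each $A_{\alpha_i}$, observe that the remaining rank-$(k-1)$ pieces are pairwise orthogonal, and finish with the dimension count $(k+1)(k-1)=k^2-1=\dim u^{\perp}$. The difference lies in how pairwise orthogonality is justified. The paper notes that lemma \ref{mubscondition} makes the $A_{\alpha_i}$ commute, invokes a common orthonormal eigenbasis for the whole family, and then asserts that each $A_{\alpha_i}$ consists of $\frac{u}{\sqrt{k}}\frac{u^t}{\sqrt{k}}$ together with a disjoint block of $k-1$ rank-one pieces from that basis. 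You replace this appeal to simultaneous diagonalization by the purely algebraic identities $P_iQ=QP_i=Q^2=Q$ and $P_iP_j=Q$ (for $i\neq j$), from which $R_i=P_i-Q$ are Hermitian idempotents satisfying $R_iR_j=0$; the orthogonality of the ranges then becomes an immediate computation rather than a spectral statement. This is arguably the cleaner execution: the paper's claim that the common eigenbasis splits into disjoint blocks, one per basis $\alpha_i$, is exactly what your identity $R_iR_j=0$ certifies directly. Both proofs consume the same inputs, namely lemma \ref{mubscondition} and the identity $u=\sum_j v_j\otimes\overline{v_j}$ for any orthonormal basis, so the advantage of your version is self-containedness, not generality.
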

\begin{proof}
Since $A_{\alpha_1},\ldots, A_{\alpha_{k+1}}$ commute, by lemma \ref{mubscondition}, there is a common basis of $\mathbb{C}^k\otimes\mathbb{C}^k$ formed by orthonormal eigenvectors. 
 Since $A_{\alpha_1},\ldots, A_{\alpha_{k+1}}$ are orthogonal projections and their pairwise multiplications are equal to $\frac{u}{\sqrt{k}}\frac{u^t}{\sqrt{k}}$,  by lemma \ref{mubscondition}, the intersection of their images is generated only by $u$. Notice that each $A_{\alpha_i}$ has rank $k$.
 
Thus, every $A_{\alpha_i}$ can be written as $\frac{u}{\sqrt{k}}\frac{u^t}{\sqrt{k}}+\sum_{l=(i-1)(k-1)+1}^{i(k-1)}r_l\overline{r_l}^t$, where $r_1,\ldots,r_{k^2-1},\frac{u}{\sqrt{k}}$ is a common orthonormal basis of  eigenvectors.

Finally, $\sum_{i=1}^{k+1}A_{\alpha_i}=(k+1)\frac{u}{\sqrt{k}}\frac{u^t}{\sqrt{k}}+\sum_{l=1}^{k^2-1}r_l\overline{r_l}^t=k\frac{u}{\sqrt{k}}\frac{u^t}{\sqrt{k}}+\frac{u}{\sqrt{k}}\frac{u^t}{\sqrt{k}}+\sum_{i=1}^{k^2-1}r_i\overline{r_i}^t=uu^t+Id\otimes Id$.
\end{proof}

\begin{theorem}\label{thelastbasis}
If $\mathbb{C}^k$ contains $k$ mutually unbiased bases then $\mathbb{C}^k$ contains $k+1$. Moreover, this last basis is unique up to multiplication by complex numbers of norm one.
\end{theorem}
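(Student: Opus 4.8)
The plan is to exhibit the $(k+1)$-th basis as the one whose associated matrix completes the given family to the maximal invariant matrix $Id\otimes Id+uu^t$. Writing $\alpha_1,\dots,\alpha_k$ for the given mutually unbiased bases, I set
\[
A \;=\; Id\otimes Id + uu^t - \sum_{i=1}^k A_{\alpha_i},
\]
and first analyze its spectral structure. Let $P_u=\tfrac1k uu^t$ be the rank-one orthogonal projection onto $\mathrm{span}(u)$. Lemma~\ref{mubscondition} gives $A_{\alpha_i}A_{\alpha_j}=P_u$ for $i\ne j$; since each $A_{\alpha_i}$ is a rank-$k$ orthogonal projection with $A_{\alpha_i}u=u$, the matrices $B_i:=A_{\alpha_i}-P_u$ are mutually orthogonal rank-$(k-1)$ projections, each orthogonal to $P_u$. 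Hence $\sum_i A_{\alpha_i}=kP_u+\sum_i B_i$ and, because $uu^t=kP_u$, we obtain $A=Id\otimes Id-\sum_{i=1}^k B_i$, a rank-$k$ orthogonal projection, in particular positive semidefinite. Linearity of $S$ together with Example~\ref{example1}(a) and the realignment-invariance of each $A_{\alpha_i}$ then yields $S(A)=A$.

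Next I would pin down the eigenvalues of $F_A\circ G_A$. By Lemma~\ref{Formatinvariant}, $A$ has a Hermitian Schmidt decomposition $\sum_i\lambda_i\gamma_i\otimes\gamma_i^t$ with $\lambda_i>0$ and $\gamma_i$ Hermitian orthonormal, and Remark~\ref{spectral} identifies the $\lambda_i$ with the nonzero eigenvalues of $S(A)=A$ regarded as a matrix in $M_{k^2}$. As $A$ is a rank-$k$ orthogonal projection, these are all equal to $1$ and there are exactly $k$ of them, so $F_A\circ G_A(\gamma_i)=\lambda_i^2\gamma_i=\gamma_i$ and $F_A\circ G_A$ has only the eigenvalues $1$ (multiplicity $k$) and $0$. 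Proposition~\ref{mainapplication} now applies with $n=k$: there is an orthonormal set $\{v_1,\dots,v_k\}\subset\mathbb{C}^k$, necessarily an orthonormal basis $\alpha_{k+1}$, with $A=\sum_{i=1}^k v_i\overline{v_i}^t\otimes\overline{v_i}v_i^t=A_{\alpha_{k+1}}$.

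It then remains to check that $\alpha_{k+1}$ is mutually unbiased with each $\alpha_i$, for which, by Lemma~\ref{mubscondition}, it suffices to verify $A_{\alpha_{k+1}}A_{\alpha_i}=\tfrac1k uu^t$. Using $A_{\alpha_{k+1}}=Id\otimes Id-\sum_j B_j$ and $A_{\alpha_i}=B_i+P_u$, the relations $B_jB_i=\delta_{ij}B_i$ and $B_jP_u=0$ give $A_{\alpha_{k+1}}A_{\alpha_i}=A_{\alpha_i}-B_i=P_u=\tfrac1k uu^t$, with the reverse product following by taking adjoints. For uniqueness, if $\beta$ is any basis mutually unbiased with $\alpha_1,\dots,\alpha_k$, then $\alpha_1,\dots,\alpha_k,\beta$ form $k+1$ pairwise mutually unbiased bases, so the lemma preceding this theorem forces $A_\beta=Id\otimes Id+uu^t-\sum_i A_{\alpha_i}=A_{\alpha_{k+1}}$; part (b) of Proposition~\ref{mainapplication} then shows $\beta$ coincides with $\alpha_{k+1}$ up to multiplication of its vectors by complex numbers of norm one.

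The step I expect to require the most care is the passage from ``$A$ is a rank-$k$ orthogonal projection'' to ``$F_A\circ G_A$ has eigenvalues $1$ and $0$'': this hinges on correctly matching the Hermitian Schmidt coefficients $\lambda_i$ of a realignment-invariant matrix with its genuine matrix eigenvalues via Remark~\ref{spectral}, so that the idempotency of $A$ as an operator on $\mathbb{C}^{k^2}$ forces every $\lambda_i=1$. By contrast, the projection computation for $\sum_i A_{\alpha_i}$ is routine once the relations $A_{\alpha_i}A_{\alpha_j}=P_u$ and $A_{\alpha_i}u=u$ are established.
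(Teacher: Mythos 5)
Your proof is correct and follows essentially the same route as the paper: you construct the same matrix $B=Id\otimes Id+uu^t-\sum_{i=1}^kA_{\alpha_i}$, show it is realignment-invariant and positive semidefinite with exactly $k$ eigenvalues equal to $1$, identify its Hermitian Schmidt coefficients with these matrix eigenvalues, and then invoke Proposition~\ref{mainapplication} and Lemma~\ref{mubscondition}, exactly as the paper does. The only variations are cosmetic: you obtain the spectrum of $B$ by projection algebra (via $B_i=A_{\alpha_i}-\tfrac{1}{k}uu^t$) where the paper reuses the common eigenbasis constructed in its preceding lemma, and your uniqueness argument usefully makes explicit a step the paper leaves implicit, namely that any completing basis $\beta$ must satisfy $A_\beta=B$ by that same lemma before Proposition~\ref{mainapplication}(b) can be applied.
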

\begin{proof} Let $\alpha_1,\ldots,\alpha_{k}$ be orthonormal bases of $\mathbb{C}^k$, which are pairwise mutually unbiased.
 Consider  $B=Id\otimes Id+uu^t-\sum_{i=1}^kA_{\alpha_i}$. 
 
 We saw in the proof of the previous lemma that we can write $Id\otimes Id+uu^t=(k+1)\frac{u}{\sqrt{k}}\frac{u^t}{\sqrt{k}}+\sum_{l=1}^{k^2-1}r_l\overline{r_l}^t$ and $\sum_{i=1}^kA_{\alpha_i}=k\frac{u}{\sqrt{k}}\frac{u^t}{\sqrt{k}}+\sum_{l=1}^{k(k-1)}r_l\overline{r_l}^t$, where $r_1,\ldots,r_{k^2-1},\frac{u}{\sqrt{k}}$ is an orthonormal basis of  $\mathbb{C}^k\otimes\mathbb{C}^k$. Thus, $B=\frac{u}{\sqrt{k}}\frac{u^t}{\sqrt{k}}+\sum_{l=k(k-1)+1}^{k^2-1}r_l\overline{r_l}^t$ is positive semidefinite with $k$ eigenvalues equal to 1 and the others zero. Notice that $BA_{\alpha_i}=A_{\alpha_i}B=\frac{u}{\sqrt{k}}\frac{u^t}{\sqrt{k}}$, $1\leq i\leq k$.
 
By lemma \ref{mubscondition}, in order to complete the proof, we need to show that $B=A_{\alpha_{k+1}}$ for some orthonormal basis $\alpha_{k+1}$ of $\mathbb{C}^k$.

Now, $S(B)=S(Id\otimes Id+uu^t-\sum_{i=1}^kA_{\alpha_i})=S(Id\otimes Id+uu^t)-\sum_{i=1}^kS(A_{\alpha_i})=Id\otimes Id+uu^t-\sum_{i=1}^kA_{\alpha_i}=B.$ Thus $B$ is invariant under realignment.

By lemma \ref{Formatinvariant}, we know that $B$ has a Hermitian Schmidt decomposition $\sum_{i=1}^n\lambda_i\gamma_i\otimes\gamma_i^t$ with $\lambda_i>0$. Therefore, $B=S(B)=\sum_{i=1}^n\lambda_iv_i\overline{v_i}^t$, where $v_i=F(\gamma_i)$. Since $F$ is an isometry, $\sum_{i=1}^n\lambda_iv_i\overline{v_i}^t$ is a spectral decomposition of $B$ and $\lambda_i$ are the eigenvalues. Then $n=k$ and $\lambda_i=1$, for $1\leq i\leq k$.

Thus, $\sum_{i=1}^k\gamma_i\otimes\gamma_i^t$ is a Hermitian Schmidit decomposition of $B$ and $F_B\circ G_B: M_k \rightarrow M_k$ is $F_B\circ G_B(X)=\sum_{i=1}^k tr(\gamma_i X)\gamma_i$, where $\gamma_1,\ldots,\gamma_k$ are orthonormal eigenvectors of $F_B\circ G_B$ associated to the eigenvalue $1$.
By proposition \ref{mainapplication}, exists an orthonormal basis $\alpha_{k+1}$ such that $B=A_{\alpha_{k+1}}$ and this basis is unique up to multiplication by complex numbers of norm one.
\end{proof}

\section{Chasing Completely Reducible Maps}

We saw in section 4 that if $A\in M_{k}\otimes M_m$ is positive under partial transposition  or symmetric with positive coefficients or invariant under realignment  then $F_A\circ G_A: M_k\rightarrow M_k$ is completely reducible.
In this section, we search  for other types of matrices that could have the same property.  Let $S_4$ be the group of permutations of $\{1,2,3,4\}$. 
We consider the  group $\{L_{\sigma}:M_k\otimes M_k\rightarrow M_k\otimes M_k, \sigma\in S_4\}$ acting on $M_k\otimes M_k$, we define two types of sets, $P_{\sigma}$, $I_{\sigma}$, and we consider the following two problems. 

Within these sets, we have the set of PPT matrices, the set of SPC matrices and the set of matrices invariant under realignment. We prove that these three types are the only types of matrices that appear in the solution of these two problems.

\begin{definition} Let $\sigma\in S_4$. Let $L_{\sigma}: M_k\otimes M_k\rightarrow M_k\otimes M_k$  be the linear transformation defined by $L_{\sigma}(a_1a_2^t\otimes a_3a_4^t)=a_{\sigma(1)}a_{\sigma(2)}^t\otimes a_{\sigma(3)}a_{\sigma(4)}^t,$ for every $a_1,a_2,a_3,a_4\in\mathbb{C}^k$.

Let $P_{\sigma}=\{A\in M_k\otimes M_k, A\ \text{and}\ L_{\sigma}(A)\ \text{are positive semidefinite}\}$ and 
$I_{\sigma}=\{A\in M_k\otimes M_k, A\ \text{is positive semidefinite and}\ A=L_{\sigma}(A)  \}$.
\end{definition}

\noindent
\textbf{Problem 1:} Which of the sets $P_{\sigma}$ contains only matrices $A$ such that $F_A\circ G_A:M_k\rightarrow M_k$ is completely reducible?\\\\
\textbf{Problem 2:} Which of the sets $I_{\sigma}$ contains only matrices $A$ such that $F_A\circ G_A:M_k\rightarrow M_k$ is completely reducible?

\subsection{Solution of Problem 1}

The solution of this problem is presented in the next theorem.
\begin{theorem}\label{solutionproblem1}
If for every $A\in P_{\sigma}$,  $F_A\circ G_A:M_k\rightarrow M_k$ is completely reducible then $P_{\sigma}=\{\text{PPT matrices}\}$ or $P_{\sigma}=\{\text{SPC matrices}\}$.
\end{theorem}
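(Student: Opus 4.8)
The plan is to exploit the group structure of $\{L_\sigma:\sigma\in S_4\}$ together with the single counterexample $uu^t$ from lemma \ref{notdecomposable}, so that no matrix other than $uu^t$ need ever be tested. First I would record that $\sigma\mapsto L_\sigma$ is an anti-homomorphism, i.e. $L_{\sigma\tau}=L_\tau\circ L_\sigma$, and single out the Klein four-subgroup $V=\{e,(12)(34),(13)(24),(14)(23)\}$. Its elements act by maps that preserve positive semidefiniteness in both directions: $L_{(12)(34)}(A)=A^t$ is the transpose, $L_{(13)(24)}(A)=TAT$ is conjugation by the self-adjoint unitary flip $T$ of definition \ref{definition1}, and $L_{(14)(23)}(A)=TA^tT$. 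Consequently, for $\tau\in V$ and $A$ positive semidefinite, $L_{\sigma\tau}(A)=L_\tau(L_\sigma(A))$ is positive semidefinite if and only if $L_\sigma(A)$ is, whence $P_{\sigma\tau}=P_\sigma$. Thus $P_\sigma$ depends only on the coset $\sigma V$, leaving only six cosets to examine.

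Next I would feed $A=uu^t$ into each $P_\sigma$. Writing $uu^t=\sum_{i,j}e_ie_j^t\otimes e_ie_j^t$ and applying $L_\sigma$, slot $p$ of the output carries the summation index $i$ precisely when $\sigma(p)\in\{1,3\}$, i.e. when $p\in\sigma^{-1}(\{1,3\})$. A short index computation then shows that $L_\sigma(uu^t)$ equals $Id\otimes Id$, $uu^t$, or $T$ according to whether $\sigma^{-1}(\{1,3\})$ is $\{1,2\}$ or $\{3,4\}$, is $\{1,3\}$ or $\{2,4\}$, or is $\{1,4\}$ or $\{2,3\}$. The first two outcomes are positive semidefinite, while the third, the flip $T$, is not, since $T$ has eigenvalue $-1$ for $k\geq 2$. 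Hence $uu^t\in P_\sigma$ unless $\sigma^{-1}(\{1,3\})\in\{\{1,4\},\{2,3\}\}$; and whenever $uu^t\in P_\sigma$, lemma \ref{notdecomposable} gives that $F_{uu^t}\circ G_{uu^t}=Id$ is not completely reducible, contradicting the hypothesis. So under the hypothesis we must have $\sigma^{-1}(\{1,3\})\in\{\{1,4\},\{2,3\}\}$.

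Finally I would identify the surviving permutations. By the first paragraph the property $uu^t\notin P_\sigma$ is constant on cosets, so the solution set is a union of cosets of $V$; counting the permutations with $\sigma^{-1}(\{1,3\})\in\{\{1,4\},\{2,3\}\}$ gives exactly eight, hence two cosets. Since $(34)^{-1}(\{1,3\})=\{1,4\}$ and $(243)^{-1}(\{1,3\})=\{1,4\}$, both $(34)$ and $(243)$ satisfy the condition, and they lie in distinct cosets; therefore the two surviving cosets are $(34)V$ and $(243)V$. By coset invariance, $P_\sigma=P_{(34)}$ for $\sigma\in(34)V$ and $P_\sigma=P_{(243)}$ for $\sigma\in(243)V$. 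As $L_{(34)}$ is the partial transposition, $P_{(34)}=\{\text{PPT matrices}\}$ by definition \ref{defPPT}; and as $L_{(243)}(A)=S(A^{t_2})$, lemma \ref{SPCequiv} gives $P_{(243)}=\{\text{SPC matrices}\}$, yielding the claimed dichotomy.

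The main obstacle here is organizational rather than analytic: one must set up the $S_4$-bookkeeping correctly, verify that exactly the Klein four-subgroup acts by positivity-preserving maps so that $P_\sigma$ becomes a coset invariant, and carry out the index computation of $L_\sigma(uu^t)$ reliably. Once this framework is in place the single counterexample $uu^t$ collapses all but the PPT and SPC cosets, and the conclusion follows without further case analysis.
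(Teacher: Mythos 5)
Your proof is correct and follows essentially the same route as the paper: reduce the $24$ sets $P_\sigma$ to cosets of the Klein four-group (the paper does this by composing with $L_\mu$ and $L_\rho$, exactly your subgroup $V$), use $uu^t$ together with Lemma \ref{notdecomposable} to eliminate every coset on which $L_\sigma(uu^t)$ stays positive semidefinite, and identify the two survivors as PPT and SPC via Definition \ref{defPPT} and Lemma \ref{SPCequiv}. The only difference is bookkeeping: you compute $L_\sigma(uu^t)\in\{Id\otimes Id,\ uu^t,\ T\}$ uniformly by tracking which slots carry the summation index and then count cosets, whereas the paper evaluates the six coset representatives $\sigma_1,\ldots,\sigma_6$ one by one.
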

\begin{proof}
If $\mu=(12)(34)\in S_4$ and $\rho=(13)(24)\in S_4$ then $L_{\mu}(A)=A^t$ and $L_{\rho}(A)=TAT$, where $T$ is the flip operator $($definition \ref{definition1}$)$.

We can write 
$\displaystyle \{L_{\sigma},\ \sigma\in S_4\} =\bigcup_{i=1}^6 \{L_{\sigma_i}, L_{\mu}\circ L_{\sigma_i}, L_{\rho}\circ L_{\sigma_i},L_{\mu}\circ L_{\rho}\circ L_{\sigma_i} \}$, where $\sigma_1,\ldots,\sigma_6$ are the permutations in $S_4$ that fix 1. 
 These six permutations are associated to the following linear transformations $L_{\sigma_i}: M_k\otimes M_k\rightarrow M_k\otimes M_k$. 
 \begin{itemize}
 \item $L_{\sigma_1}(A)=A$
 \item  $L_{\sigma_2}(A)=A^{t_2}$, where $(a_1a_2^t\otimes a_3a_4^t)^{t_2}=a_1a_2^t\otimes a_4a_3^t$ $($the partial transposition$)$
 \item $L_{\sigma_3}(A)=S(A)$, where $S(a_1a_2^t\otimes a_3a_4^t)=a_1a_3^t\otimes a_2a_4^t$ $($the realignment map$)$
 \item $L_{\sigma_4}(A)=S(A^{t_2})$, where $S((a_1a_2^t\otimes a_3a_4^t)^{t_2})=a_1a_4^t\otimes a_2a_3^t$
 \item $L_{\sigma_5}(A)=S(A)^{t_2}$, where $S(a_1a_2^t\otimes a_3a_4^t)^{t_2}=a_1a_3^t\otimes a_4a_2^t$
 \item $L_{\sigma_6}(A)=AT$, where $(a_1a_2^t\otimes a_3a_4^t)T=a_1a_4^t\otimes a_3a_2^t$ ($T$ is the flip operator, definition  \ref{definition1})
 \end{itemize}  

Now, since each $L_{\mu}\circ L_{\sigma_i}(A)=L_{\sigma_i}(A)^t, L_{\rho}\circ L_{\sigma_i}(A)=TL_{\sigma_i}(A)T,L_{\mu}\circ L_{\rho}\circ L_{\sigma_i}(A)=(TL_{\sigma_i}(A)T)^t$ is positive semidefinite if and only if $L_{\sigma_i}(A)$ is positive semidefinite, then every  $P_{\sigma}$ is equal to some $P_{\sigma_i}$. 
So in order to solve problem 1, we must consider only the sets $P_{\sigma_i}$ such that $\sigma_i(1)=1$.

Notice that $L_{\sigma_1}(uu^t)=uu^t$, $L_{\sigma_3}(uu^t)=S(uu^t)=Id\otimes Id$, $L_{\sigma_5}(uu^t)=S(uu^t)^{t_2}=Id\otimes Id$ and $L_{\sigma_6}(uu^t)=uu^tT=uu^t$. Thus, $uu^t\in P_{\sigma_i},\ i=1,3,5,6$. By lemma \ref{notdecomposable},  $F_{uu^t}\circ G_{uu^t}:M_k\rightarrow M_k$ is not completely reducible. Thus, there are matrices $A$ within $P_{\sigma_1},P_{\sigma_3},P_{\sigma_5},P_{\sigma_6}$ such that $F_A\circ G_A:M_k\rightarrow M_k$ is not completely reducible.

Now, $P_{\sigma_2}=\{\text{PPT matrices}\}$ and $P_{\sigma_4}=\{\text{SPC matrices}\}$, by lemma \ref{SPCequiv}. By theorems \ref{decomposablepropertyPPT} and \ref{decomposablepropertySPC}, if $A$ is PPT or  SPC then $F_A\circ G_A:M_k\rightarrow M_k$ is completely reducible. Thus, if for every $A\in P_{\sigma}$ $F_A\circ G_A:M_k\rightarrow M_k$ is completely reducible then  $P_{\sigma}=\{\text{PPT matrices}\}$ or $\{\text{SPC matrices}\}$.
\end{proof}

\begin{remark} Remind that SPC matrices are PPT in $M_2\otimes M_2$ by theorem 4.3 in \cite{carielloSPC}. Thus, for $k=2$ the solution of problem $1$ is the set of PPT matrices.\\
\end{remark}

\subsection{Solution of Problem 2}

The solution of this problem is presented in the next two theorems.
\begin{theorem}\label{solutionproblem_2_k>2}
Let $k\geq 3$. If for every $A\in I_{\sigma}$,  $F_A\circ G_A:M_k\rightarrow M_k$ is completely reducible then $I_{\sigma}\subset \{\text{PPT matrices}\}$ or $I_{\sigma}\subset\{\text{SPC matrices}\}$or $I_{\sigma}\subset\{\text{Matrices Invariant under Realignment}\}$. 
\end{theorem}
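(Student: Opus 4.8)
The plan is to use the group structure to collapse the $24$ permutations into a few representatives, and then dispose of each representative in one of two ways: either exhibit a matrix $A\in I_\sigma$ whose $F_A\circ G_A$ fails to be completely reducible (so the hypothesis is vacuous and there is nothing to prove), or show directly that $I_\sigma$ sits inside one of PPT, SPC, or the realignment-invariant matrices.

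First I would record the symmetry principles. Since $A=L_\sigma(A)$ is equivalent to $A=L_{\sigma^{-1}}(A)$, we always have $I_\sigma=I_{\sigma^{-1}}$. Writing $\mu=(12)(34)$ and $\rho=(13)(24)$, the bijections $A\mapsto A^t=L_\mu(A)$ and $A\mapsto TAT=L_\rho(A)$ preserve positive semidefiniteness, and a one-line computation gives $A\in I_\sigma\Leftrightarrow A^t\in I_{\mu\sigma\mu}$ and $A\in I_\sigma\Leftrightarrow TAT\in I_{\rho\sigma\rho}$. The accompanying fact I need is that complete reducibility of $F_A\circ G_A$ is invariant under both operations: one checks $F_{A^t}\circ G_{A^t}(X)=(F_A\circ G_A(X^t))^t$ and $F_{TAT}\circ G_{TAT}=G_A\circ F_A$, and then argues via Lemma~\ref{lemmaequivalence} and Proposition~\ref{propositioncompletelyreducible} that $F_A\circ G_A$, its transpose-conjugate, and $G_A\circ F_A$ are all simultaneously completely reducible or not. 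Consequently the property ``every $A\in I_\sigma$ has $F_A\circ G_A$ completely reducible'' is constant on the orbits of $\sigma$ under inversion together with conjugation by the Klein four-group $V=\langle\mu,\rho\rangle$.

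Next I would enumerate these orbits and match them to the three families. The orbits through $(34)$, through $(23)$, and through $(243)$ are the good ones: $I_{(34)}$ and $I_{(12)}$ consist of matrices with $A=A^{t_2}$ (resp. $A=A^{t_1}$) and are therefore PPT; $I_{(243)}=\{A\ge0:\ A=S(A^{t_2})\}$ lies in the SPC matrices by Lemma~\ref{SPCequiv}, and the remaining members of the two (inversion-merged) $3$-cycle orbits are SPC because the SPC property is preserved under $A\mapsto A^t$ and $A\mapsto TAT$; and $I_{(23)}$ is by definition the realignment-invariant class, while for its orbit partner $(14)$ I would invoke Lemma~\ref{Formatinvariant} to write $A^t=\sum_i\lambda_i\gamma_i\otimes\gamma_i^t$, note that then $TAT=A^t$, and apply property $(8)$ of Lemma~\ref{propertiesofS} to get $S(A)=TA^tT=A$, so $I_{(14)}$ is realignment invariant too. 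For every remaining orbit I would produce a non-completely-reducible member: the matrix $uu^t$ is symmetric, flip-invariant, and fixed by $L_{(1234)}$ and by $L_{(24)}$, hence lies in $I_\sigma$ for all the ``symmetric'' orbits, and is not completely reducible by Lemma~\ref{notdecomposable}; for the two stubborn $4$-cycle orbits I would use the matrix $A_0$ of Lemma~\ref{notdecomposable2}, which satisfies $S(A_0)=A_0^t$ and hence $A_0\in I_{(1243)}$, and then transport it by $T(\cdot)T$ into $I_{(1324)}$.

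The hard part will be the transfer of complete reducibility under $A\mapsto TAT$, i.e. showing that $G_A\circ F_A$ is completely reducible exactly when $F_A\circ G_A$ is. This is precisely what allows the single counterexample $A_0$ to kill both $4$-cycle orbits $\{(1243),(1342)\}$ and $\{(1324),(1423)\}$ at once, and it also underlies the entire orbit reduction. I expect to settle it through Lemma~\ref{lemmaequivalence}, matching each eigenvector $\gamma$ of $F_A\circ G_A$ with the eigenvector $G_A(\gamma)$ of $G_A\circ F_A$ and using that the flip carries the splitting $A=(V_1\otimes W_1)A(V_1\otimes W_1)+\cdots$ to the analogous splitting of $TAT$. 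A secondary chore is the bookkeeping that certifies each $\sigma$ outside the three good orbits really meets one of the explicit counterexamples, where the computations $L_{(1234)}(uu^t)=uu^t$ and $S(A_0)=A_0^t$ carry the load.
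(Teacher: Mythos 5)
Your overall strategy (reduce the $24$ permutations to orbits under inversion and conjugation by $\langle\mu,\rho\rangle$, transfer complete reducibility along $A\mapsto A^t$ and $A\mapsto TAT$, then handle one representative per orbit) is a legitimate and economical repackaging of the paper's case-by-case analysis, and the transfer fact you flag as the hard part is indeed true and provable from Lemma~\ref{lemmaequivalence}: since $F_{TAT}\circ G_{TAT}=G_A\circ F_A$, one matches each positive semidefinite eigenvector $\gamma$ of $F_A\circ G_A$ with the eigenvector $G_A(\gamma)$ of $G_A\circ F_A$ (and conversely $\delta\mapsto F_A(\delta)$), and the required splitting of $A$ is carried to the splitting of $TAT$ by conjugating with $T$. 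The PPT orbit $\{(12),(34)\}$, the realignment orbit $\{(23),(14)\}$, the eight $3$-cycles, and the six ``symmetric'' orbits killed by $uu^t$ are all handled correctly. However, there is a genuine error in your treatment of the $4$-cycles. Your plan to transport the counterexample $A_0$ of Lemma~\ref{notdecomposable2} from $I_{(1243)}$ into $I_{(1324)}$ by $T(\cdot)T$ cannot work: conjugation by $\rho$ sends $I_{(1243)}$ to $I_{\rho(1243)\rho}=I_{(1342)}$, which is the \emph{same} orbit (indeed the same set, because $(1342)=(1243)^{-1}$ and $I_\sigma=I_{\sigma^{-1}}$). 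The operations that define your orbits can never move a matrix from one orbit to another --- that is precisely what makes them orbits --- so $\{(1324),(1423)\}$ is untouched by $A_0$.

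Worse, no counterexample exists in that orbit at all, so no transport argument of any kind could succeed. If $A=L_{(1423)}(A)=TA^{t_2}T$ then $A^{t_2}=TAT\geq 0$, and if $A=L_{(1324)}(A)=(TA^{t_2}T)^t$ then $A^{t_2}=TA^tT\geq 0$; in either case $A$ is PPT, hence $F_A\circ G_A$ is completely reducible by Theorem~\ref{decomposablepropertyPPT}. Thus for $\sigma\in\{(1324),(1423)\}$ the hypothesis of the theorem is genuinely satisfied, and a proof must establish the conclusion $I_\sigma\subset\{\text{PPT matrices}\}$; your proposal instead asserts (falsely) that the hypothesis is vacuous, and so proves nothing for these two permutations. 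This is exactly where the paper's coset bookkeeping differs from your orbit bookkeeping: the paper's PPT cases $5$--$8$ run over the coset $\sigma_2 V=\{(34),(12),(1423),(1324)\}$, which is the union of \emph{two} of your orbits, and you only recognized one of them as PPT. The repair is one line --- derive $A^{t_2}=TA^tT\geq0$ (resp.\ $A^{t_2}=TAT\geq 0$) from the invariance, as above --- but as written the case analysis contains a false claim and leaves one orbit without the required containment.
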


The solution for $k=2$ is simpler.

\begin{theorem}\label{solutionproblem_2_k=2}
Let $k=2$. If for every $A\in I_{\sigma}$,  $F_A\circ G_A:M_k\rightarrow M_k$ is completely reducible then $I_{\sigma}\subset \{\text{PPT matrices}\}$.
\end{theorem}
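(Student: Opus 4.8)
The plan is to cut the $24$ sets $I_{\sigma}$ down to a few representatives and then handle each with tools already in hand, exploiting the two facts special to $k=2$ that fuse the three types into PPT. First I would record that the two extra symmetries from the proof of \thmref{solutionproblem1}, namely $L_{\mu}(A)=A^{t}$ with $\mu=(12)(34)$ and $L_{\rho}(A)=TAT$ with $\rho=(13)(24)$, preserve positivity \emph{and} the PPT property: for $A=\sum A_i\otimes B_i$ one has $(A^{t})^{t_2}=(A^{t_2})^{t}$ and $(TAT)^{t_2}=T(A^{t_2})^{t}T$, both positive semidefinite whenever $A^{t_2}$ is. The four maps $\{L_{e},L_{\mu},L_{\rho},L_{\mu\rho}\}$ form a Klein four-group $K$, each element its own inverse, and a short computation with $L_{\alpha}\circ L_{\beta}=L_{\beta\alpha}$ gives $I_{\kappa\sigma\kappa^{-1}}=L_{\kappa}(I_{\sigma})$ for $\kappa\in K$. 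Since $L_{\kappa}$ is a PPT- and positivity-preserving bijection, the property ``$I_{\sigma}\subset\{\text{PPT}\}$'' is constant along a $K$-conjugacy class; as $K$ acts by conjugation on $S_4$ with exactly twelve orbits, I only need one representative from each orbit.

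Next I would discard the six orbits whose permutations preserve the partition $\{\{1,3\},\{2,4\}\}$, namely those of $e,(13),(12)(34),(13)(24),(14)(23),(1234)$. For every such $\sigma$ one checks directly that $L_{\sigma}(uu^{t})=uu^{t}$, so $uu^{t}\in I_{\sigma}$; by \lemref{notdecomposable} the map $F_{uu^{t}}\circ G_{uu^{t}}$ is not completely reducible, hence these $I_{\sigma}$ are not good and impose no constraint (note $uu^{t}$ is $K$-fixed, so this is consistent across each orbit).

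For the remaining six orbits I would read off the defining relation $A=L_{\sigma}(A)$ and match it to a known type. The orbit of $(34)$ gives $A=A^{t_2}$, so $A$ is PPT; the orbit of $(1324)$ gives $A=(TAT)^{t_2}$, whence $A^{t_2}=TAT\geq0$ and again $A$ is PPT; both three-cycle orbits (representatives $(243)$ and $(234)$) reduce, after applying $S$ if needed, to $A=S(A^{t_2})\geq0$, so $A$ is SPC by \lemref{SPCequiv}; and the orbit of $(23)$ gives $A=S(A)$, i.e.\ $A$ invariant under realignment. By \thmref{decomposablepropertyPPT}, \thmref{decomposablepropertySPC} and \thmref{decomposablepropertyInvariant} these five orbits are good. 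Finally, for $k=2$ the three types collapse into PPT: every SPC matrix is PPT (theorem $4.3$ in \cite{carielloSPC}, as recorded in the Remark after \thmref{solutionproblem1}) and every matrix invariant under realignment is PPT by \lemref{invariantPPT}. Hence all five orbits satisfy $I_{\sigma}\subset\{\text{PPT}\}$.

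The single orbit left is that of $(1243)$, where $A=L_{(1243)}(A)=S(A^{t})$, which for Hermitian $A$ is equivalent (by property $(8)$ of \lemref{propertiesofS}) to $S(A)=TAT$. This is precisely the class isolated by \lemref{notdecomposable2}, so for $k\geq3$ it contains a matrix that is not completely reducible and is excluded — which is why it does not appear in \thmref{solutionproblem_2_k>2}. The main obstacle is that for $k=2$ this counterexample is unavailable: its construction needs a unit vector $e_3$ orthogonal to both $v_1$ and $\overline{v_1}$, whereas in $\mathbb{C}^{2}$ the pair $\{v_1,\overline{v_1}\}$ is already an orthonormal basis, leaving no room. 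I therefore expect the decisive step to be a direct argument, in the spirit of \lemref{invariantPPT}, showing that in $M_2\otimes M_2$ the conditions $A\geq0$ and $S(A)=TAT$ force $A^{t_2}\geq0$ (equivalently, that this orbit is contained in the PPT set). Granting this, every good $I_{\sigma}$ satisfies $I_{\sigma}\subset\{\text{PPT}\}$, which is the assertion of the theorem.
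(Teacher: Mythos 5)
Your proposal is essentially the paper's argument with better bookkeeping. The paper simply runs through all $24$ cases of the proof of \thmref{solutionproblem_2_k>2}, whereas you compress this by noting that $I_{\kappa\sigma\kappa^{-1}}=L_{\kappa}(I_{\sigma})$ for $\kappa$ in the Klein four-group and that $L_{\kappa}$ preserves positivity and the PPT property, so only one representative per conjugacy orbit is needed. That reduction is sound (your formulas $(A^{t})^{t_2}=(A^{t_2})^{t}$ and $(TAT)^{t_2}=T(A^{t_2})^{t}T$ are correct, and the composition rule $L_{\tau}\circ L_{\sigma}=L_{\sigma\tau}$ makes the conjugation identity a two-line check), your list of the six ``bad'' orbits fixing $uu^{t}$ is exactly the dihedral stabilizer of the partition $\{\{1,3\},\{2,4\}\}$, and your matching of the orbits of $(34)$, $(1324)$, $(243)$, $(234)$, $(23)$ with PPT, PPT, SPC, SPC and realignment-invariance agrees with the paper's cases $5$--$9$, $12$--$20$.

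The one place where your proof is left unfinished is the orbit of $(1243)$, i.e.\ the condition $A\geq 0$, $S(A)=TAT$: you correctly explain why the counterexample of \lemref{notdecomposable2} cannot exist in $\mathbb{C}^{2}$, but then you only ``expect'' and ``grant'' that such $A$ must be PPT for $k=2$. This step is not a conjecture you need to supply: it is literally the third alternative of \lemref{invariantPPT}, whose statement is ``if $A=S(A)$ or $A^{t}=S(A)$ or $TAT=S(A)$ then $A$ is PPT'' in $M_2\otimes M_2$. You cite this lemma, but you only use its first alternative; the second alternative $A^{t}=S(A)$ handles the other member $(1342)$ of this orbit directly (your conjugation argument also covers it). In fact the paper's own proof of \thmref{solutionproblem_2_k=2} consists of exactly this observation: cases $10$ and $11$ of the $k\geq 3$ proof --- which are precisely the two members of your $(1243)$ orbit --- are PPT for $k=2$ by \lemref{invariantPPT}, and all remaining cases go through as before because SPC matrices and matrices invariant under realignment are PPT for $k=2$. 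So once you invoke \lemref{invariantPPT} in full, your proof is complete; there is no mathematical gap, only a citation you stopped reading one clause too early.
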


\noindent\textit{Proof of theorem \ref{solutionproblem_2_k>2}}.

We saw in the solution of problem $1$ that
$\displaystyle \{L_{\sigma},\ \sigma\in S_4\} =\bigcup_{i=1}^6 \{L_{\sigma_i}, L_{\mu}\circ L_{\sigma_i}, L_{\rho}\circ L_{\sigma_i},L_{\mu}\circ L_{\rho}\circ L_{\sigma_i} \}$, where $\sigma_1,\ldots,\sigma_6$ are the permutations in $S_4$ that fix 1. Recall that $L_{\mu}(A)=A^t$, $L_{\rho}(A)=TAT$, where $T$ is the flip operator $($definition \ref{definition1}$)$ and $L_{\rho}(C\otimes D)=D\otimes C$, for every $C,D\in M_k$. 

 These permutations that fix $1$ are associated to the following linear transformations $L_{\sigma_i}: M_k\otimes M_k\rightarrow M_k\otimes M_k$. 
 \begin{itemize}
 \item $L_{\sigma_1}(A)=A$
 \item  $L_{\sigma_2}(A)=A^{t_2}$, where $(a_1a_2^t\otimes a_3a_4^t)^{t_2}=a_1a_2^t\otimes a_4a_3^t$ $($the partial transposition$)$
 \item $L_{\sigma_3}(A)=S(A)$, where $S(a_1a_2^t\otimes a_3a_4^t)=a_1a_3^t\otimes a_2a_4^t$ $($the realignment map$)$
 \item $L_{\sigma_4}(A)=S(A^{t_2})$, where $S((a_1a_2^t\otimes a_3a_4^t)^{t_2})=a_1a_4^t\otimes a_2a_3^t$
 \item $L_{\sigma_5}(A)=S(A)^{t_2}$, where $S(a_1a_2^t\otimes a_3a_4^t)^{t_2}=a_1a_3^t\otimes a_4a_2^t$
 \item $L_{\sigma_6}(A)=AT$, where $(a_1a_2^t\otimes a_3a_4^t)T=a_1a_4^t\otimes a_3a_2^t$ ($T$ is the flip operator, definition  \ref{definition1})
 \end{itemize}

Next, let $A\in M_k\otimes M_k$ be a positive semidefinite Hermitian matrix. Recall that if $A$ is PPT or SPC or invariant under realignment then $F_A\circ G_A:M_k\rightarrow M_k$ is completely reducible, by theorems \ref{decomposablepropertyPPT}, \ref{decomposablepropertySPC} and \ref{decomposablepropertyInvariant}. Recall that, by lemma \ref{SPCequiv}, $A$ is SPC if and only if $S(A^{t_2})$ is also a positive semidefinite Hermitian matrix and $S^2=Id$, by property 2 in  lemma \ref{propertiesofS}.

Let us consider $A=L_{\sigma}(A)$ for each $\sigma\in S_4$ separately. Thus, we have to consider the following 24 cases.

\begin{itemize}
\item[1)] \underline{$A=L_{\sigma_1}(A)$:} Notice that $uu^t=uu^t$ and  $F_{uu^t}\circ G_{uu^t}:M_k\rightarrow M_k$ is not completely reducible by lemma \ref{notdecomposable}.
\item[2)] \underline{$A=L_{\mu}\circ L_{\sigma_1}(A)$: } Notice that $uu^t=(uu^t)^t$ and $F_{uu^t}\circ G_{uu^t}:M_k\rightarrow M_k$ is not completely reducible by lemma \ref{notdecomposable}.
\item[3)] \underline{$A=L_{\rho}\circ L_{\sigma_1}(A)$:} Notice that $uu^t=Tuu^tT$ and $F_{uu^t}\circ G_{uu^t}:M_k\rightarrow M_k$ is not completely reducible by lemma \ref{notdecomposable}.
\item[4)]  \underline{$A=L_{\mu}\circ L_{\rho}\circ L_{\sigma_1}(A)$:} Notice that $uu^t=(Tuu^tT)^t$ and $F_{uu^t}\circ G_{uu^t}:M_k\rightarrow M_k$ is not completely reducible by lemma \ref{notdecomposable}.
\item[5)] \underline{$A=L_{\sigma_2}(A)$:} If $A=A^{t_2}$ then $A$ is PPT.
\item[6)] \underline{$A=L_{\mu}\circ L_{\sigma_2}(A)$: }If $A=(A^{t_2})^t$ then $A^t=A^{t_2}$ and $A$ is PPT.
\item[7)] \underline{$A=L_{\rho}\circ L_{\sigma_2}(A)$:} If $A=T(A^{t_2})T$ then $TAT=A^{t_2}$ and $A$ is PPT.
\item[8)] \underline{$A=L_{\mu}\circ L_{\rho}\circ L_{\sigma_2}(A)$:} If $A=(T(A^{t_2})T)^t$ then $TA^tT=A^{t_2}$ and $A$ is PPT.
\item[9)] \underline{$A=L_{\sigma_3}(A)$:} If $A=S(A)$ then $A$ invariant under realignment.
\item[10)] \underline{$A=L_{\mu}\circ L_{\sigma_3}(A)$:} Let $A=v\overline{v}^t+S(\overline{v}v^t)\in M_k\otimes M_k$, $k\geq 3$, as in lemma \ref{notdecomposable2}. 

Notice that by properties $(1)$ and $(2)$ in lemma \ref{propertiesofS} and since $V=F^{-1}(v)$ is Hermitian $($definition \ref{definition1}$)$, $L_{\mu}\circ L_{\sigma_3}(A)=S(A)^t=(V\otimes\overline{V}+\overline{v}v^t)^t=\overline{V}\otimes V+v\overline{v}^t=S(\overline{v}v^t)+v\overline{v}^t=A$.  By lemma \ref{notdecomposable2}, $F_{A}\circ G_{A}:M_k\rightarrow M_k$ is not completely reducible.
\item[11)]\underline{$A=L_{\rho}\circ L_{\sigma_3}(A)$:} Let $A=v\overline{v}^t+S(\overline{v}v^t)\in M_k\otimes M_k$, $k\geq 3$, as in lemma \ref{notdecomposable2}. 

Notice that by properties $(1)$ and $(2)$ in lemma \ref{propertiesofS} and since $L\overline{v}=v$ and $v^tT=\overline{v}^t$,  because $v\in\mathbb{C}^k\otimes\mathbb{C}^k$ is Hermitian $($definition \ref{definition1}$)$, $L_{\rho}\circ L_{\sigma_3}(A)=T(S(A))T=T(V\otimes\overline{V}+\overline{v}v^t)T=\overline{V}\otimes V+v\overline{v}^t=S(\overline{v}v^t)+v\overline{v}^t=A$.  By lemma \ref{notdecomposable2}, $F_{A}\circ G_{A}:M_k\rightarrow M_k$ is not completely reducible.
\item[12)] \underline{$A=L_{\mu}\circ L_{\rho}\circ L_{\sigma_3}(A)$:} If $A=(TS(A)T)^t$ then $TA^tT=S(A)$.

Thus, $S(A)$ is Hermitian and has a spectral decomposition $\sum_{i=1}^n\lambda_i v_i\overline{v_i}^t$ with $\lambda_i\in\mathbb{R}$. Thus, by properties $(1)$ and $(2)$ of lemma \ref{propertiesofS},  $A=\sum_{i=1}^n\lambda_i V_i\otimes \overline{V_i}$, where $F^{-1}(v_i)=V_i$. Therefore $Av$ is Hermitian for every Hermitian $v\in\mathbb{C}^k\otimes\mathbb{C}^k$. By lemma \ref{shape2},  $A$ has a Hermitian decomposition $\sum_{i=1}^n\alpha_i\gamma_i\otimes\gamma_i^t$. Thus, $T(A^t)T=T(\sum_{i=1}^n\alpha_i\gamma_i^t\otimes\gamma_i)T=\sum_{i=1}^n\alpha_i\gamma_i\otimes\gamma_i^t=A$.
Therefore, $A=S(A)$ and $A$ is invariant under realignment. 
\item[13)] \underline{$A=L_{\sigma_4}(A)$:} If $A=S(A^{t_2})$ then $A$ is SPC.
\item[14)] \underline{ $A=L_{\mu}\circ L_{\sigma_4}(A)$: }If $A=(S(A^{t_2}))^t$ then $A^t=S(A^{t_2})$ and $A$ is SPC.
\item[15)] \underline{ $A=L_{\rho}\circ L_{\sigma_4}(A)$:} If $A=T(S(A^{t_2}))T$ then $TAT=S(A^{t_2})$ and $A$ is SPC.
\item[16)] \underline{ $A=L_{\mu}\circ L_{\rho}\circ L_{\sigma_4}(A)$:} If $A=(TS(A^{t_2})T)^t$ then $TA^tT=S(A^{t_2})$ and $A$ is SPC.
\item[17)] \underline{ $A=L_{\sigma_5}(A)$:} If $A=S(A)^{t_2}$ then $S(A^{t_2})=A$ and $A$ is SPC. 
\item[18)] \underline{ $A=L_{\mu}\circ L_{\sigma_5}(A)$: }If $A=(S(A)^{t_2})^t$ then $A^{t_2}=S(A)^t=S(TAT)$, by property $(7)$ in lemma \ref{propertiesofS}. Thus, $S(A^{t_2})=TAT$  and $A$ is SPC.
\end{itemize}

In the next two cases, we shall need two simple properties of the partial transpositions, $(TBT)^{t_1}=TB^{t_2}T$ and $(B^{t_1})^{t_2}=B^t$, where $(\sum_{i}C_i\otimes D_i)^{t_1}=\sum_i C_i^t\otimes D_i$. 

\begin{itemize}
\item[19)] \underline{$A=L_{\rho}\circ L_{\sigma_5}(A)$:} If $A=TS(A)^{t_2}T$ then $A=(TS(A)T)^{t_1}=S(A^t)^{t_1}$, by property $(8)$ in lemma \ref{propertiesofS}. Therefore $A^{t_2}=S(A^t)^t=S(TA^tT)$, by property $(7)$ in lemma \ref{propertiesofS}, and $S(A^{t_2})=TA^tT$. Thus, $A$ is SPC.
\item[20)] \underline{$A=L_{\mu}\circ L_{\rho}\circ L_{\sigma_5}(A)$:} If $A=(TS(A)^{t_2}T)^t$ then $A=((TS(A)T)^{t_1})^t=(TS(A)T)^{t_2}=S(A^t)^{t_2}$. Thus, $S(A^{t_2})=A^t$ and $A$ is SPC.
\item[21)] \underline{$A=L_{\sigma_6}(A)$:} Notice that $uu^t=uu^tT$ and $F_{uu^t}\circ G_{uu^t}:M_k\rightarrow M_k$ is not completely reducible by lemma \ref{notdecomposable}.
\item[22)] \underline{$A=L_{\mu}\circ L_{\sigma_6}(A)$: } Notice that $uu^t=(uu^tT)^t$ and $F_{uu^t}\circ G_{uu^t}:M_k\rightarrow M_k$ is not completely reducible by lemma \ref{notdecomposable}.
\item[23)] \underline{$A=L_{\rho}\circ L_{\sigma_6}(A)$:} Notice that $uu^t=T(uu^tT)T$ and $F_{uu^t}\circ G_{uu^t}:M_k\rightarrow M_k$ is not completely reducible by lemma \ref{notdecomposable}.
\item[24)] \underline{$A=L_{\mu}\circ L_{\rho}\circ L_{\sigma_6}(A)$:} Notice that $uu^t=(Tuu^tTT)^t$ and $F_{uu^t}\circ G_{uu^t}:M_k\rightarrow M_k$ is not completely reducible by lemma \ref{notdecomposable}.
\end{itemize}

Finally, if for every $A\in I_{\sigma}$, $F_{A}\circ G_{A}:M_k\rightarrow M_k$ is completely reducible then
$I_{\sigma}\subset \{\text{PPT matrices}\}$ or $I_{\sigma}\subset\{\text{SPC matrices}\}$or $I_{\sigma}\subset\{\text{Matrices Invariant under Realignment}\}$. 
$\square$\\

In order to prove theorem \ref{solutionproblem_2_k=2}, we need the following lemma. In this lemma we show that  matrices invariant under realigment $($matrices of the case 9$)$, the matrices of the cases 10 and 11 of the proof of theorem \ref{solutionproblem_2_k>2} are PPT in $M_2\otimes M_2$. 
Notice that this lemma is not true in $M_k\otimes M_k$, for $k\geq 3$. In the remark \ref{counterexample}, we provide an example of a matrix invariant under realignment that is not PPT(or SPC) and in lemma \ref{notdecomposable2} we provide a counterexample for the other two types of matrices.

\begin{lemma}\label{invariantPPT} Let $A\in M_2\otimes M_2$ be a positive semidefinite Hermitian matrix. If $A=S(A)$ or $A^t=S(A)$ or $TAT=S(A)$ then A is PPT.
\end{lemma}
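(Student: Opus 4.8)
The plan is to treat the three hypotheses uniformly. First I would observe that the conditions $A^t=S(A)$ and $TAT=S(A)$ are actually equivalent: applying $S$ to $TAT=S(A)$ and using $S^2=Id$ with property $(7)$ of \lemref{propertiesofS} gives $A=S(A)^t$, i.e. $A^t=S(A)$, while property $(8)$ turns $A^t=S(A)$ back into $TAT=S(A)$. In every case $S(A)$ is Hermitian: for $A=S(A)$ this is immediate, and for $A^t=S(A)$ one has $S(A)=\overline{A}$ because $A^t=\overline{A}$. Hence, running the argument in the proof of \lemref{shape2} (spectrally decompose the Hermitian matrix $S(A)$ and apply $S$), $A$ maps Hermitian vectors to Hermitian vectors and therefore acquires a decomposition $A=\sum_{j=1}^n\lambda_j\gamma_j\otimes\gamma_j^t$ with $\{\gamma_j\}$ an orthonormal family of Hermitian matrices and $\lambda_j\in\mathbb{R}$ (with $\lambda_j>0$ in the invariant case, by \lemref{Formatinvariant}). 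I would record the associated positive map $\Phi(X)=\sum_j\lambda_j\gamma_jX\gamma_j$ on $M_2$, whose natural representation is $A$ and whose Choi matrix is $S(A)$; thus the hypothesis says exactly that this Choi matrix equals $A$ (case $9$) or $\overline{A}$ (cases $10,11$).

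The key device for $k=2$ is the qubit adjugate identity $\sigma_yB^t\sigma_y=tr(B)\,Id-B$, i.e. $B^t=tr(B)\,Id-\sigma_yB\sigma_y$ for every Hermitian $B\in M_2$. Writing a Hermitian decomposition $A=\sum_iA_i\otimes B_i$ and applying this in the second factor gives $A^{t_2}=(tr_2A)\otimes Id-(Id\otimes\sigma_y)A(Id\otimes\sigma_y)$, where $tr_2A=\sum_itr(B_i)A_i$ is the partial trace. Since conjugation by the unitary $Id\otimes\sigma_y$ preserves positivity and fixes $(tr_2A)\otimes Id$, the matrix $A$ is PPT if and only if $(tr_2A)\otimes Id\ge A$. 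In other words, in $M_2\otimes M_2$ the PPT condition is exactly this reduction inequality, and that is what I would verify.

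To check $(tr_2A)\otimes Id\ge A$ I would pass to $M_2$ with its trace inner product via $F$. Because $AF(Y)=F(\Phi(Y))$, the inequality becomes the operator inequality $\Phi\le L_{P}$ of self-adjoint maps on $M_2$, where $L_P(Y)=PY$ and $P=tr_2A$. Here the realignment equation enters decisively: it forces $\Phi$ to coincide with $\Psi(X)=\sum_j\lambda_j\,tr(\gamma_jX)\,\gamma_j$, the projection-type map with eigenvalues $\lambda_j$ on the $\gamma_j$. Consequently the two expressions for the marginal collapse, $P=tr_2A=\sum_j\lambda_j tr(\gamma_j)\gamma_j=\Phi(Id)=\sum_j\lambda_j\gamma_j^2$, and one computes $\langle Y,\Phi(Y)\rangle=\sum_j\lambda_j|tr(\gamma_jY)|^2$. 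The target thus reads $\sum_j\lambda_j\,tr(Y^*\gamma_j^2Y)\ge\sum_j\lambda_j|tr(\gamma_jY)|^2$ for all $Y\in M_2$.

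The hard part is precisely this last inequality, and it is where the full strength of the hypothesis, not merely the shape $\sum_j\lambda_j\gamma_j\otimes\gamma_j^t$, must be used. For the bare shape the inequality can fail (take $\lambda=(1,0,0,0)$, $\gamma_1=Id/\sqrt2$, $Y=Id$), which is consistent with the fact that such matrices need not be PPT when $k\ge3$; only the realignment equation $\Phi=\Psi$ makes $tr_2A$ equal to $\sum_j\lambda_j\gamma_j^2$ rather than to $\sum_j\lambda_j tr(\gamma_j)\gamma_j$, and this is what rescues positivity. The elementary estimate $\sum_j\lambda_j\langle Y,(L_{\gamma_j}-R_{\gamma_j})^2Y\rangle\ge0$ yields $\Phi\le\tfrac12(L_P+R_P)$ for free, but upgrading this to $\Phi\le L_P$ is exactly the obstacle: I expect to close it by the $2\times2$-specific structure, writing $P=p\,Id+\vec q\cdot\vec\sigma$ and expanding the orthonormal Hermitian family $\{\gamma_j\}$ in Pauli form, i.e. a short Bell-diagonal/Pauli-coefficient computation in which the constraint imposed by $S(A)=A$ (resp. $\overline{A}$) on the Bloch tensor makes the sign flip of $\sigma_y$ defining $A^{t_2}$ preserve positivity. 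Cases $10$ and $11$ are then obtained by the same computation with $A$ replaced by $\overline{A}$.
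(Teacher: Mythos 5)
Your preparatory steps are correct: the equivalence of $A^t=S(A)$ and $TAT=S(A)$ via properties $(7)$ and $(8)$ of \lemref{propertiesofS}, the decomposition $A=\sum_j\lambda_j\gamma_j\otimes\gamma_j^t$ obtained from \lemref{shape2} (the same starting point as the paper), the qubit identity $\sigma_y B^t\sigma_y=tr(B)\,Id-B$ showing that PPT in $M_2\otimes M_2$ is equivalent to the reduction inequality $(tr_2A)\otimes Id\ge A$, and the translation of the hypothesis into $\Phi=\Psi$ with $P=\sum_j\lambda_j tr(\gamma_j)\gamma_j=\sum_j\lambda_j\gamma_j^2$. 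But the proof stops exactly where the lemma's content begins: the inequality $\sum_j\lambda_j tr(\gamma_j^2YY^*)\ge\sum_j\lambda_j|tr(\gamma_jY)|^2$ for all $Y\in M_2$ is never established. You explicitly defer it (``I expect to close it by the $2\times2$-specific structure\dots''), and since your own example with $\gamma_1=Id/\sqrt2$ shows this inequality is false without the realignment constraint, nothing you have written certifies it; this unproven positivity statement \emph{is} the theorem, so the proposal is a reduction, not a proof. A secondary flaw: the ``for free'' estimate $\Phi\le\tfrac12(L_P+R_P)$, i.e.\ $\sum_j\lambda_j(L_{\gamma_j}-R_{\gamma_j})^2\ge0$, requires $\lambda_j\ge0$, which you record only in the invariant case. (It does hold in all three cases, because the $\lambda_j$ produced by \lemref{shape2} are the eigenvalues of $S(A)$, and $S(A)\in\{A,\overline{A},TAT\}$ is positive semidefinite under each hypothesis; but you neither state nor use this.)

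For contrast, the paper closes the argument without proving any operator inequality. It cites proposition 1 of \cite{augusiak}, namely that a non-PPT matrix in $M_2\otimes M_2$ has a partial transpose of \emph{full rank}, and observes that $A^{t_2}=\sum_j\lambda_j\gamma_j\otimes\gamma_j$ preserves the one-dimensional antisymmetric subspace spanned by $w=e_1\otimes e_2-e_2\otimes e_1$, so $A^{t_2}w=\lambda w$ with $\lambda\in\mathbb{R}$. The realignment hypothesis then gives $A^{t_2}=A^{t_2}T$ (when $A=S(A)$) or $A^{t_2}=\overline{A^{t_2}}T$ (when $A^t=S(A)$, to which the case $TAT=S(A)$ reduces), and since $Tw=-w$ and $\overline{w}=w$ this forces $\lambda=0$; hence $A^{t_2}$ is singular and $A$ must be PPT. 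If you want to keep your route through the reduction criterion, you must actually carry out the Bloch/Pauli computation verifying $L_P\ge\Psi$ under the constraint $\Phi=\Psi$; as it stands, the central step is a gap.
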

\begin{proof} First of all, remind that if $A\in M_2\otimes M_2$ is not PPT then $A^{t_2}$ has full rank and has only one negative eigenvalue. The reader can find this result in proposition 1 in \cite{augusiak}.

Now, if $S(A)=A$ or $A^t$ or $TAT$ then $S(A)$ is Hermitian and has a spectral decomposition $\sum_{i=1}^n\lambda_i v_i\overline{v_i}^t$ with $\lambda_i\in\mathbb{R}$. Thus, by properties $(1)$ and $(2)$ of lemma \ref{propertiesofS},  $A=\sum_{i=1}^n\lambda_i V_i\otimes \overline{V_i}$, where $F^{-1}(v_i)=V_i$. Therefore $Av$ is Hermitian for every Hermitian $v\in\mathbb{C}^k\otimes\mathbb{C}^k$. By lemma \ref{shape2},  $A$ has a Hermitian decomposition $\sum_{i=1}^n\alpha_i\gamma_i\otimes\gamma_i^t$.

Thus, $A^{t_2}=\sum_{i=1}^n\alpha_i\gamma_i\otimes\gamma_i$ and the subspaces of symmetric and anti-symmetric tensors in $\mathbb{C}^2\otimes\mathbb{C}^2$ are left invariant by $A^{t_2}$. Since $A^{t_2}$ is Hermitian and the subspace of anti-symmetric tensors in $\mathbb{C}^2\otimes\mathbb{C}^2$ is generated by $w=e_1\otimes e_2-e_2\otimes e_1$, where $\{e_1,e_2\}$ is the canonical basis of $\mathbb{C}^2$, then $A^{t_2}w=\lambda w$, where $\lambda\in\mathbb{R}$.

First, suppose $A=S(A)$. Thus, $A^{t_2}=S(AT)T=S(A)^{t_2}T=A^{t_2}T$, by properties $(4)$ and $(6)$ in lemma \ref{propertiesofS}. 
Since $T$ is the flip operator and $w$ is an anti-symmetric tensor then $Tw=-w$. Therefore $A^{t_2}w=A^{t_2}Tw=-A^{t_2}w$ and $A^{t_2}w=0$. Therefore $A^{t_2}$ does not have full rank and $A$ must be PPT. 

Second, suppose $A^t=S(A)$. Thus, $A^{t_2}=S(AT)T=S(A)^{t_2}T=(A^{t})^{t_2}T=(A^{t_2})^tT$, by properties $(4)$ and $(6)$ in lemma \ref{propertiesofS}. Since $A^{t_2}$ is hermitian, $(A^{t_2})^t=\overline{A^{t_2}}$ and $A^{t_2}=\overline{A^{t_2}}T$.
Since $\overline{w}=w$ and $\lambda\in\mathbb{R}$ then $\overline{A^{t_2}}w=\overline{A^{t_2}w}=\lambda w$. 
Thus, $\lambda w=A^{t_2}w=\overline{A^{t_2}}Tw=-\overline{A^{t_2}}w=-\lambda w$ and $\lambda=0$. Therefore $A^{t_2}$ does not have full rank and $A$ is PPT. 

Finally, if $TAT=S(A)$ then $A=TS(A)T=S(A^t)$, by property $(8)$ in lemma \ref{propertiesofS}. Thus, $S(A)=A^t$, by property $(2)$ in lemma \ref{propertiesofS} and by the last case $A$ is PPT.
\end{proof}

\begin{example}\label{counterexample} The matrix $Id\otimes Id+uu^t-T\in M_k\otimes M_k,\ k\geq 3,$ is invariant under realignment $($See \ref{example1}$)$ but it is not PPT, because its partial tranposition is $Id\otimes Id+T-uu^t$ and $(Id\otimes Id+T-uu^t)u=(2-k)u$. 
Notice also that $S((Id\otimes Id+uu^t-T)^{t_2})=S(Id\otimes Id+T-uu^t)=uu^t+T- Id\otimes Id$ and any anti-symmetric vector in 
$\mathbb{C}^k\otimes\mathbb{C}^k$ is an eigenvector of $uu^t+T- Id\otimes Id$ associated to $-2$. Thus, $S((Id\otimes Id+uu^t-T)^{t_2})$ is not positive semidefinite and $Id\otimes Id+uu^t-T$ is not SPC, by lemma \ref{SPCequiv}.
\end{example}

\noindent\textit{Proof of theorem \ref{solutionproblem_2_k=2}.}

First, by lemma \ref{invariantPPT}, the matrices $A$ considered in the cases 10 and 11 in the proof of theorem \ref{solutionproblem_2_k>2} are PPT for $k=2$, therefore $F_A\circ G_{A}:M_k\rightarrow M_k$ is completely reducible by lemma \ref{decomposablepropertyPPT}.

Now, the only cases in the proof of theorem \ref{solutionproblem_2_k>2} such that the hypothesis $k\geq 3$ was used, were the cases 10 and 11.
Thus, for the other cases we can use the same arguments used in the proof of theorem \ref{solutionproblem_2_k>2}. Just notice that SPC matrices and Matrices Invariant under Realigment are PPT for $k=2$, by theorem 4.3 in \cite{carielloSPC} and by lemma \ref{invariantPPT}, respectively. Therefore, if for every $A\in I_{\sigma}$, $F_A\circ G_{A}:M_k\rightarrow M_k$ is completely reducible then 
$I_{\sigma}\subset \{\text{PPT matrices}\}$.
$\square$

\section*{Summary}

In this paper we showed that completely reducible maps arise naturally in Quantum Information Theory.

In section 1, we described some properties of the realignment map, some preliminary results and definitions. In section 2, we defined the completely reducible property for a positive map and we showed that this property is equivalent to the decomposition property if the positive map is self-adjoint.

In section 3, we considered a positive semidefinite Hermitian matrix  $A=\sum_{i=1}^nA_i\otimes B_i\in M_k\otimes M_m\simeq M_{km}$  and the positive maps $F_A: M_m\rightarrow M_k$, $F_A(X)=\sum_{i=1}^n tr(B_iX)A_i$ and $G_A: M_k\rightarrow M_m$, $G_A(X)=\sum_{i=1}^n tr(A_iX)B_i$. These maps are adjoints with respect to the trace inner product. In this section, we proved that the self-adjoint map $F_A\circ G_A: M_k\rightarrow M_k$ is completely reducible if and only if $A$ has the property described in lemma \ref{lemmaequivalence}. This property is equivalent to the decomposition property discussed in  the previous section. 
In this section we assumed that $F_A\circ G_A: M_k\rightarrow M_k$ is completely reducible and we proved that $A$ is a sum of weakly irreducible matrices with support on orthogonal local Hilbert spaces. Thus, $A$ is separable if and only if each weakly irreducible summand is separable. We gave a completely description of weakly irreducible matrices in this case. We also showed that if the eigenvalues of $F_A\circ G_A: M_k\rightarrow M_k$ are $1$ or $0$ then $A$ is separable in a very strong sense.

In section 4, we showed that if $A$ positive under partial transposition or symmetric with positive coefficients or invariant under realignment then $A$ has the property described in lemma \ref{lemmaequivalence}, therefore $F_A\circ G_A: M_k\rightarrow M_k$ is completely reducible. Thus, all the theorems proved in section $3$ are valid for these three types of matrices. We also provide examples of positive semidefinite Hermitian matrices $A$ such that $F_A\circ G_A: M_k\rightarrow M_k$ is not completely reducible.

In section $5$, we obtained a new proof of the following fact: If $\mathbb{C}^k$ contains $k$ mutually unbiased basis then $\mathbb{C}^k$ contains $k+1$. Our proof relies on the fact that $A$ is separable, if the eigenvalues of $F_A\circ G_A: M_k\rightarrow M_k$ are $1$ or $0$ and  $A$ is invariant under realignment.

In section 6, we searched for other types of matrices that could have the same properties of PPT, SPC and matrices invariant under realignment without sucess. We considered a group of linear transformations acting on $M_k\otimes M_k$, which contains the partial transpositions and the realignment map. For each element of this group, we considered the set of matrices in $M_k\otimes M_k\simeq M_{k^2}$ that are positive and remain positive, or invariant, under the action of this element. Within this family of sets, we have the set of PPT matrices, the set of  SPC matrices and the set of  matrices invariant under realignment. We showed that these three sets are the only sets of this family such that $F_A\circ G_A: M_k\rightarrow M_k$  is completely reducible for every $A$ in the set. 

 It is interesting to notice that in $M_2\otimes M_2$, the PPT matrices is the only set of matrices, within this family of sets, such that $F_A\circ G_A: M_2\rightarrow M_2$  is completely reducible for every $A$ in the set. In order to obtain this theorem
we showed that every matrix invariant under realignment is PPT in $M_2\otimes M_2$ and we present a counterexample in $M_k\otimes M_k$, $k\geq 3$.

\vspace{12pt}
\textbf{Acknowledgement.}
D. Cariello was supported by CNPq-Brazil Grant 245277/2012-9.

\begin{bibdiv}
\begin{biblist}

\bib{augusiak}{article}{   
   author={Augusiak, Remigiusz},
   author={Demianowicz, Maciej},
   author={Horodecki, P.},
   title={Universal observable detecting all two-qubit entanglement and determinant based separability tests},
   journal={Physical Review A},
   volume={77},
   date={2008},
   pages={030301},
}

\bib{band}{article}{
   author={S. Bandyopadhyay},
   author={P. O. Boykin},
   author={V. Roychowdhury}
   author={F. Vatan},
   title={A new proof for the existence of mutually unbiased bases},
   journal={Algorithmica},
   volume={34},
   date={2002},
   pages={512}
}

\bib{cariello}{article}{
  title={Separability for weakly irreducible matrices},
  author={Cariello, Daniel},
  journal={Quantum Inf.  Comp.},
  volume={14},
  number={15-16},
  pages={1308--1337},
  year={2014}
}

\bib{carielloSPC}{article}{
   author={Cariello, D.},
   title={Does symmetry imply PPT property?},
   pages={arxiv 1405.3634},  
}

\bib{cerf}{article}{
   author={N. J. Cerf}
   author={M. Bourennane}
   author={A. Karlsson}
   author={N. Gisin}
   title={Security of quantum key distribution using d-level systems},
   journal={Phys. Rev. Lett.},
   volume={88},
   date={2002},
   pages={127902}
}

\bib{chen}{article}{
   author={K. Chen}
   author={L.-A. Wu}
   title={A matrix realignment method for recognizing entanglement},
   journal={Quantum Inf. Comput.},
   volume={3},
   date={2003},
   pages={193--202}
}

\bib{evans}{article}{
  title={Spectral properties of positive maps on C*-algebras},
  author={Evans, David E.}
  author={H{\o}egh-Krohn, Raphael},
  journal={Journal of the London Mathematical Society},
  volume={2},
  number={2},
  pages={345--355},
  year={1978},
  publisher={Oxford University Press}
}

\bib{horodecki2}{article}{ 
   author={Horodecki, M.}
   author={Horodecki, P.}
   author={Horodecki, R.}
   title={Separability of mixed quantum states: linear contractions approach}
   journal={Open Syst. Inf. Dyn.},
   volume={13},
   date={2006},
   pages={103}
}

\bib{ivan}{article}{
   author={I. D. Ivanovic},
   title={Geometrical description of quantal state determination},
   journal={Journal of Physics A Mathematical General},
   volume={14},
   date={1981},
   pages={3241-3245}
}

\bib{kraus}{article}{
   author={Kraus, B.},
   author={Cirac, J. I.},
   author={Karnas, S.},
   author={Lewenstein, M.},
   title={Separability in $2\times N$ composite quantum systems},
   journal={Phys. Rev. A },
   volume={61},
   year={2000},
   issue={6},
   pages={062302},
   numpages = {10},
}

\bib{mand}{article}{
   author={P. Mandayam}
   author={S. Bandyopadhyay}
   author={M. Grassl}
   author={William K. Wootters}
   title={Unextendible Mutually Unbiased Bases from Pauli Classes},
   journal={Quantum Inf. Comput.},
   volume={14},
   date={2014},
   pages={823--844}
}

\bib{rudolph}{article}{
   author={Rudolph, O.}
   title={Further results on the cross norm criterion for separability},
   journal={Quantum Inf. Proc.},
   volume={4},
   date={2005},
   pages={219--239}
}

\bib{schaefer}{book}{
  title={Banach lattices and positive operators},
  author={Schaefer, Helmut H},
  volume={215},
  year={1974},
  publisher={Springer}
}

\bib{guhne}{article}{
   author={T\'oth, G.}
   author={G\"uhne, O.},
   title={Separability criteria and entanglement witnesses for symmetric quantum states},
   journal={Applied Physics B},
   volume={98},
   date={2010},
   number={4},
   pages={617-22},
  
}

\bib{weiner}{article}{
   author={Weiner, M.}
   title={A gap for the maximum number of mutually unbiased bases},
   journal={Proceedings of the American Mathematical Society},
   volume={141},
   date={2013},
   number={6},
   pages={1963-1969},
  
}

\bib{horodecki1}{article}{
   author={Wocjan, P.}
   author={Horodecki, M.}
   title={Characterization of combinatorially independent permutation separability criteria},
   journal={Open Syst. Inf. Dyn.},
   volume={12},
   date={2005},
   pages={331}
}

\bib{wootters2}{article}{
   author={W. K. Wootters}
   title={A Wigner-function formulation of finite-state quantum mechanics},
   journal={Annals of Physics},
   volume={176},
   date={1987},
   pages={1}
}

\bib{wootters}{article}{
   author={W. K. Wootters}
   author={B. D. Fields},
   title={Optimal state-determination by mutually unbiased measurement},
   journal={Annals of Physics},
   volume={191},
   date={1989},
   pages={363-381}
}

\end{biblist}
\end{bibdiv}

\end{document}